\title[Generalized cuts and Grothendieck covers]{%
  Generalized Cuts and Grothendieck Covers:\\
  a Primal-Dual Approximation Framework\\
  extending the Goemans--Williamson Algorithm
}
\author[N. Benedetto Proença]{Nathan Benedetto Proença\textsuperscript{1\P}}
\address{%
  \textsuperscript{1}Department of Combinatorics and Optimization, University of Waterloo
}
\thanks{%
  \textsuperscript{\P}%
  Research of this author was supported in part by a
  Discovery Grant from the Natural Sciences and Engineering
  Research Council (NSERC) of Canada
}
\author[M.K. de Carli Silva]{Marcel K. de Carli Silva\textsuperscript{2\textasteriskcentered}}
\address{%
  \textsuperscript{2}Institute of Mathematics and Statistics, University of São Paulo%
}
\thanks{%
  \textsuperscript{\textasteriskcentered}%
  This work was partially supported by Conselho Nacional de
  Desenvolvimento Científico e Tecnológico (CNPq).%
}
\author[C.M. Sato]{Cristiane M. Sato\textsuperscript{3\textasteriskcentered}}
\address{%
  \textsuperscript{3}Center for Mathematics, Computing and Cognition, Federal University of the ABC Region%
}
\author[L. Tunçel]{Levent Tunçel\textsuperscript{1\P}}
\begin{document}

\begin{abstract}
  We provide a primal-dual framework for randomized approximation
  algorithms utilizing semidefinite programming (SDP) relaxations.
  Our framework pairs a continuum of APX-complete problems including
  MaxCut, Max2Sat, MaxDicut, and more generally, Max-Boolean
  Constraint Satisfaction and MaxQ (maximization of a positive
  semidefinite quadratic form over the hypercube) with new
  APX-complete problems which are stated as convex optimization
  problems with exponentially many variables.
  These new dual counterparts, based on what we call
  \emph{Grothendieck covers}, range from fractional cut covering
  problems (for MaxCut) to tensor sign covering problems (for~MaxQ).
  For each of these problem pairs, our framework transforms the
  randomized approximation algorithms with the best known
  approximation factors for the primal problems to randomized
  approximation algorithms for their dual counterparts with reciprocal
  approximation factors which are tight with respect to the Unique
  Games Conjecture.
  For each APX-complete pair, our algorithms solve a single SDP
  relaxation and generate feasible solutions for both problems which
  also provide approximate optimality certificates for each other.
  Our work utilizes techniques from areas of randomized approximation
  algorithms, convex optimization, spectral sparsification, as well as
  Chernoff-type concentration results for random matrices.
\end{abstract}

\maketitle

\newlength{\eqboxwidth}
\settowidth{\eqboxwidth}{\(\coloneqq\)}
\newcommand{\eqaligned}{\mathrel{\makebox[\eqboxwidth][r]{=}}}

\thispagestyle{empty}
\setcounter{page}{0}

\clearpage

\section{Introduction}

Some of the most impressive successes for randomized approximation
algorithms, utilizing semidefinite programming relaxations, have been
on problems such as MaxCut~\cite{GoemansWilliamson1995},
Max2Sat~\cite{LewinLivnatZwick2002}, and
MaxDicut~\cite{BrakensiekHuangPotechinZwick2023}.
We define APX-complete duals for such problems, which involve what we
call \emph{Grothendieck covers}.
Then, we design a primal-dual framework of randomized approximation
algorithms for a wide range of problems, including maximum Boolean
constraint satisfaction problems (CSPs) paired with their APX-complete
duals, which we call \emph{Boolean CSP covering problems}.
Our focus is on 2-CSPs, where each constraint has at most 2 literals;
this includes the MaxCut, the Max2Sat, and the MaxDicut problems.
For each of these APX-complete problems, our framework transforms the
randomized approximation algorithms for the primal problem to
randomized approximation algorithms for their (also APX-complete)
duals while preserving the approximation factor.
In particular, it allows us to recover the same best known
approximation factors for the new problems.
For example, we provide a randomized (1/0.874)-approximation algorithm
for weighted fractional dicut covers.
Although the new problems have exponentially many variables, the
covers produced have small support and their approximation quality
relies on symmetric Grothendieck inequalities;
see~\cite{FriedlandLim2020}.
Our algorithms and analyses utilize Chernoff-type concentration
results and spectral sparsification.

We further describe how each APX-complete instance can be paired with
a dual APX-complete instance by solving a single semidefinite program,
unlike in usual scenarios where the dual is built syntactically from
the primal.
The SDP solutions yield, via a randomized sampling algorithm, primal
and dual feasible solutions along with a simultaneous certificate of
the approximation quality of both solutions.
Note that such a certificate has two primal-dual pairs involved: one
pair intractable, and the other pair tractable.
E.g.,
\begin{enumerate*}[label=(\roman*),]
\item MaxDicut and weighted fractional dicut cover,
\item the SDP relaxation of MaxDicut and its SDP dual.
\end{enumerate*}

Let \(D = (V, A)\) be a digraph.
For each \(U \subseteq V\), define \(\deltaout(U)\) as the set of arcs
leaving \(U\).
A \emph{dicut} is the set \(\deltaout(U)\) for some \(U \subseteq V\).
For arc weights \(w \in \Lp{A}\), the \emph{maximum dicut number of
  \((D, w)\)} is
\begin{equation*}
  \md(D,w) \coloneqq \max\setst{\iprodt{w}{\incidvector{\deltaout(U)}}}{U\in\Powerset{V}},
\end{equation*}
where \(\incidvector{\deltaout(U)} \in \set{0, 1}^A\) is the incidence
vector of \(\deltaout(U)\) and \(\Powerset{V}\) denotes the power set
of \(V\).
The vector of all-ones is \(\ones\).
The dual problem we consider is fractionally covering the arcs by
dicuts: for arc weights \(z \in \Lp{A}\), the \emph{fractional
  dicut-covering number of \((D, z)\)} is
\begin{equation*}
%  \label{eq:fdc-def}
  \fdc(D, z)
  \coloneqq \min\setst[\bigg]{
    \iprodt{\ones}{y}
  }{
    y \in \Lp{\Powerset{V}},\,
    \sum_{U \subseteq V} y_U^{}\incidvector{\deltaout(U)} \ge z
  }.
\end{equation*}
\Cite{BrakensiekHuangPotechinZwick2023} obtained a randomized
\(\BHPZalpha\)-approximation for the maximum dicut problem, where
\(\BHPZalpha \approx 0.87446\).
Our framework yields the following result.

% \begin{theorem}[Fractional Dicut Covering Theorem]
% \label{thm:maxdicut-cover2}
% Fix \(\beta \in (0, \BHPZalpha)\).
% There is a randomized polynomial-time algorithm that, given a digraph
% \(D = (V,A)\) and \(z \in \Lp{A}\), computes \(w \in \Lp{A}\) and
% returns:
% \begin{enumerate*}[(i)]
% \item a subset \(U \subseteq V\),
% \item a vector
% \(y \in \Lp{\Powerset{V}}\) with support size \(\card{\supp(y)} =
% O(\log \card{V})\) such that \( \sum_{S \subseteq V} y_S^{}
% \incidvector{\deltaout(S)} \ge z\) holds with high probability
% (w.h.p.),
% \end{enumerate*}
% \[
%   \iprodt{\ones}{y} \le \tfrac{1}{\beta} \fdc(D,z),
%     \qquad\text{and}\qquad
%     \iprodt{w}{\incidvector{\deltaout(U)}} \ge \beta \md(D,w),
% \]
% and
% \begin{enumerate*}[resume*]
% \item a certificate that each of \(U\) and \(y\) is within a factor
%   of~\(\beta\) of the respective optimal value.
% \end{enumerate*}
% \end{theorem}

\begin{theorem}[Fractional Dicut Covering Theorem]
  \label{thm:maxdicut-cover}
  Fix \(\beta \in (0, \BHPZalpha)\).
  There is a randomized polynomial-time algorithm that, given a
  digraph \(D = (V,A)\) and \(z \in \Lp{A}\), computes
  \(w \in \Lp{A}\) and returns \(U \subseteq V\) and
  \(y \in \Lp{\Powerset{V}}\) with support size
  \(\card{\supp(y)} = O(\log \card{V})\) such that
  \( \sum_{S \subseteq V} y_S \incidvector{\deltaout(S)} \ge z\) holds
  with high probability (w.h.p.),
  \[
    \iprodt{\ones}{y} \le \tfrac{1}{\beta} \fdc(D,z),
    \qquad\text{and}\qquad
    \iprodt{w}{\incidvector{\deltaout(U)}} \ge \beta \md(D,w).
  \]
  Moreover, our algorithm returns a simultaneous certificate that each
  of \(U\) and \(y\) is within a factor of~\(\beta\) of the respective
  optimal value.
\end{theorem}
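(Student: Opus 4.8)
The plan is to package the primal and dual objectives into a single semidefinite program whose primal optimizer drives the rounding for~\(U\) and whose dual optimizer drives the construction of~\(y\). First I would record the LP-duality identity obtained by dualizing the covering program,
\[
  \fdc(D,z) = \max\setst[\big]{\iprodt{z}{w}}{w\in\Lp{A},\ \md(D,w)\le 1}.
\]
Replacing the intractable constraint \(\md(D,w)\le 1\) by \(\mathrm{sdp}(D,w)\le 1\), where \(\mathrm{sdp}(D,w)=\max_{X}\langle M(w),X\rangle\) is the Goemans--Williamson-style relaxation of \(\md(D,w)\) over correlation matrices~\(X\) indexed by \(V\cup\{0\}\), with \(M(w)=\sum_{(i,j)\in A}w_{ij}M_{ij}\) and \(\langle M_{ij},X\rangle\) the fractional-dicut contribution of~\((i,j)\), and applying strong SDP duality to the inner maximization, I obtain an explicit finite-dimensional SDP \(\vartheta(D,z)\) in the variables~\(w\) and a vector of multipliers for the diagonal constraints. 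Since \(\mathrm{sdp}(D,w)\ge\md(D,w)\), we have \(\vartheta(D,z)\le\fdc(D,z)\). I then take the SDP dual of \(\vartheta(D,z)\): after normalization its feasible solutions are pairs \((\lambda,Y)\) with \(Y\) a correlation matrix and \(\lambda\ge 0\) such that \(\lambda\,\langle M_{ij},Y\rangle\ge z_{ij}\) for every arc~\((i,j)\), and its optimal value is again \(\vartheta(D,z)\).

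The algorithm solves \(\vartheta(D,z)\) in polynomial time (up to the usual additive error), normalizes the primal optimizer so that \(\mathrm{sdp}(D,w)=1\), and returns this~\(w\). Feeding the corresponding optimal correlation matrix into the rounding scheme of \cite{BrakensiekHuangPotechinZwick2023}, run \(O(\log\card{V})\) times independently with the best outcome kept, produces \(U\subseteq V\) with \(\iprodt{w}{\incidvector{\deltaout(U)}}\ge\beta\) w.h.p.\ (single trials succeed with constant probability by a reverse-Markov argument, since the cut value lies in \([0,1]\) with mean \(\ge\BHPZalpha\)); as \(\md(D,w)\le\mathrm{sdp}(D,w)=1\), this is precisely \(\iprodt{w}{\incidvector{\deltaout(U)}}\ge\beta\,\md(D,w)\). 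The slack \(\beta<\BHPZalpha\) is what absorbs the gap between the in-expectation and the high-probability guarantees.

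For the cover, the crucial feature of the rounding of \cite{BrakensiekHuangPotechinZwick2023} is that it is competitive configuration by configuration, so applied to the dual optimizer~\(Y\) it yields a distribution \(\mathcal D\) over subsets with \(\Pr_{S\sim\mathcal D}\bigl[(i,j)\in\deltaout(S)\bigr]\ge\BHPZalpha\,\langle M_{ij},Y\rangle\ge(\BHPZalpha/\lambda)\,z_{ij}\) for every arc, where \(\lambda=\vartheta(D,z)\). Scaling the measure \(\mathcal D\) by \(\lambda/\BHPZalpha\) already gives a (dense) fractional dicut cover of~\(z\) of total weight \(\lambda/\BHPZalpha\le\fdc(D,z)/\BHPZalpha\). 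To sparsify I would draw \(m=O(\log\card{V})\) independent samples \(S_1,\dots,S_m\sim\mathcal D\), set \(y_{S_k}=(1+\varepsilon)\lambda/(\BHPZalpha m)\) for a constant \(\varepsilon\in(0,\BHPZalpha/\beta-1)\) so that \(\iprodt{\ones}{y}=(1+\varepsilon)\lambda/\BHPZalpha\le\fdc(D,z)/\beta\), and then use a Chernoff bound together with a union bound over arcs to conclude \(\sum_k y_{S_k}\incidvector{\deltaout(S_k)}\ge z\) w.h.p.; arcs with very small \(z_{ij}\) are removed first and then covered by \(O(\log\card{V})\) random subsets of negligible total weight, using that \(O(\log\card{V})\) random subsets separate every pair of vertices w.h.p.

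Finally, the simultaneous certificate: the single number \(\lambda=\vartheta(D,z)\), together with the normalized SDP primal optimizer (so \(\mathrm{sdp}(D,w)=1\)) and the SDP dual optimizer~\(Y\), witnesses both \(\beta\,\iprodt{\ones}{y}\le\lambda\le\fdc(D,z)\) and \(\iprodt{w}{\incidvector{\deltaout(U)}}\ge\beta\ge\beta\,\md(D,w)\), and the optimal primal--dual pair of the SDP relaxation itself is the second, tractable, certificate pair. The main obstacle is the sparsification step: pushing the support down to \(O(\log\card{V})\) while preserving the covering guarantee needs the per-arc success probabilities in the concentration argument to be bounded away from the covering thresholds \emph{uniformly}, which fails when~\(z\) has a large aspect ratio. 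Handling this cleanly is where the truncation of tiny-weight arcs must be combined with matrix-concentration estimates in the spirit of spectral sparsification --- and, where needed, a dimension-reduction step on the Gram vectors so that the random object being concentrated lives in \(O(\log\card{V})\) dimensions --- all while checking that the rounding guarantee of \cite{BrakensiekHuangPotechinZwick2023} survives these preprocessing steps.
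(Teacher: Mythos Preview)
Your high-level plan matches the paper's framework specialised to dicuts: express \(\fdc\) and \(\md\) as gauge duals, relax to a tractable SDP pair, solve once to obtain both \(w\) and a correlation matrix \(Y\), then round \(Y\) via the BHPZ procedure both to produce \(U\) and, by repeated sampling, a sparse cover. The paper in fact proves this theorem by a one-line reduction to its general Boolean 2-CSP result (take \(\pred=\{x_1\land\overline{x_2}\}\)), so the substance lies in that framework, which you are essentially rederiving. One omission: the SDP that BHPZ round includes the triangle inequalities (the cone \(\Dist_{\Delta}\) in the paper), not just the bare elliptope; you need these for the \(\BHPZalpha\) guarantee to hold, and they must be carried through your duality manipulations.

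The gap you yourself flag is the real one, and your proposed patches do not close it. Truncating small-\(z\) arcs creates a tension: for the separate random-subset cover to have ``negligible'' weight you need the truncation threshold \(\tau\ll\lambda\), but then on the retained arcs the per-arc success probability is only \(\ge\BHPZalpha\,\tau/\lambda\), forcing \(\Omega((\lambda/\tau)\log n)\) samples rather than \(O(\log n)\). Johnson--Lindenstrauss on the Gram vectors is risky, since the BHPZ rounding depends delicately on \(Y\) and is not known to survive such a projection. And matrix concentration is a red herring here: the covering constraint is a finite list of scalar inequalities (one per arc), so scalar Chernoff plus a union bound is the correct tool --- provided the per-arc probabilities are uniformly bounded below.

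The paper's fix is to \emph{perturb the SDP} rather than post-process. One replaces the dual constraint \(\diag(Y)=\mu\ones\) by \(\diag(Y)=\mu\ones,\ Y\succeq_{\Dist_{\Delta}}\varepsilon\mu I\); equivalently, on the primal side one optimises \((1-\varepsilon)\,\mathrm{sdp}(D,w)+\varepsilon\,\langle I,M(w)\rangle\). Because a uniformly random subset contains each fixed arc in its dicut with probability \(1/4\) (i.e.\ \(\langle M_{ij},I\rangle=1/4\)), every arc now satisfies \(\langle M_{ij},Y/\mu\rangle\ge\varepsilon/4\), so after BHPZ rounding the per-arc success probability is at least \(\BHPZalpha\varepsilon/4\), a constant independent of the instance. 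Scalar Chernoff over the \(O(n^2)\) arcs then gives the \(O(\log|V|)\) support bound directly, and the perturbation degrades the relaxation by at most a factor \((1-\varepsilon)\), which is absorbed into the slack \(\beta<\BHPZalpha\).
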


\begin{remark}[Primal-Dual Symmetry]
  Our results also allow one to start from an instance \((D,w)\) of the
  primal problem (i.e., MaxDicut) and the algorithm computes a dual
  instance \((D,z)\) of the fractional dicut-covering problem, along
  with \(\beta\)-approximate solutions for both and a simultaneous
  certificate.
  Analogous claims also apply to
  \cref{thm:max2sat-cover,thm:2-CSP-theorem-intro}.
\end{remark}

% max2sat
Let \((\constr,w)\) be an instance of the \emph{maximum
  2-satisfiability problem}, i.e., \(\constr\) is a set of disjunctive
2-clauses on two variables from \(x_1,\dotsc, x_n\), and \(w \in
\Lp{\constr}\) is a nonnegative weight vector.
Thus, each element of \(\constr\) has the form \(x_i \lor x_j\),
\(x_i \lor \overline{x_j}\), or
\(\overline{x_i} \lor \overline{x_j}\).
Let \(\assign \coloneqq \set{\false, \true}^n\) be the set of all
possible assignments for \((x_1,\dotsc, x_n)\).
For an assignment \(a\in \assign\), define
\(\val_{\constr}(a) \in \set{0,1}^{\constr}\) as the binary vector
indexed by \(\constr\) such that \((\val_{\constr}(a))_C = 1\) if
\(C\) is satisfied by \(a\), and \(0\) otherwise.
The goal is to find an assignment \(a\in \assign\) that maximizes the
inner product \(\iprod{w}{\val_{\constr}(a)}\).
Denote
\[
  \maxsat(\constr,w) \coloneqq
  \max\setst{\iprod{w}{\val_{\constr}(a)}}{a\in\assign}.
\]
The dual problem we consider is fractionally covering the clauses by
assignments: for weights \(z \in \Lp{\constr}\), the \emph{fractional
  2-sat covering number of \((\constr,z)\)} is
\begin{equation*}
  \fConstrC{2sat}(\constr,z)
  \coloneqq
  \min\setst[\bigg]{
    \iprodt{\ones}{y}
  }{
    y \in \Lp{\assign},\,
    \sum_{a \in \assign} y_a \val_{\constr}(a) \ge  z
  }.
\end{equation*}
\Cite{LewinLivnatZwick2002} provide a randomized
\(\LLZalpha\)-approximation algorithm for Max2Sat, where
\(\LLZalpha \approx 0.9401.\) Our framework yields the following
result.

\begin{theorem}[Fractional 2-Sat Covering Theorem]
  \label{thm:max2sat-cover}
  Fix \(\beta \in (0, \LLZalpha)\).
  There is a randomized polynomial-time algorithm that, given a set
  \(\constr\) of disjunctive 2-clauses on \(n\) variables and
  \(z \in \Lp{\constr}\), computes \(w \in \Lp{\constr}\) and returns
  an assignment \(a\in \assign\) and \(y \in \Lp{\assign}\) with
  \(\card{\supp(y)} = O(\log n)\) such that
  \(\sum_{a \in \assign} y_a \val_{\constr}(a) \ge z\) holds w.h.p.,
  \[
    \iprodt{\ones}{y} \le \tfrac{1}{\beta} \fConstrC{2sat}(\constr, z),
    \qquad\text{and}\qquad
    \iprod{w}{\val_{\constr}(a)} \ge \beta \maxsat(\constr,w).
  \]
  Moreover, our algorithm returns a simultaneous certificate that each
  of \(a\) and \(y\) is within a factor of~\(\beta\) of the respective
  optimal value.
\end{theorem}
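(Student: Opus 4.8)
\emph{Proof strategy (for \cref{thm:max2sat-cover}).}
The plan is to route everything through LP duality for the covering
problem together with the semidefinite relaxation of Max2Sat and the
Lewin--Livnat--Zwick (LLZ) rounding~\cite{LewinLivnatZwick2002}. Write
\(\phi(\constr,w)\) for the optimal value of that SDP relaxation on
\((\constr,w)\); recall that \(\maxsat(\constr,w)\le\phi(\constr,w)\),
that \(\phi\) is convex and positively homogeneous of degree~\(1\)
in~\(w\), and that the fractional clause-value vector
\(M(X)\in[0,1]^{\constr}\) of an SDP solution \(X\) is a \emph{linear}
function of \(X\) (each clause contributes an affine expression in the
Gram-matrix entries). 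By LP duality,
\[
  \fConstrC{2sat}(\constr,z)
  =\max\setst{\iprod{z}{w}}{w\in\Lp{\constr},\ \maxsat(\constr,w)\le1},
\]
and replacing \(\maxsat\) by its relaxation \(\phi\) yields the convex program
\[
  \theta\coloneqq\max\setst{\iprod{z}{w}}{w\in\Lp{\constr},\ \phi(\constr,w)\le1},
\]
which satisfies \(\theta\le\fConstrC{2sat}(\constr,z)\) and is solvable to
any prescribed accuracy in polynomial time (ellipsoid with an SDP
separation oracle that returns a subgradient \(M(X)\) of \(\phi\)).
Homogeneity forces \(\phi(\constr,w^\ast)=1\) at an optimal \(w^\ast\)
(assuming \(\theta>0\); the case \(\theta=0\) is trivial), and
\(\tfrac12\|w^\ast\|_1\le\maxsat(\constr,w^\ast)\le\phi(\constr,w^\ast)=1\),
so \(\|w^\ast\|_1\le2\). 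The algorithm outputs \(w\coloneqq w^\ast\) and
the number \(\theta=\iprod{z}{w^\ast}\).

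Next I extract a single SDP solution \(X^\ast\) that is simultaneously
SDP-optimal for \((\constr,w^\ast)\) and covers \(z\) at the relaxed
level, i.e.\ \(M(X^\ast)\ge z/\theta\) coordinatewise. Such an \(X^\ast\)
exists: if no feasible \(X\) had \(M(X)\ge z/\theta\), a separating
hyperplane \(\lambda\ge0\) would give
\(\phi(\constr,\lambda)=\sup_X\iprod{\lambda}{M(X)}<\iprod{\lambda}{z}/\theta\),
and normalizing \(\phi(\constr,\lambda)=1\) makes \(\lambda\) feasible for
the convex program with \(\iprod{z}{\lambda}>\theta\), contradicting
optimality of \(\theta\); linearity of \(M(\cdot)\) is used crucially
here. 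Since \(\iprod{M(X^\ast)}{w^\ast}\ge\iprod{z/\theta}{w^\ast}=1=\phi(\constr,w^\ast)\),
such an \(X^\ast\) is SDP-optimal for \(w^\ast\); it is computed by
solving one SDP-feasibility problem (the relaxation's feasible set
intersected with \(M(X)\ge z/\theta\)) and comes with the dual solution
of the Max2Sat SDP at \((\constr,w^\ast)\) of value \(\approx1\),
certifying \(\maxsat(\constr,w^\ast)\le1\) up to an accuracy parameter
folded into the gap \(\LLZalpha-\beta\).

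Now apply LLZ rounding to \(X^\ast\); its guarantee is per-clause, so the
rounded random assignment \(A\) satisfies
\(\mathbb{E}[\val_{\constr}(A)]\ge\LLZalpha\,M(X^\ast)\) coordinatewise.
For the assignment side,
\(\mathbb{E}\iprod{w}{\val_{\constr}(A)}\ge\LLZalpha\iprod{M(X^\ast)}{w^\ast}=\LLZalpha>\beta\);
since \(\iprod{w}{\val_{\constr}(A)}\in[0,\|w^\ast\|_1]\) with
\(\|w^\ast\|_1\le2\), a Markov-type estimate on the deficit
\(\|w^\ast\|_1-\iprod{w}{\val_{\constr}(A)}\) gives constant per-trial
success probability, so repeating \(O(\log n)\) times and keeping any
\(a\) with \(\iprod{w}{\val_{\constr}(a)}\ge\beta\) succeeds w.h.p.; as
\(\maxsat(\constr,w)\le\phi(\constr,w^\ast)=1\), this \(a\) satisfies
\(\iprod{w}{\val_{\constr}(a)}\ge\beta\ge\beta\,\maxsat(\constr,w)\). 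For
the cover side, the continuous measure
\(y^\sharp_a\coloneqq\LLZalpha^{-1}\theta\,\Pr[A=a]\) satisfies
\(\sum_a y^\sharp_a\val_{\constr}(a)=\LLZalpha^{-1}\theta\,\mathbb{E}[\val_{\constr}(A)]
\ge\LLZalpha^{-1}\theta\cdot\LLZalpha(z/\theta)=z\) and
\(\iprodt{\ones}{y^\sharp}=\theta/\LLZalpha\le\fConstrC{2sat}(\constr,z)/\LLZalpha\).
It then remains to sparsify \(y^\sharp\) to a finitely supported
\(y\in\Lp{\assign}\) with \(\card{\supp(y)}=O(\log n)\) still obeying
\(\sum_a y_a\val_{\constr}(a)\ge z\) w.h.p.\ and
\(\iprodt{\ones}{y}\le\tfrac1\beta\fConstrC{2sat}(\constr,z)\): inflate
\(y^\sharp\) by a factor \(1+\epsilon\) with \((1+\epsilon)/\LLZalpha\le1/\beta\),
draw \(O(\log n)\) assignments from the rounding distribution of
\(X^\ast\), and use a matrix-Chernoff / spectral-sparsification argument
to certify coverage of every clause.

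The hard part is precisely this sparsification. A naive coordinatewise
Chernoff bound degrades on clauses \(C\) whose target \(z_C\) is a tiny
fraction of \(\iprodt{\ones}{y^\sharp}\), since then the probability a
sampled assignment satisfies \(C\) is correspondingly small; the reason
\(O(\log n)\) samples nonetheless suffice is that in the relevant regime
the rounding probability is governed by an arccosine-type curve rather
than the linear SDP value — \(\Pr[A\text{ satisfies }C]\) is of order
\(\sqrt{M(X^\ast)_C}\) when \(M(X^\ast)_C\) is small — and controlling all
clauses simultaneously is where the Chernoff-type concentration for
random matrices and the symmetric Grothendieck inequality enter; making
this tight with exactly the reciprocal factor \(1/\beta\) is the crux.
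Finally, the simultaneous certificate consists of \(w^\ast\), the dual
solution of the single SDP solved at \(w^\ast\) (value \(\approx1\),
witnessing \(\maxsat(\constr,w^\ast)\le1\) and hence
\(\iprod{z}{w^\ast}\le\fConstrC{2sat}(\constr,z)\)), together with the two
computed quantities \(\iprod{w}{\val_{\constr}(a)}\ge\beta\) and
\(\iprodt{\ones}{y}\le\tfrac1\beta\iprod{z}{w^\ast}\); these exhibit both
factor-\(\beta\) bounds at once.
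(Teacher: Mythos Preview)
Your overall architecture---LP duality for the covering problem, extracting a single SDP solution \(X^\ast\) that is simultaneously optimal for \(w^\ast\) and dominates \(z/\theta\), then LLZ rounding---is essentially the paper's gauge-duality framework specialized to Max2Sat, and your certificate packaging is correct. The genuine gap is exactly where you flag it, the sparsification to \(O(\log n)\) support, and your proposed mechanism for closing it is not the right one.

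You correctly identify that clauses \(C\) with tiny \(M(X^\ast)_C\) break a naive coordinatewise Chernoff bound, but your suggested rescue---the arccosine nonlinearity giving probability of order \(\sqrt{M(X^\ast)_C}\), matrix-Chernoff bounds, and the symmetric Grothendieck inequality---does not work and is not what the paper does. A square-root improvement is not enough: if \(M(X^\ast)_C\) is exponentially small, so is its square root, and \(O(\log n)\) samples still miss the clause entirely. The paper in fact cites prior work exhibiting instances (for MaxCut) where the \emph{optimal} SDP solution requires exponentially many samples in expectation to cover, so this obstruction is real and cannot be argued away analytically. Matrix-Chernoff and spectral sparsification are used in the paper only for the case \(\Cov\subseteq\Psd{n}\); the 2-SAT case is purely polyhedral and uses neither.

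The paper's fix is a \emph{perturbation of the SDP}: instead of solving the relaxation exactly, one solves the perturbed problem \(\nu_{\eps}^{\polar}\) (see \cref{eq:FEVC-eps-def} and \cref{eq:nu-projected-psd}), whose feasibility constraint forces \(Y\succeq_{\Dist}\eps\mu I\). Because every disjunctive 2-clause is satisfied by at least a quarter of all assignments, one has \(\Acal^*(I)\ge\tfrac14\ones\), and hence \(\Acal^*(Y/\mu)\ge\eps\Acal^*(I)\ge(\eps/4)\ones\) coordinatewise. The per-clause LLZ guarantee then gives \(\Pr[A\text{ satisfies }C]\ge\LLZalpha\cdot\eps/4\), a \emph{uniform constant} lower bound independent of \(z_C\). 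After this, ordinary scalar Chernoff plus a union bound over the \(O(n^2)\) clauses yields the \(O(\log n)\) sample bound directly (\cref{prop:polyhedral-cone-concentration}). The perturbation costs a factor \((1-\eps)\) in the approximation ratio, which is absorbed into the slack \(\LLZalpha-\beta\).
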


Our results are general enough to include all forms of Boolean 2-CSPs.
A Boolean 2-CSP is a CSP where the variables \(x_1,\dotsc,x_n\) take
on Boolean values (i.e., \(\true\) or \(\false\)) and each constraint
involves only two variables.
Formally, we specify a Boolean constraint satisfaction problem using a
set \(\pred\) of binary predicate templates, i.e., functions from
\(\set{\false, \true}^2\) to \(\set{\false, \true}\).
We~assume throughout that the constant \(\false\) function is not in
\(\pred\).
Let \((\constr,w)\) be an instance of the \emph{(Boolean) maximum
  2-CSP problem}, i.e., each element of \(\constr\) is a function that
sends \(x \in\set{\false, \true}^n\) to \(f(x_i,x_j)\) for some
\(f \in \pred\) and \(i,j \in [n] \coloneqq \set{1,\dotsc,n}\), and
\(w \in \Lp{\constr}\).
We refer to an element of \(\constr\) as a \emph{\(\pred\)-constraint}
or just as a \emph{constraint}.
The \emph{maximum \(\pred\)-satisfiability number} of \((\constr,w)\)
is
\begin{equation*}
%  \label{eq:maxConstrSat-def}
  \maxConstrSat{\pred}(\constr, w)
  \coloneqq \max\setst{\iprod{w}{\val_{\constr}(a)}}{a \in \assign}.
\end{equation*}
The dual problem is: for every \(z \in \Lp{\constr}\), the
\emph{fractional \(\pred\)-constraint covering number} of
\((\constr,z)\) is
\begin{align*}
  \fConstrC{\pred}(\constr, z)
  &\coloneqq
    \min\setst[\bigg]{
    \iprodt{\ones}{y}
  }{
    y \in \Lp{\assign},\,
    \sum_{a \in \assign} y_a \val_{\constr}(a) \ge  z
  }.
\end{align*}
By choosing distinct sets \(\pred\) one can formulate various
interesting problems.
By setting \(\pred \coloneqq \set{\overline{x_1} \lor x_2}\), we
recover the MaxDicut problem via \(\maxConstrSat{\pred}\) and the
fractional dicut-covering problem via \(\fConstrC{\pred}\).
Our Max2Sat results are recovered with
\(\pred \coloneqq \set{x_1 \lor x_2,\, \overline{x_1} \lor x_2,\,
  \overline{x_1} \lor \overline{x_2}}\).
Using these choices, \cref{thm:maxdicut-cover,thm:max2sat-cover} are
special cases a more general result from our framework, which we state
next.
The approximation factor \(\Xalpha\) that appears in the statement
will be defined shortly in \cref{eq:rounding-constant}; a
self-contained version of the result will be stated later as
\cref{thm:2-CSP-theorem}.

\begin{theorem}[Fractional \(\pred\)-Covering Theorem]
  \label{thm:2-CSP-theorem-intro}
  Let \(\pred\) be a set of predicates in two Boolean variables.
  Fix \(\beta \in (0, \Xalpha)\).
  There is a randomized polynomial-time algorithm that, given a set
  \(\constr\) of \(\pred\)-constraints on \(n\) variables and
  \(z \in \Lp{\constr}\), computes \(w \in \Lp{\constr}\) and returns
  an assignment \(a\in \assign\) and \(y \in \Lp{\assign}\) with
  \(\card{\supp(y)} = O(\log n)\) such that
  \(\sum_{a \in \assign} y_a \val_{\constr}(a) \ge z\) holds w.h.p.,
  \[
    \iprodt{\ones}{y} \le \tfrac{1}{\beta} \fConstrC{\pred}(\constr, z),
    \qquad\text{and}\qquad
    \iprod{w}{\val_{\constr}(a)} \ge \beta \maxConstrSat{\pred}(\constr, w).
  \]
  Moreover, our algorithm returns a simultaneous certificate that each
  of \(a\) and \(y\) is within a factor of~\(\beta\) of the respective
  optimal value.
\end{theorem}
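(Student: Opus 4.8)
The plan is to derive Theorem~\ref{thm:2-CSP-theorem-intro} as a consequence of the LP duality between $\maxConstrSat{\pred}$ and $\fConstrC{\pred}$, coupled with the SDP relaxation and rounding used for the primal Max-$\pred$-CSP problem. First I would set up the natural SDP relaxation of $\maxConstrSat{\pred}(\constr,w)$, whose feasible region consists of unit vectors $v_0,v_1,\dots,v_n$ (one anchor $v_0$ for \true, one per variable) with the value of each constraint replaced by its canonical multilinear extension evaluated on the inner products $\iprod{v_i}{v_j}$. By the analyses of \cite{GoemansWilliamson1995,LewinLivnatZwick2002,BrakensiekHuangPotechinZwick2023} (unified via the rounding constant $\Xalpha$ of \cref{eq:rounding-constant}), there is a randomized rounding scheme -- hyperplane rounding, possibly with the outward-rotation/threshold perturbations of Lewin--Livnat--Zwick or Brakensiek--Huang--Potechin--Zwick -- producing an assignment $a$ with $\E[\iprod{w}{\val_\constr(a)}] \ge \Xalpha \cdot \mathrm{SDP}(\constr,w) \ge \Xalpha \cdot \maxConstrSat{\pred}(\constr,w)$, and with a per-constraint guarantee $\E[(\val_\constr(a))_C] \ge \Xalpha \cdot (\text{SDP value of }C)$ for every $C$ individually. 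That last, constraint-wise inequality is the crucial structural fact that makes the dual transfer work.

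Next I would pass to the covering side. The key observation is that the SDP relaxation is \emph{self-dual} in the relevant sense: one solves a \emph{single} SDP whose optimal solution simultaneously (a) bounds $\maxConstrSat{\pred}(\constr,w)$ from above for a weight vector $w$ we get to choose, and (b) via LP duality between the packing LP defining $\maxConstrSat{\pred}$ and the covering LP defining $\fConstrC{\pred}$, produces a dual multiplier vector. Concretely, $\fConstrC{\pred}(\constr,z)$ is a covering LP over the (exponentially many) columns $\{\val_\constr(a) : a \in \assign\}$; its LP dual is a packing LP $\max\{\iprod{z}{w} : w \in \Lp{\constr},\ \iprod{w}{\val_\constr(a)} \le 1\ \forall a\in\assign\}$, and the constraint $\iprod{w}{\val_\constr(a)} \le 1$ for all $a$ is exactly $\maxConstrSat{\pred}(\constr,w) \le 1$. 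So I would solve the SDP that optimizes this packing-type objective with the ``$\maxConstrSat{\pred}(\constr,w)\le 1$'' constraint relaxed to ``$\mathrm{SDP}(\constr,w)\le 1$'': this gives $w$ with $\iprod{z}{w}\ge \fConstrC{\pred}(\constr,z)$ (weak duality in the right direction, using that the SDP over-estimates $\maxConstrSat{\pred}$), while simultaneously the vector solution of this SDP is exactly the configuration to which the primal rounding scheme from the previous paragraph applies with this choice of $w$, yielding $U$ resp.\ $a$ with $\iprod{w}{\val_\constr(a)} \ge \beta\,\maxConstrSat{\pred}(\constr,w)$ for any $\beta < \Xalpha$ (deterministically after a standard median/repetition argument).

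To construct the fractional cover $y$ itself -- rather than merely bounding its cost -- I would run the rounding scheme repeatedly on the \emph{same} SDP solution, obtaining independent assignments $a^{(1)},\dots,a^{(T)}$, and set $y$ to be (a scaled version of) the empirical distribution, i.e.\ $y_{a} = \frac{c}{T}\cdot\#\{t : a^{(t)} = a\}$ for an appropriate scaling constant $c \approx 1/(\beta\,\mathrm{SDP\ optimum})$ times the SDP optimal value of the packing objective. The constraint-wise guarantee $\E[(\val_\constr(a^{(t)}))_C] \ge \Xalpha\cdot(\text{SDP value of }C)$, together with a Chernoff bound over the constraints $C \in \constr$, ensures that with high probability $\sum_t \val_\constr(a^{(t)})/T$ dominates $\Xalpha$ times the SDP vector of per-constraint values, which in turn (by LP duality / complementary slackness with the chosen $w$) dominates $\beta z/\fConstrC{\pred}(\constr,z)$ up to the scaling; this delivers $\sum_a y_a \val_\constr(a) \ge z$ w.h.p.\ together with $\iprodt{\ones}{y} \le \tfrac1\beta \fConstrC{\pred}(\constr,z)$. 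Taking $T = \Theta(\log n)$ suffices for the concentration and automatically gives $\card{\supp(y)} = O(\log n)$. The simultaneous certificate is then the pair consisting of the SDP value (which sandwiches both $\maxConstrSat{\pred}$ and $\fConstrC{\pred}$ between $\beta\cdot(\text{that value})$ and $(1/\beta)\cdot(\text{that value})$) and the explicit $a$, $y$.

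The main obstacle I expect is the last step: controlling the cover $y$ \emph{simultaneously over all constraints} with only $O(\log n)$ samples. A naive union bound over $\card{\constr}$ constraints is fine for the count but the per-constraint SDP values can be tiny, so the relative Chernoff bound degrades; the fix is a spectral-sparsification / matrix-Chernoff argument (as flagged in the abstract) applied to the sum of rank-one-ish contributions $\val_\constr(a^{(t)})\,\val_\constr(a^{(t)})^{\!\top}$ or to a suitable reweighting, which is exactly where the ``symmetric Grothendieck inequality'' and the spectral-sparsification machinery referenced in the introduction enter. A secondary technical point is ensuring the chosen SDP is genuinely a single SDP solved once (not one per constraint) and that the rounding constant $\Xalpha$ from \cref{eq:rounding-constant} is the correct common value across MaxCut, Max2Sat, and MaxDicut -- this is a bookkeeping matter of checking that the LLZ and BHPZ perturbed-rounding analyses all yield the \emph{same} constant in the per-constraint form, which I would handle by citing the self-contained \cref{thm:2-CSP-theorem}.
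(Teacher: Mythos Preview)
Your high-level architecture is right and matches the paper: gauge/LP duality between the packing form $\max\{\iprod{z}{w}:\maxConstrSat{\pred}(\constr,w)\le 1\}$ and the covering LP $\fConstrC{\pred}$, a single SDP relaxation solved once, per-constraint rounding guarantees giving $\Ebb[(\val_\constr(a))_C]\ge \Xalpha\cdot(\text{SDP value of }C)$, and repeated sampling with scalar Chernoff plus a union bound over $\card{\constr}=O(n^2)$ constraints. That is exactly the skeleton of the paper's \cref{thm:polyhedral-algorithm} specialised to 2-CSP.

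The gap is in your last paragraph. You correctly diagnose the obstacle --- the per-constraint SDP values can be arbitrarily small, which kills the relative Chernoff bound --- but your proposed remedy is the wrong one. Matrix Chernoff and spectral sparsification do \emph{not} fix small marginals; matrix Chernoff has the same dependence on $\lambda_{\min}$, and sparsification is a post-hoc support-reduction step that presupposes you already have a valid cover. In the paper these tools appear only in the \emph{other} main theorem (\cref{theorem:sdp-contained-algorithm}, the case $\Cov\subseteq\Psd{n}$), where they yield $O(n)$ support after $O(n^2\log n)$ samples --- not the $O(\log n)$ you need here.

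The actual fix the paper uses for the polyhedral/2-CSP case is a \emph{perturbation of the SDP}: instead of solving $\nu^\polar$ one solves $\nu_\eps^\polar$ (see \cref{eq:FEVC-eps-gaugef}), which adds the constraint $Y\succeq_{\Dist}\eps\mu I$. Combined with the elementary fact that a uniformly random assignment satisfies each 2-ary non-constant predicate with probability $\ge 1/4$ (i.e.\ $\Acal^*(I)\ge\tfrac14\ones$), this forces every per-constraint SDP value to be at least $\eps/4$. Now the naive scalar Chernoff plus union bound goes through with $T=O(\eps^{-1}\log n)$ samples, and the $\eps$-perturbation costs only a $(1-\eps)$ factor in the approximation ratio, absorbed into $\beta<\Xalpha$. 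Without this perturbation step your proof does not close, and with matrix Chernoff in its place you would end up proving the weaker $O(n)$-support statement instead.
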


%Let \(Z \in \Psd{n}\).
%Set
%\begin{equation}
%  \label{eq:fevc-intro-def}
%  \fevc(Z)
%  \coloneqq \min\setst{
%    \iprodt{\ones}{y}
%  }{
%    y \in \Lp{\Powerset{n}},\,
%    \sum_{S \subseteq [n]} y_S\soprod{u_S} \succeq Z
%  }.
%\end{equation}
%\nameandcite{Nesterov1998} shows that, given \(W \in \Psd{n}\) as
%input, one can compute in randomized polynomial-time \%(\soprod{u_S}
%\in \Psd{n}\) such that
%\[
%  \Nalpha \maxq(W) \le \iprod{\soprod{u_S}}{W}
%  \text{ where }
%  \Nalpha = \frac{2}{\pi}
%\]
%and
%\begin{equation}
%  \label{eq:maxq-intro-def}
%  \maxq(W) \coloneqq \max\setst{\iprod{W}{\soprod{u_S}}}{S %\subseteq [n]}.
%\end{equation}

%\begin{theorem}
%\label{thm:nesterov-cover}
%
%Let \(\beta \in (0, \Nalpha)\).
%There exists a randomized polynomial-time algorithm that, given %as
%input \(Z \in \Psd{n}\), computes \(y \in \Lp{\Powerset{n}}\) with
%\(\card{\supp(y)} \in O(n)\) such that
%\[
%  \iprodt{\ones}{y}
%  \le \frac{1}{\beta}\fevc(Z),
%  \text{ and, with high probability, }
%  \sum_{S \subseteq [n]} y_S\soprod{u_S} \succeq Z.
%\]
%\end{theorem}

Our framework builds on works by \nameandcite{GoemansWilliamson1995},
\nameandcite{Grothendieck1953}, and \nameandcite{Nesterov1998},
involving approximation results.
\Cite{GoemansWilliamson1995,Nesterov1998} both tackle the problem of
solving \(\max\setst{\qform{W}{s_U}}{U \in \Powerset{V}}\), where the
\(V\times V\) matrix \(W\) belongs to the positive semidefinite cone
\(\Psd{V}\) and \(s_U \coloneqq 2\ones_U-\ones \in \set{\pm1}^V\) is
the signed incidence vector of \(U \subseteq V\).
We introduce a parameterization for both the domain cone of matrices
\(W\) and the allowed subsets \(U\) of~\(V\).
Throughout \(V\) denotes a finite set and let
\(\Dist, \Cov \subseteq \Sym{V}\) be closed convex cones, where
\(\Sym{V}\) is the space of symmetric \(V\)-by-\(V\) matrices.
Let
\(\Fcal(\Dist) \coloneqq \setst{U\subseteq V}{\oprod{s_U}{s_U}
  \in\Dist}\) encode the feasible/allowed subsets of \(V\).
Our primal problem involves maximization of a quadratic form:
\begin{equation}
  \label{eq:maxq-A-K-def}
  \maxq_{\Dist,\Cov}(W)
  \coloneqq
  \max\setst{
    \qform{W}{s_U}
  }{
    U\in\Fcal(\Dist)
  },
  \mathrlap{
    \qquad
    \text{ for every }
    W \in \Cov.
  }
\end{equation}
Let \(\aff(\Cov)\) denote the smallest affine space containing \(\Cov\).
For each \(Z\) in the dual cone \(\Cov^* \coloneqq \setst{X \in
\aff(\Cov)}{\iprod{X}{Y} \geq 0 \text{ for each }Y \in \Cov}\) (where
we use the trace inner product), a vector \(y\in\Lp{\Fcal(\Dist)}\) is
a \emph{tensor sign cover for \(Z\)} if \(\sum_{U\in\Fcal(\Dist)}
y_U^{} \oprod{s_U}{s_U} \succeq_{\lift{\Cov^*}} Z\), where as usual
the notation \(A \succeq_{\lift{\Cov^*}} B\) means \(A - B \in
\lift{\Cov^*}\).
Here, we are denoting by \(\lift{\Cov^*} \coloneqq \setst{X \in
\Sym{n}}{\iprod{X}{Y} \ge 0 \text{ for each } Y \in \Cov}\) the dual
cone to \(\Cov\) in the potentially larger space of symmetric matrices
\(\Sym{n}\) --- see \cref{sec:conic-maxq-fevc}.
Our dual problem is to find a tensor sign cover \(y\) that minimizes
\(\iprodt{\ones}{y}\):
\begin{align}
  \label{eq:fevc-A-K-gaugef}
  \fevc_{\Dist,\Cov}(Z)
  &\coloneqq
  \min\setst[\bigg]{
      \iprodt{\ones}{y}
  }{
    y \in \Lp{\Fcal(\Dist)},\,
    \sum_{\mathclap{U \in \Fcal(\Dist)}} y_U^{} \oprod{s_U}{s_U}
    \succeq_{\lift{\Cov^*}} Z
  },
  \mathrlap{
    \quad
    \text{ for every }
    Z \in \Cov^*.
  }
\end{align}
The notation `\(\fevc\)' refers to fractional elliptope vertex cover.
Recall that the elliptope is the set
\(\Ecal^V \coloneqq \setst{Y \in \Psd{V}}{\diag(Y) = \ones}\), where
\(\diag \colon \Sym{V} \to \Reals^V\) extracts the diagonal, and its
vertices are \(\setst{\stensor{U}}{U \in \Powerset{V}}\);
see~\cite{LaurentPoljak1995}.
By fixing \(\Dist\) and varying \(\Cov\), it is clear that
\(\maxq_{\Dist,\Cov}\) always attributes the same value for an input
matrix, whereas \(\fevc_{\Dist,\Cov}\) defines a continuum of
relaxations, affecting feasibility via the constraint \(\sum_{U \in
\Fcal(\Dist)} y_U^{} \oprodsym{s_U}\succeq_{\lift{\Cov^*}} Z\) on the
tensor sign covers.
The smaller \(\Cov\) is, the weaker the constraint on the tensor sign
cover becomes.

\Cref{thm:maxdicut-cover,thm:max2sat-cover} describe SDP-based
approximation algorithms for fractional covering problems.
Covering problems, in general, proved to be difficult for tractable
SDP relaxations.
For some negative results on various SDP relaxations of vertex cover
problem, see for instance~\cite{GK1998,GMPT2010,BCGM2011}.
In those settings, the SDP relaxations considered fail to improve on
their much simpler LP-based counterparts, in terms of the
approximation ratio.
Thus, it is noteworthy that in our framework we obtain randomized
approximation algorithms that are tight under the UGC.
Another interesting feature of our results is that our conic covering
problems have an exponential number of variables (and computing their
optimal values is NP-hard) but we still are able to treat these
covering problems algorithmically, in polynomial time, and obtain
approximately optimal sparse covers.

Throughout the paper, we assume that
\(\Dist,\Cov \subseteq \Sym{n} \coloneqq \Sym{[n]}\) are closed
convex cones such that the following conditions hold:
\begin{equation}
  \label{eq:conic-assumptions}
  \begin{gathered}
    \Dist \subseteq \Psd{n},
    \qquad
    \Cov \subseteq \Dist^*,
    \qquad
    \int(\cone(\CUT^{\Dist})) \neq \emptyset,\\
    \set{0} \neq \Cov
    \text{ has a strictly feasible point},
  \end{gathered}
\end{equation}
where \(\CUT^{\Dist} \coloneqq \conv\setst{\stensor{U}}{U \subseteq
[n] ,\, \stensor{U} \subseteq \Dist}\), \(\conv\) is the convex hull,
\(\cone\) denotes the generated convex cone containing~\(0\), and
\(\int\) takes the interior.
We refer the reader to~\cref{sec:conic-maxq-fevc} for the definition
of strictly feasible point.
Set \(\Ecal(\Dist) \coloneqq \Ecal^{[n]} \cap \Dist\).
A \emph{randomized rounding algorithm \(\Round[]\) for \(\Dist\)} is
an indexed set \(\Round[] = (\Round[Y])_{Y \in \Ecal(\Dist)}\) of
matrix-valued random variables sampled from
\(\setst{\stensor{U}}{U \in \Fcal(\Dist)}\).
Define
\begin{equation}
  \label{eq:rounding-constant}
  \conealpha
  \coloneqq
  \inf_{Y \in \Ecal(\Dist)}
  \max\setst{\alpha \in \Lp{}}{
    \Ebb[\Round[Y]] \succeq_{\lift{\Cov^*}} \alpha Y
  },
\end{equation}
which we call the \emph{rounding constant for
  \((\Dist,\Cov,\Round[])\)}.
We shall drop the pair \(\conepair\) whenever they can be inferred by
context; in particular, the rounding constant \(\conealpha\) may
appear as~\(\Xalpha\).
We define a \emph{Grothendieck cover} for \(Z \in \Cov^*\) as a tensor
sign cover \(y\) for~\(Z\) such that
\(\iprodt{\ones}{y} \leq (1/\Xalpha) \fevc(Z)\).
Our~algorithms produce tensor sign covers \(y\) with approximation
factor \(\beta\) arbitrarily close to \(\Xalpha\); we also call such
vectors Grothendieck covers.

We show how to pair instances of the problems \(\maxq\) and \(\fevc\)
so that, given an instance \(W \in \Cov\) of \(\maxq\), we obtain an
instance \(Z \in \Cov^*\) of \(\fevc\) and we approximately solve both
instances simultaneously and provide a certificate for the
approximation factor of both solutions.
We do the same by starting with an instance of \(\fevc\).
Note from \cref{eq:fevc-A-K-gaugef} that feasible solutions can have
exponential support size.
The solutions produced by our algorithm have sparse support, with the
bound on the support size varying according to geometric properties of
the cone~\(\Cov\).
For the case that \(\Cov \subseteq \Psd{n}\), we~rely on spectral
sparsification results for positive semidefinite matrices
from~\cite{BatsonSS12a,deCarliSilvaHarveyEtAl2015}.

Our main results are the outcome of our framework powered by
primal-dual conic relaxations, randomized rounding algorithms together
with generalized Chernoff concentration results, and spectral
sparsifications methods.
We state our main results in
\cref{theorem:sdp-contained-algorithm,rem:polyhedral-algorithm}.
They output objects called \(\beta\)-certificates (see
\cref{def:beta-certificate}), where \(\beta\) is an approximation
factor, which are formed by feasible solutions for both problems,
together with a simultaneous certificate of their approximation
quality.
\begin{theorem}[Main Semidefinite Theorem]
  \label{theorem:sdp-contained-algorithm}
  Assume that \(\Cov \subseteq \Psd{n}\), and let \(\Round[]\) be a
  randomized rounding algorithm for \(\Dist\).
  Fix \(\beta \in (0, \Xalpha)\).
  There exists a randomized polynomial-time algorithm that, given an
  instance \(Z \in \Cov^*\) of \(\fevc\) as input, computes an
  instance \(W \in \Cov\) of \(\maxq\) and a \(\beta\)-certificate for
  \((W, Z)\) with high probability.
  Dually, there exists a randomized polynomial-time algorithm that,
  given an instance \(W \in \Cov\) of \(\maxq\) as input, computes an
  instance \(Z \in \Cov^*\) of \(\fevc\) and a \(\beta\)-certificate
  for \((W, Z)\) with high probability.
  Both algorithms take at most \(O(n^2 \log(n))\) samples from
  \(\Round[]\), and produce covers with \(O(n)\) support.
  If \(\Cov = \Psd{n}\), then \(O(n \log n)\) samples suffice.
\end{theorem}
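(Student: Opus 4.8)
\emph{Plan and relaxations.}
The plan is to have the algorithm solve the natural semidefinite relaxations of \(\maxq\) and \(\fevc\) --- which are conic duals of one another --- and then post-process a single optimal primal--dual pair of the relaxation using \(\Round[]\) (for the primal), matrix-concentration bounds (to control the dual rounding), and spectral sparsification (to make the dual cover sparse).
For \(W\in\Cov\) set \(v_{\max}(W):=\max\setst{\iprod{W}{Y}}{Y\in\Ecal(\Dist)}\); since \(\oprod{s_U}{s_U}\in\Ecal(\Dist)\) for \(U\in\Fcal(\Dist)\), we have \(\maxq(W)\le v_{\max}(W)\).
For \(Z\in\Cov^*\) set \(v_{\mathrm{cov}}(Z):=\min\setst{\tfrac1n\iprod{I}{M}}{M\in\cone(\Ecal(\Dist)),\ M-Z\in\lift{\Cov^*}}\); substituting \(M=\sum_U y_U\oprod{s_U}{s_U}\) (and using \(\iprod{I}{\oprod{s_U}{s_U}}=n\)) gives \(v_{\mathrm{cov}}(Z)\le\fevc(Z)\).
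Both are polynomial-time solvable convex conic programs under our assumptions on \(\Dist,\Cov\), and the strict-feasibility hypotheses of~\cref{eq:conic-assumptions} give strong duality with attainment.
A short computation identifies the conic dual of \(v_{\mathrm{cov}}(Z)\) with \(\max\setst{\iprod{W}{Z}}{W\in\Cov,\ v_{\max}(W)\le1}\), so a single solve of \(v_{\mathrm{cov}}(Z)\) returns \(Y^*\in\Ecal(\Dist)\) with \(t^*Y^*\succeq_{\lift{\Cov^*}}Z\) (where \(t^*:=v_{\mathrm{cov}}(Z)\) and \(Y^*\) is the normalized primal optimum) together with a dual-optimal \(W\in\Cov\) which, by complementary slackness, satisfies \(\iprod{W}{Y^*}=1\), hence \(v_{\max}(W)=1\) with \(Y^*\) attaining it.
By the symmetry of this pairing under conic duality, running the algorithm on an input \(W\) produces an analogous \(Z\); we describe the input-\(Z\) direction.

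\emph{Rounding for \(\maxq\).}
We draw \(O(\log n)\) independent \(U\sim\Round[Y^*]\) and keep the best.
Since \(\Ebb[\oprod{s_U}{s_U}]=\Ebb[\Round[Y^*]]\succeq_{\lift{\Cov^*}}\Xalpha Y^*\) and \(W\in\Cov\), we get \(\Ebb[\qform{W}{s_U}]\ge\Xalpha\iprod{W}{Y^*}=\Xalpha v_{\max}(W)\); as \(0\le\qform{W}{s_U}\le v_{\max}(W)\), Markov's inequality applied to \(v_{\max}(W)-\qform{W}{s_U}\) shows a single draw achieves \(\qform{W}{s_U}\ge\beta v_{\max}(W)\) with constant probability, hence w.h.p.\ over \(O(\log n)\) draws; with \(\maxq(W)\le v_{\max}(W)=1\) this certifies \(\qform{W}{s_U}\ge\beta\maxq(W)\).

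\emph{Rounding for \(\fevc\).}
Fix small constants \(\gamma,c>0\) with \(\beta(1+2c)\le\Xalpha(1-\gamma)\), put \(\delta:=c/(1+c)\) and \(\beta':=\Xalpha(1-\gamma)/(1+c)\), and perturb to \(\hat Y:=(1-\gamma)Y^*+\gamma Y_0\in\Ecal(\Dist)\) for a fixed well-conditioned \(Y_0\succeq c_0 I\) supplied by~\cref{eq:conic-assumptions}.
We sample \(U_1,\dots,U_m\sim\Round[\hat Y]\), let \(\hat Q:=\tfrac1m\sum_i\oprod{s_{U_i}}{s_{U_i}}\), and consider \(\tfrac{t^*}{\beta'}\hat Q=\sum_i y^{(0)}_{U_i}\oprod{s_{U_i}}{s_{U_i}}\) with \(y^{(0)}_{U_i}=t^*/(\beta'm)\).
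Using \(\Ebb[\hat Q]\succeq_{\lift{\Cov^*}}\Xalpha\hat Y\) and \(t^*Y^*\succeq_{\lift{\Cov^*}}Z\in\lift{\Cov^*}\),
\[
  \Ebb\bigl[\tfrac{t^*}{\beta'}\hat Q\bigr]\ \succeq_{\lift{\Cov^*}}\ (1+c)\,t^*Y^*+\tfrac{t^*\Xalpha\gamma}{\beta'}\,Y_0\ \succeq_{\lift{\Cov^*}}\ (1+c)Z+\rho I,\qquad \rho:=\tfrac{t^*\Xalpha\gamma c_0}{\beta'}.
\]
The (generalized) matrix Chernoff/Bernstein bounds, with \(\|\oprod{s_{U_i}}{s_{U_i}}\|=n=\|\Ebb[\Round[\hat Y]]\|\), give \(\|\hat Q-\Ebb[\hat Q]\|\le\Xalpha\gamma c_0\) w.h.p.\ once \(m=O(n^2\log n)\), and the induced error \(\tfrac{t^*}{\beta'}\|\hat Q-\Ebb[\hat Q]\|\le\rho\) is absorbed by the slack \(\rho I\); thus \(\tfrac{t^*}{\beta'}\hat Q\succeq_{\lift{\Cov^*}}(1+c)Z\).
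Applying the spectral sparsification of~\cite{BatsonSS12a,deCarliSilvaHarveyEtAl2015} to the \(m\) rank-one terms, with accuracy \(\delta\), produces \(y\ge0\) supported on \(O(n/\delta^2)=O(n)\) of the sampled sets with \(\sum_U y_U\oprod{s_U}{s_U}\succeq(1-\delta)\tfrac{t^*}{\beta'}\hat Q\) in the positive semidefinite order --- hence in the weaker order \(\succeq_{\lift{\Cov^*}}\), since \(\Psd n\subseteq\lift{\Cov^*}\) --- and \(\iprodt{\ones}{y}\le(1+\delta)\tfrac{t^*}{\beta'}\) (comparing traces, using \(\iprod{I}{\oprod{s_{U_i}}{s_{U_i}}}=n\)).
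Then \(\sum_U y_U\oprod{s_U}{s_U}\succeq_{\lift{\Cov^*}}(1-\delta)(1+c)Z\succeq_{\lift{\Cov^*}}Z\) and \(\iprodt{\ones}{y}\le(1+\delta)(1+c)\tfrac{t^*}{\Xalpha(1-\gamma)}=(1+2c)\tfrac{t^*}{\Xalpha(1-\gamma)}\le\tfrac{t^*}{\beta}\le\tfrac1\beta\fevc(Z)\).
When \(\Cov=\Psd n\) we have \(\lift{\Cov^*}=\Psd n\), so all relations hold in the genuine positive semidefinite order; in particular \(\Ebb[\hat Q]\succeq\Xalpha\gamma c_0 I\), so conjugating by \(\Ebb[\hat Q]^{-1/2}\) we may use the \emph{relative} lower-tail matrix Chernoff bound (least eigenvalue bounded below by a constant, summand operator norm \(O(n)\)), for which \(m=O(n\log n)\) samples suffice, and the sparsification runs on the conjugated sum.

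\emph{Certificate and main obstacle.}
Returning \(U\), \(y\), and the values \(v_{\max}(W)=1\) and \(t^*=v_{\mathrm{cov}}(Z)\) gives a \(\beta\)-certificate for \((W,Z)\) in the sense of~\cref{def:beta-certificate}: \(\maxq(W)\le1\le\tfrac1\beta\qform{W}{s_U}\) and \(\fevc(Z)\ge t^*\ge\beta\iprodt{\ones}{y}\), all read from the one relaxation solve.
The hard part will be the sparse \(\fevc\)-rounding for general \(\Cov\): conjugation does not preserve \(\lift{\Cov^*}\), so one cannot rescale to a well-conditioned instance; instead the cover is forced to over-shoot \(Z\) by a constant multiplicative factor \((1+c)\) that survives a constant-accuracy positive semidefinite sparsification, while the conditioning needed for the concentration step is restored only through the perturbation \(Y^*\mapsto(1-\gamma)Y^*+\gamma Y_0\) licensed by~\cref{eq:conic-assumptions}.
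Keeping \(\gamma,c,\delta\) and the sampling accuracy mutually consistent so that the final factor is an arbitrary \(\beta<\Xalpha\), the support stays \(O(n)\), and (only when \(\Cov=\Psd n\)) the sample count improves to \(O(n\log n)\), is the main technical work.
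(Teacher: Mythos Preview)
Your proposal is correct and follows the same architecture as the paper: solve the primal--dual SDP relaxation for \(\nu\) and \(\nu^\polar\), pair \((W,Z)\) via complementary slackness, perturb for conditioning, sample from \(\Round[]\) with matrix Bernstein, then sparsify via~\cite{BatsonSS12a,deCarliSilvaHarveyEtAl2015}. Two implementation choices differ from the paper. First, you perturb post hoc, replacing the optimum \(Y^*\) by \(\hat Y=(1-\gamma)Y^*+\gamma Y_0\); the paper instead solves a perturbed relaxation \(\nu_\eps\) with the constraint \(Y\succeq_{\Dist}\eps\mu I\) built in (equivalently, blending the objective with \(\eps\iprod{I}{W}\)), which delivers conditioning and the dual pair in one shot and, crucially, accommodates an \emph{approximate} SDP solution via an extra slack parameter \(\sigma\)---your exact-optimality and complementary-slackness derivation of \(\iprod{W}{Y^*}=1\) would need the standard approximation patching to be a genuine polynomial-time argument. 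Second, you round for \(\maxq\) with a separate \(O(\log n)\) pool and Markov on \(v_{\max}(W)-\qform{W}{s_U}\); the paper more economically takes \(U=\arg\max_{U'\in\supp(y)}\qform{W}{s_{U'}}\) over the cover's own sample pool, so that feasibility of \(y\) already forces \(\qform{W}{s_U}\ge\beta\rho\). Two smaller items: the \(\beta\)-certificate of \cref{def:beta-certificate} requires an explicit \(x\in\Reals^n\) with \(\Diag(x)\succeq_{\Dist^*}W\) and \(\iprodt{\ones}{x}\le\rho\), which you have as the dual variable of your \(v_{\max}\) solve but never name; and in the \(\Cov=\Psd{n}\) case you conjugate by \(\Ebb[\hat Q]^{-1/2}\), which is unknown before sampling---conjugate instead by the known \(\hat Y^{-1/2}\) (as the paper does with its normalized perturbed optimum) so that the summand norm bound \(s_U^\top\hat Y^{-1}s_U\le n/(\gamma c_0)\) is computable.
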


\begin{remark}[Main Polyhedral Theorem]
  \label{rem:polyhedral-algorithm}
  We state in \Cref{thm:polyhedral-algorithm} our other main result,
  which is similar to \cref{theorem:sdp-contained-algorithm}, however,
  with a slightly different assumption on the cone~\(\Cov\) and it
  obtains better support size.
  The cone \(\Cov\) is the image \(\Acal(\Lp{d})\) for a linear map
  \(\Acal \colon \Reals^d \to \Sym{n}\), and the support size obtained
  is \(O(\log(n)+\log(d))\).
  \Cref{thm:2-CSP-theorem-intro} shall follow from this result.
\end{remark}

\subsection*{Additional Related Work}

In addition to the above cited references, here we mention some
additional related work.
In the continuum of the APX-complete duals, the one for MaxCut, called
fractional cut-covering problem, was previously studied: first, in the
special case that $z=\ones$, i.e.\ unweighted graphs,
see~\cite{Samal2006,NetoBen-Ameur2019}; then, in general (arbitrary
nonnegative weights $z$),
see~\cite{BenedettoProencadeCarliSilvaEtAl2023}.
We vastly generalize the results
of~\cite{BenedettoProencadeCarliSilvaEtAl2023} while keeping all the
desired properties.
Their results apply to the pair MaxCut and fractional cut covering,
which is a single pair of APX-complete problems in the wide swath of
APX-complete problem pairs covered here.

Part of the unification and generalization of the primal problems we
consider was proposed earlier~\cite{FriedlandLim2020}.
Their generalization is similar to the way we use the convex cone
\(\Cov\) and the generalized Grothendieck constant.
However, our framework is more general than that
of~\cite{FriedlandLim2020} in two ways: (i) we consider, as an
additional generalization, a set of convex cones \(\Dist\) restricting
the feasible region of the primal problem (this additional
generalization helps us achieve the best approximation ratios for the
duals of Max Boolean 2-CSPs); (ii) for every primal APX-complete
problem in our generalized domain we associate a dual conic covering
problem and provide randomized approximation algorithms which provide
approximate solutions to both problems.

Part of our development of the underlying theory leading to the
APX-complete duals is best explained via gauge
duality~\cite{Freund1987} and its interplay with conic duality.
A closely related concept is antiblocking duality
theory~\cite{Fulkerson1971,Fulkerson1972}.
The corresponding conic generalization of antiblocking duality
appeared previously in~\cite{Tind1974}.

\clearpage

\section{Framework for Generalized Cuts and Tensor Sign Covers and Certificates}

This section introduces our framework along with its theoretical
foundations.
Recall the assumptions~\cref{eq:conic-assumptions}.
We define relaxations for \(\maxq_{\Dist,\Cov}\) and \(\fevc_{\Dist,\Cov}\):
for every \(W \in \Cov\), set
\begin{subequations}
\label{eq:nu-def}
\begin{align}
  \label{eq:nu-A-K-supf-def}
  \nu_{\Dist,\Cov}(W)
  &\coloneqq \max\setst{
  \iprod{W}{Y}
  }{
    Y \in \Dist,\,
    \diag(Y) = \ones
  }\\*
  \label{eq:nu-A-K-gaugef-def}
  &\eqaligned \mathrlap{ \min\setst{
  \rho
  }{
    \rho \in \Lp{},\,
    x \in \Reals^n,\,
    \Diag(x) \succeq_{\Dist^*} W,\,
    \rho \geq \iprodt{\ones}{x}
  },}%
  \phantom{%
    \max\setst{
      \iprod{W}{Z}
    }{
      W \in \Cov,\,
      x \in \Reals^n,\,
      W \preceq_{\Dist^*} \Diag(x),\,
      \iprodt{\ones}{x} \le 1
    }.
  }
\end{align}
\end{subequations}
and, for every \(Z \in \Cov^*\),
\begin{subequations}
\label{eq:nu-polar-def}
\begin{align}
  \label{eq:nu-polar-A-K-gaugef}
  \nu^{\polar}_{\conepair}(Z)
  &\coloneqq \min\setst{
    \mu
  }{
    \mu \in \Lp{},\, Y \in \Dist,\,
    \diag(Y) = \mu\ones,\,
    Y \succeq_{\lift{\Cov^*}} Z
  }\\
  &\eqaligned \max\setst{
    \iprod{W}{Z}
  }{
    W \in \Cov,\,
    x \in \Reals^n,\,
    W \preceq_{\Dist^*} \Diag(x),\,
    \iprodt{\ones}{x} \le 1
  }.
\end{align}
\end{subequations}
Our algorithms rely on solving these relaxations and then sampling
using the feasible solutions found.
We show the following relation between \(\maxq\), \(\fevc\), \(\nu\),
and \(\nu^\polar\), and the rounding constant~\(\Xalpha\):

\begin{theorem}
\label{thm:maxq-fevc-nu-nupolar}
Let \(\Round[]\) be a randomized rounding algorithm for \(\Dist\).
We have that
\begin{alignat}{3}
    \label{eq:bound-primal}
    \Xalpha \cdot \nu(W)
    &\le \maxq(W)
    \le \nu(W)
    & \qquad &
    \text{for every } W \in \Cov;
    \\
    \label{eq:bound-dual}
    \nu^{\polar}(Z)
    &\le \,\,\,\fevc(Z)\,\,\,
    \le \tfrac{1}{\Xalpha} \cdot\nu^{\polar}(Z)
    & \qquad &
    \text{for every } Z \in \Cov^*.
  \end{alignat}
\end{theorem}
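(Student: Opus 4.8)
The plan is to prove each of \cref{eq:bound-primal,eq:bound-dual} by combining a ``trivial'' feasibility inclusion --- cut matrices into the relaxation, aggregated covers into the dual relaxation --- with a ``rounding'' step that loses exactly the factor $\Xalpha$; we assume throughout that $\Xalpha > 0$, which is the only case of interest. First I would isolate the single consequence of \cref{eq:rounding-constant} that both rounding steps use, namely the \emph{key inequality}: for every $Y \in \Ecal(\Dist)$,
\begin{equation*}
  \Ebb[\Round[Y]] \succeq_{\lift{\Cov^*}} \Xalpha\, Y .
\end{equation*}
To prove it, observe that the set $S_Y \coloneqq \setst{\alpha \in \Lp{}}{\Ebb[\Round[Y]] - \alpha Y \in \lift{\Cov^*}}$ contains $0$ and, by \cref{eq:rounding-constant}, has supremum at least $\Xalpha$; since $\Dist \subseteq \Psd{n}$ and $\Cov \subseteq \Dist^*$ force $Y \in \Dist \subseteq \lift{\Cov^*}$, the set $S_Y$ is downward closed (subtracting a further nonnegative multiple of $Y \in \lift{\Cov^*}$ keeps one inside the cone), hence $\Xalpha \in S_Y$. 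I would also record once and for all that, since $\Fcal(\Dist)$ is finite, $\Ebb[\Round[Y]]$ is a genuine convex combination $\sum_{U \in \Fcal(\Dist)} p_U\, \stensor{U}$ with $p \in \Lp{\Fcal(\Dist)}$ and $\iprodt{\ones}{p} = 1$.

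For the primal chain \cref{eq:bound-primal}, the bound $\maxq(W) \le \nu(W)$ is immediate: for each $U \in \Fcal(\Dist)$ the matrix $\stensor{U}$ lies in $\Dist$ and has all-ones diagonal (the entries of $s_U$ are $\pm 1$), so it is feasible for \cref{eq:nu-A-K-supf-def} and $\qform{W}{s_U} = \iprod{W}{\stensor{U}} \le \nu(W)$; maximizing over $U$ gives the claim. For the reverse bound I would take $Y^{\star}$ attaining $\nu(W)$ --- the feasible set of \cref{eq:nu-A-K-supf-def} is exactly $\Ecal(\Dist) = \Ecal^{[n]} \cap \Dist$ (using $\Dist \subseteq \Psd{n}$), which is compact and nonempty by \cref{eq:conic-assumptions} --- and sample $\Round[Y^{\star}]$. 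Every realization equals $\stensor{U}$ for some $U \in \Fcal(\Dist)$, so $\iprod{W}{\Round[Y^{\star}]} \le \maxq(W)$ holds pointwise; taking expectations and using $W \in \Cov$ against the key inequality,
\begin{equation*}
  \maxq(W) \ge \iprod{W}{\Ebb[\Round[Y^{\star}]]} \ge \Xalpha\, \iprod{W}{Y^{\star}} = \Xalpha\, \nu(W) .
\end{equation*}

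For the dual chain \cref{eq:bound-dual}, the bound $\nu^{\polar}(Z) \le \fevc(Z)$ holds because, from any $y$ feasible for \cref{eq:fevc-A-K-gaugef}, the matrix $Y \coloneqq \sum_{U \in \Fcal(\Dist)} y_U \stensor{U}$ is a nonnegative combination of matrices in $\Dist$, hence lies in $\Dist$, with $\diag(Y) = (\iprodt{\ones}{y})\ones$ and $Y \succeq_{\lift{\Cov^*}} Z$, so $(\iprodt{\ones}{y}, Y)$ is feasible for \cref{eq:nu-polar-A-K-gaugef}. For the reverse bound, where I may assume $\nu^{\polar}(Z) < \infty$, I would take $(\mu^{\star}, Y^{\star})$ attaining $\nu^{\polar}(Z)$ (again by compactness). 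If $\mu^{\star} = 0$, then $Y^{\star}$ is positive semidefinite with zero diagonal, so $Y^{\star} = 0$, whence $-Z \in \lift{\Cov^*}$ and $y = 0$ already witnesses $\fevc(Z) = 0$. Otherwise set $\bar Y \coloneqq \tfrac{1}{\mu^{\star}} Y^{\star} \in \Ecal(\Dist)$, write $\Ebb[\Round[\bar Y]] = \sum_{U} p_U \stensor{U}$ as above, and put $y_U \coloneqq \tfrac{\mu^{\star}}{\Xalpha}\, p_U$ for $U \in \Fcal(\Dist)$. Then $\iprodt{\ones}{y} = \tfrac{\mu^{\star}}{\Xalpha} = \tfrac{1}{\Xalpha} \nu^{\polar}(Z)$, while
\begin{equation*}
  \sum_{U \in \Fcal(\Dist)} y_U \stensor{U}
  = \tfrac{\mu^{\star}}{\Xalpha}\, \Ebb[\Round[\bar Y]]
  \succeq_{\lift{\Cov^*}} \tfrac{\mu^{\star}}{\Xalpha} \cdot \Xalpha\, \bar Y
  = Y^{\star}
  \succeq_{\lift{\Cov^*}} Z ,
\end{equation*}
the first $\succeq_{\lift{\Cov^*}}$ being the key inequality scaled by the positive constant $\mu^{\star}/\Xalpha$, and the last being transitivity of $\succeq_{\lift{\Cov^*}}$ (the sum of two members of the convex cone $\lift{\Cov^*}$ stays in it). Hence $y$ is feasible for \cref{eq:fevc-A-K-gaugef} and $\fevc(Z) \le \iprodt{\ones}{y} = \tfrac{1}{\Xalpha} \nu^{\polar}(Z)$.

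I expect the main obstacle to be the dual upper bound: one must simultaneously verify that the rescaled law of $\Round[\bar Y]$ is a legitimate tensor sign cover of $Z$ and that its $\ell_1$-cost is exactly $\tfrac{1}{\Xalpha}\nu^{\polar}(Z)$, and one must dispatch the degenerate case $\mu^{\star} = 0$ by hand. The remaining ingredients --- positive scaling and transitivity of $\succeq_{\lift{\Cov^*}}$, the nonnegative pairing of $\Cov$ with $\lift{\Cov^*}$, and attainment of the two relaxations (from compactness of $\Ecal^{[n]}$, closedness of $\Dist$ and $\lift{\Cov^*}$, and \cref{eq:conic-assumptions}) --- are routine.
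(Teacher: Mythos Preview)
Your proof is correct and follows essentially the same approach as the paper: both establish the easy inequalities via feasibility inclusions and the hard inequalities via the rounding property $\Ebb[\Round[Y]] \succeq_{\lift{\Cov^*}} \Xalpha Y$. Your version is more explicit---isolating the key inequality, justifying attainment, and handling the degenerate case $\mu^{\star}=0$ separately---whereas the paper argues more tersely with arbitrary feasible solutions, but the substance is the same.
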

\begin{proof}
Note that \(\qform{W}{s_U} = \iprod{W}{\stensor{U}} \le \nu(W)\) for
every \(U \subseteq [n]\), so the second inequality in \cref{eq:bound-primal}
holds.
Let \(Y\) be a feasible solution of~\cref{eq:nu-A-K-supf-def}.
Since \(\Round[Y]\) has finite support, we have that \(\Ebb[\Round[Y]] = \sum_{U
\subseteq [n]} \prob\paren{\Round[Y] = \stensor{U}}\stensor{U}\), which implies
\begin{equation}
  \label{eq:expected-matrix-feasible}
  \Ebb[\Round[Y]]
  \in \conv(\setst{\stensor{U}}{U \in \Fcal(\Dist)})
  = \CUT^{\Dist}.
\end{equation}
Thus \cref{eq:bound-primal} follows from~\cref{eq:nu-A-K-supf-def}, as
\(\maxq(W) \ge \iprod{W}{\Ebb[\Round[Y]]} \ge \Xalpha \iprod{W}{Y}\).
Similarly, for every \(y \in \Lp{\Fcal(\Dist)}\) feasible in
\cref{eq:fevc-A-K-gaugef}, we have that \(\sum_{S \in \Fcal(\Dist)}
y_U^{}\stensor{U} \in \Dist\) is feasible in
\cref{eq:nu-polar-A-K-gaugef} with the same objective value, so the
first inequality in~\cref{eq:bound-dual} holds.
It is immediate from~\cref{eq:rounding-constant} that for every \(Y
\in \Ecal(\Dist)\), if \(\mu Y \succeq_{\lift{\Cov^*}} Z\), then \(\mu
\Ebb[\Round[Y]] \succeq_{\lift{\Cov^*}} \Xalpha Z\), so \(\fevc(Z) \le
\mu/\Xalpha\).
Hence \cref{eq:bound-dual} follows from \cref{eq:nu-polar-A-K-gaugef}.
\end{proof}

We remark that \cref{eq:bound-primal}  and \cref{eq:bound-dual} are
equivalent by gauge duality; see \cref{sec:conic-maxq-fevc} for more details.

Our discussion so far has focused exclusively on the matrix space.
Indeed, the definition~\cref{eq:rounding-constant} of~\(\Xalpha\), as well as the concentration results we will exploit are
naturally expressed in this context.
Yet, applications may require results on other spaces.
For example, an approximation algorithm for the fractional dicut
covering problem on a digraph \(D = (V, A)\) is about weights
in \(\Lp{A}\).
In our setting, this mapping between vectors and matrices is built into the cone \(\Cov\).
This is natural, as the cone \(\Cov\) is central to the covering constraint of~\cref{eq:fevc-A-K-gaugef}.
Let \(\Acal \colon \Reals^d \to \Sym{n}\) be a linear map.
We assume throughout the paper that \(\Acal(w) \coloneqq \sum_{i \in
[d]} w_i A_i\) for nonzero \(A_1,\dotsc,A_d \in \Sym{n}\).
Set \(\Cov \coloneqq \Acal(\Lp{d}) = \setst[\big]{\sum_{i \in [d]} w_i
  A_i}{ w \in \Lp{d}}\).
We have that, for every \(X, Y \in \Sym{n}\),
\begin{equation}
  \label{polyhedral-conic-ineq}
  X \preceq_{\lift{\Cov^*}} Y
  \text{ if and only if }
  \iprod{A_i}{X} \le \iprod{A_i}{Y}
  \text{ for every }
  i \in [d].
\end{equation}
One can succinctly encode the finite set of linear inequalities above
with the adjoint linear map \(\Acal^* \colon \Sym{n} \to
\Reals^d\), thus obtaining that \(X \preceq_{\lift{\Cov^*}} Y\) holds
if and only if \(\Acal^*(X) \le \Acal^*(Y)\).
This is similar to what is done in the entropy maximization setting; see,
e.g., \cite{SinghVishnoi2014}.
In particular, the linear map \(\Acal^*\) recovers relevant
marginal probabilities when working with random matrices.
With this setup, we move to \(\Lp{d}\) by defining,
for every \(w \in \Lp{d}\) and \(z \in \Lp{d}\),
\begin{alignat*}{3}
  \maxq_{\Dist, \Acal}(w)
  &\coloneqq \maxq_{\conepair}(\Acal(w)),
  &\qquad&&
  \fevc_{\Dist, \Acal}(z)
  &\coloneqq
  \min\setst{
    \fevc_{\conepair}(Z)
  }{
    Z \in \Cov^*,\,
    \Acal^*(Z) \ge z
  },\\
  \nu_{\Dist, \Acal}(w)
  &\coloneqq \nu_{\conepair}(\Acal(w)),
  &\qquad&&\nu_{\Dist, \Acal}^{\polar}(z)
  &\coloneqq
  \min\setst{
    \nu_{\conepair}^{\polar}(Z)
  }{
    Z \in \Cov^*,\,
    \Acal^*(Z) \ge z
  };
\end{alignat*}
see \cref{thm:projected-gauges} in \cref{sec:polyhedral-cones} for
details.
We highlight that
\begin{subequations}
  \label{eq:polyhedral-nu-polar}
\begin{align}
  \nu_{\Dist, \Acal}^{\polar}(z)
  &= \max\setst{
    \iprodt{z}{w}
  }{
    w \in \Lp{d},\,
    x \in \Reals^n,\,
    \Acal(w) \preceq_{\Dist^*} \Diag(x),\,
    \iprodt{\ones}{x} \le 1
  }\\
    \label{eq:polyhedral-nu-polar-min}
  &= \min\setst{
    \mu
  }{
    \mu \in \Lp{},\,
    Y \in \Dist,\,
    \diag(Y) = \mu\ones,\,
    \Acal^*(Y) \ge z
  }.
\end{align}
\end{subequations}
Hence, given \(z \in \Lp{d}\) as input, one can compute \(Z \in
\Cov^*\) such that \(\nu_{\Dist, \Acal}^{\polar}(z) = \nu_{\conepair}^{\polar}(Z)\), as
well as solving~\cref{eq:nu-polar-A-K-gaugef} for said \(Z\), by
solving a single convex optimization problem, namely~\cref{eq:polyhedral-nu-polar-min}.
In this way, we can both ``lift'' the vector \(w \in \Lp{d}\) to the matrix \(\Acal(w) \in
\Cov\), and \(z \in \Lp{d}\) to the matrix \(Z \in \Cov^*\), with no extra
algorithmic cost.

Next we discuss how to simultaneously certify the approximation
quality for instances \(W \in \Cov\) of \(\maxq\) and \(Z \in \Cov^*\)
of \(\fevc\).
A key observation is that
\begin{equation}
  \label{eq:cauchy-1}
  \maxq(W)\cdot \fevc(Z)
  \geq \iprod{W}{Z}
  \qquad
  \mathrlap{
  \text{
    for every \(W \in \Cov\) and \(Z \in \Cov^*\),
  }}
\end{equation}
which holds since
\(
  \iprod{W}{Z}
  \le \sum_{U \in \Fcal(\Dist)} y_U^{} \iprod{W}{\stensor{U}}
  \le \maxq(W) \iprodt{\ones}{y}
\)
for every feasible solution \(y \in \Lp{\Fcal(\Dist)}\) to
\cref{eq:fevc-A-K-gaugef}.
In the context of gauge duality, \cref{eq:cauchy-1} serves as a
\emph{weak duality result}.
Assumptions~\cref{eq:conic-assumptions} can be used to provide a
strong duality result: for every \(W \in \Cov\) there exists \(Z \in
\Cov^*\) such that equality holds in \cref{eq:cauchy-1}; and for every
\(Z \in \Cov^*\) there exists \(W \in \Cov\) such that equality holds
in \cref{eq:cauchy-1}.
Motivated by \cref{eq:cauchy-1}, we define \(\beta\)-pairings
and \(\beta\)-certificates.

\begin{definition}
\label{def:beta-pairing}
Let \(\beta \in \halfclosed{0,1}\).
A \emph{\(\beta\)-pairing on \((\Dist, \Cov)\)} is a pair \((W, Z) \in \Cov
\times \Cov^*\) such that there exist \(\rho, \mu \in \Lp{}\) with
\begin{equation}
  \label{eq:beta-pairing-def}
  \iprod{W}{Z}
  \,\overset{\text{\hyperref[eq:beta-pairing-def]{(\ref{eq:beta-pairing-def}a)}}}{\mathclap{=}}
  \rho\mu
  \qquad
  \text{and}
  \qquad
  \beta\rho\mu
  \overset{\text{\hyperref[eq:beta-pairing-def]{(\ref{eq:beta-pairing-def}b)}}}{\mathclap{\leq}}
  \maxq_{\conepair}(W)\mu
  \overset{\text{\hyperref[eq:beta-pairing-def]{(\ref{eq:beta-pairing-def}c)}}}{\mathclap{\leq}}
  \rho\mu
  \overset{\text{\hyperref[eq:beta-pairing-def]{(\ref{eq:beta-pairing-def}d)}}}{\mathclap{\leq}}
  \rho\fevc_{\conepair}(Z)
  \overset{\text{\hyperref[eq:beta-pairing-def]{(\ref{eq:beta-pairing-def}e)}}}{\mathclap{\leq}}\,
  \frac{1}{\beta}\rho\mu.
\end{equation}
If \(\Cov = \Acal(\Lp{d})\) for a linear map \(\Acal \colon
\Reals^d \to \Sym{n}\), we say that \((w, z) \in \Lp{d} \times
\Lp{d}\) is a \(\beta\)-pairing if \((\Acal(w), Z)\) is a
\(\beta\)-pairing for some \(Z \in \Cov^*\) such that \(\Acal^*(Z) \ge
z\).
We define an \emph{exact pairing on \((\Dist, \Cov)\)} to be a
\(1\)-pairing on \((\Dist, \Cov)\).
\end{definition}

Note that, for nonzero \(\rho\) and \(\mu\), this definition implies
that \(\beta\rho \leq \maxq(W) \leq \rho\) and \(\mu \leq \fevc(Z)
\leq (1/\beta) \mu\).
Thus, the definition of \(\beta\)-pairing establishes the idea of
simultaneous approximations.
We~need objects which algorithmically certify that a pair \((W, Z)\) is a
\(\beta\)-pairing.
For this, we use an analogue of \cref{eq:cauchy-1} for our relaxations:
\begin{equation}
  \label{eq:cauchy-2}
  \nu(W)\cdot \nu^{\polar}(Z)
  \geq \iprod{W}{Z}
  \qquad
  \mathrlap{
  \text{
    for every \(W \in \Cov\) and \(Z \in \Cov^*\).
  }}
\end{equation}
Similar to \cref{eq:cauchy-1}, inequality~\cref{eq:cauchy-2} follows from
\cref{eq:nu-A-K-supf-def} and \cref{eq:nu-polar-A-K-gaugef}.
The advantage of \cref{eq:cauchy-2} over \cref{eq:cauchy-1} is
that the quantities here are computable in polynomial time, and
by \cref{thm:maxq-fevc-nu-nupolar}, they are closely related to
the quantities in \cref{eq:cauchy-1}.

\begin{definition}
\label{def:beta-certificate}
    A \emph{\(\beta\)-certificate for \((W, Z) \in \Cov\times \Cov^*\)} is
a tuple \((\rho, \mu, U, y, x)\) such that
\begin{subequations}
  \renewcommand{\theequation}{\theparentequation.\roman{equation}}
  \begin{flalign}
    \label{item:cert-1}
    &\rho,\mu \in \Reals_+
      \text{ are such that }
      \rho\mu = \iprod{W}{Z},&&
    \\
    \label{item:cert-2}
    &U \in \Fcal(\Dist)
      \text{ is such that }
      \qform{W}{s_U} \ge \beta \rho,&&
    \\
    \label{item:cert-3}
    &y \in \Reals_+^{\Fcal(\Dist)}
      \text{ is such that }
      {\textstyle\sum_{U' \in \Fcal(\Dist)}} y_{U'}^{} \stensor{U'} \succeq_{\lift{\Cov^*}} Z
      \text{ and }
      \iprodt{\ones}{y} \le \tfrac{1}{\beta}\mu,
      \text{ and}&&
    \\
    \label{item:cert-4}
    &x \in \Reals^n
      \text{ is such that }
      \rho \geq \iprodt{\ones}{x}
      \text{ and }\Diag(x) \succeq_{\Dist^*} W.&&
  \end{flalign}
\end{subequations}
If \(\Cov = \Acal(\Lp{d})\) for a linear map \(\Acal \colon
\Reals^d \to \Sym{n}\), we say that \((\rho, \mu, U, y, x)\) is a
\(\beta\)-certificate for \((w, z) \in \Lp{d} \times \Lp{d}\) if
\((\rho, \mu, U, y, x)\) is   a \(\beta\)-certificate for
\((\Acal(w), Z)\) for some \(Z \in \Cov^*\) with \(\Acal^*(Z) \ge
z\).
\end{definition}

\begin{proposition}
    If there exists a \(\beta\)-certificate for \((W, Z)\), then \((W,Z)\) is a \(\beta\)-pairing.
\end{proposition}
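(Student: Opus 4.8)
The plan is to verify that the very same $\rho$ and $\mu$ produced in the $\beta$-certificate witness that $(W,Z)$ is a $\beta$-pairing, i.e.\ that they satisfy all of (a)--(e) in \cref{eq:beta-pairing-def}. Equality (a), namely $\iprod{W}{Z}=\rho\mu$, is exactly \cref{item:cert-1}, so the work is to establish the chain of four inequalities (b)--(e). I would obtain each of them by multiplying a scalar inequality among $\maxq(W)$, $\fevc(Z)$, $\rho$, $\mu$ by either $\mu\ge 0$ or $\rho\ge 0$; keeping everything in this ``multiplied-through'' form makes the degenerate cases $\rho=0$ or $\mu=0$ require no separate treatment.

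First I would record two feasibility facts. By \cref{item:cert-4}, the pair $(\rho,x)$ is feasible for the minimization in \cref{eq:nu-A-K-gaugef-def}, so $\nu(W)\le\rho$, and hence $\maxq(W)\le\nu(W)\le\rho$ by \cref{thm:maxq-fevc-nu-nupolar}. By \cref{item:cert-3}, $y$ is feasible for \cref{eq:fevc-A-K-gaugef}, so $\fevc(Z)\le\iprodt{\ones}{y}\le\tfrac1\beta\mu$; moreover $Y\coloneqq\sum_{U'\in\Fcal(\Dist)}y_{U'}\stensor{U'}$ is feasible for \cref{eq:nu-polar-A-K-gaugef} with value $\iprodt{\ones}{y}$ (using $\diag(\stensor{U'})=\ones$), whence $0\le\nu^{\polar}(Z)\le\iprodt{\ones}{y}<\infty$. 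With these in hand, (b) follows from $\maxq(W)\ge\qform{W}{s_U}\ge\beta\rho$ (by \cref{item:cert-2}, since $U\in\Fcal(\Dist)$) on multiplying by $\mu\ge 0$; (c) follows from $\maxq(W)\le\rho$ on multiplying by $\mu\ge 0$; and (e) follows from $\fevc(Z)\le\tfrac1\beta\mu$ on multiplying by $\rho\ge 0$.

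The one step that genuinely uses duality --- and thus the main (if modest) obstacle --- is (d), the lower bound $\rho\mu\le\rho\fevc(Z)$: since $\fevc$ is a minimization problem, no single component of the certificate bounds it from below, so I would instead invoke the weak-duality inequality \cref{eq:cauchy-2} for the tractable relaxations, writing
\[
  \rho\mu=\iprod{W}{Z}\le\nu(W)\,\nu^{\polar}(Z)\le\rho\,\nu^{\polar}(Z)\le\rho\,\fevc(Z),
\]
where the second inequality uses $\nu(W)\le\rho$ together with $\nu^{\polar}(Z)\ge 0$, and the third uses $\rho\ge 0$ together with $\nu^{\polar}(Z)\le\fevc(Z)$ from \cref{thm:maxq-fevc-nu-nupolar}. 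This establishes (d) and completes the verification that $(W,Z)$ is a $\beta$-pairing.
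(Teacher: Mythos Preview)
Your proof is correct and follows essentially the same approach as the paper's. The only minor difference is in step (d): the paper invokes the weak duality \cref{eq:cauchy-1} for the intractable pair directly, obtaining $\rho\mu=\iprod{W}{Z}\le\maxq(W)\cdot\fevc(Z)\le\rho\,\fevc(Z)$, whereas you route through the tractable relaxations via \cref{eq:cauchy-2} and then use $\nu^{\polar}(Z)\le\fevc(Z)$; both arguments are valid and equally short.
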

\begin{proof}
  \hyperref[eq:beta-pairing-def]{%
    (\ref{eq:beta-pairing-def}a)},
  \hyperref[eq:beta-pairing-def]{%
    (\ref{eq:beta-pairing-def}b)},
  and \hyperref[eq:beta-pairing-def]{%
    (\ref{eq:beta-pairing-def}e)}
  follow immediately from \cref{item:cert-1}, \cref{item:cert-2}, and
  \cref{item:cert-3}, resp.
The connecting part \(\maxq(W)\mu \leq \rho\mu \leq \rho\fevc(Z)\) is
a combination of two notions of duality: conic duality (via \cref{eq:nu-def})
and conic gauge duality (via \cref{eq:cauchy-1}).
Item~\cref{item:cert-4} provides a feasible solution to
\cref{eq:nu-A-K-gaugef-def}, which implies \(\maxq(W) \le \nu(W) \le \rho\).
Hence
  \begin{equation*}
    \maxq(W)\mu
    \leq
    \rho\mu
    \overset{\text{\cref{item:cert-1}}}{\mathclap{=}}
    \iprod{W}{Z}
    \overset{\text{\cref{eq:cauchy-1}}}{\mathclap{\le}}
    \maxq(W)\cdot \fevc(Z)
    \leq
    \rho \fevc(Z).
    \qedhere
  \end{equation*}
\end{proof}

\section{Generalized Rounding Framework and Sparsification}
\label{sec:rounding}

Let \(\Round[]\) be a randomized rounding algorithm for \(\Dist\),
and let \(Y \in \Ecal(\Dist)\).
One can roughly see from \cref{eq:expected-matrix-feasible} that sampling from \(\Round[Y]\)
provides a feasible solution for \(\fevc\), as well as a feasible solution for \(\maxq\) in expectation.
However, such a solution may have
exponential support size.
Moreover, even in well-studied special cases like the Goemans and
Williamson algorithm, it is not known how to compute the marginal
probabilities exactly to obtain an expression for \(\Ebb[\Round[Y]]\).

We show how to obtain a Grothendieck cover by repeated sampling from a
randomized rounding algorithm so that we have polynomial support size
and the approximation ratio can be controlled with high probability.
We first treat the polyhedral case.

\begin{proposition}
\label{prop:polyhedral-cone-concentration}

Let \(\eps, \Chernoff \in (0, 1)\).
Let \(\Cov \coloneqq \Acal(\Lp{d})\) for a linear map
\(\Acal \colon \Reals^d \to \Sym{n}\).
Let \(X \colon \Omega \to \Cov\) be a random matrix such that
\(\Acal^*\paren{\Ebb[X]} \ge \eps \ones\).
Let \((X_t)_{t \in [T]}\) be i.i.d.\ random variables sampled
from~\(X\).
There is \(\psi_{\eps,\Chernoff} \in \Theta(1)\) such that, if \(T\geq
\psi_{\eps,\Chernoff}(\log(d) + \log(n))\), then \(\frac{1}{T}\sum_{t
\in [T]} X_t \succeq_{\lift{\Cov^*}} (1 - \Chernoff) \Ebb[X]\) with
probability at least \(1 - 1/n\).
\end{proposition}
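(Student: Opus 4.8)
The plan is to reduce the claim to a scalar Chernoff bound applied coordinatewise, using the characterization \eqref{polyhedral-conic-ineq} of the order \(\preceq_{\lift{\Cov^*}}\). Fix \(i \in [d]\) and consider the scalar random variables \(\iprod{A_i}{X_t}\), which are i.i.d.\ copies of \(\iprod{A_i}{X}\). The target inequality \(\frac{1}{T}\sum_{t \in [T]} X_t \succeq_{\lift{\Cov^*}} (1-\Chernoff)\Ebb[X]\) is, by \eqref{polyhedral-conic-ineq}, equivalent to the \(d\) scalar events
\[
  \frac{1}{T}\sum_{t \in [T]} \iprod{A_i}{X_t}
  \;\ge\; (1-\Chernoff)\,\iprod{A_i}{\Ebb[X]}
  \qquad\text{for every } i \in [d],
\]
and \(\iprod{A_i}{\Ebb[X]} = (\Acal^*(\Ebb[X]))_i \ge \eps > 0\) by hypothesis, so each right-hand side is positive; in particular it suffices to handle those \(i\) with \(\iprod{A_i}{\Ebb[X]}>0\), and for the remaining indices the event holds trivially since \(X_t \in \Cov \subseteq \Dist^*\) forces \(\iprod{A_i}{X_t}\ge 0\) (here I would note that \(\Cov \subseteq \Dist^*\) and each \(A_i \in \Cov\), so that \(\iprod{A_i}{X_t}\ge 0\); if instead the \(A_i\) only generate \(\Cov\) one uses nonnegativity on the generators of \(\Cov^*\)). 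So the whole statement follows from a union bound over the \(d\) coordinates of a one-sided lower-tail concentration estimate for each scalar average.

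The one-sided estimate is a standard multiplicative Chernoff bound for sums of i.i.d.\ nonnegative bounded random variables: if \(\iprod{A_i}{X}\) takes values in \([0, M_i]\) with mean \(\mu_i \ge \eps\), then \(\Pr\big[\frac{1}{T}\sum_t \iprod{A_i}{X_t} < (1-\Chernoff)\mu_i\big] \le \exp(-c\,\Chernoff^2 \mu_i T / M_i)\) for an absolute constant \(c\). To turn this into a bound depending only on \(\eps\) and \(\Chernoff\), I would normalize: the matrices \(A_1,\dots,A_d\) are fixed nonzero members of \(\Sym{n}\) defining \(\Acal\), and since the whole setup is invariant under rescaling each \(A_i\) (which rescales the corresponding coordinate of \(\Acal^*\) and of \(w\) but not the cone \(\Cov\) or the order), one may assume each \(A_i\) has unit Frobenius norm; then \(\iprod{A_i}{X}\) and \(\iprod{A_i}{\Ebb[X]}\) depend only through the joint distribution, and boundedness of \(X\) into the set \(\Cov\) intersected with the relevant instance region makes \(M_i\) uniformly controlled. (Alternatively, and more cleanly, I would observe that only the ratio \(\mu_i/M_i\) matters and bound it from below using \(\mu_i \ge \eps\) together with \(\mu_i \le M_i\) being automatic, giving an exponent of order \(\Chernoff^2 \eps T\) — wait, this needs \(M_i\) bounded, so the normalization step really is needed.) Either way, choosing \(\psi_{\eps,\Chernoff}\) a suitable constant multiple of \(1/(c\,\Chernoff^2 \eps)\) makes each individual failure probability at most \(1/(nd)\), so the union bound over \(i \in [d]\) gives total failure probability at most \(1/n\), as required; and \(\psi_{\eps,\Chernoff} = \Theta(1)\) for fixed \(\eps,\Chernoff\).

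The main obstacle I anticipate is making the Chernoff exponent genuinely independent of the instance data \(A_1,\dots,A_d\) and of the distribution of \(X\) beyond the stated hypothesis \(\Acal^*(\Ebb[X]) \ge \eps\ones\). The hypothesis controls the means from below but says nothing a priori about an upper bound \(M_i\) on \(\iprod{A_i}{X_t}\); the resolution is that in the intended application \(X_t = \stensor{U_t}\) for \(U_t \in \Fcal(\Dist)\), so \(\iprod{A_i}{X_t} = \qform{A_i}{s_{U_t}}\) lies in \([0, \|A_i\|_1]\) or similar, and after the unit-normalization of the \(A_i\) this is an absolute bound; so I would either (i) state the proposition for such sign-tensor-valued \(X\) explicitly, or (ii) add the mild hypothesis that \(\Acal^*(X) \le \ones\) pointwise (a normalization the caller can enforce), which is exactly the normalization that also appears in \eqref{eq:polyhedral-nu-polar}. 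With that upper bound in hand the rest is a routine union bound over a one-sided Chernoff inequality.
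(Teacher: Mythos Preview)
Your approach is the paper's: reduce via the characterization of \(\preceq_{\lift{\Cov^*}}\) to the \(d\) scalar lower-tail events \(\sum_t \iprod{A_i}{X_t} \ge (1-\Chernoff)\,T\,\Ebb[\iprod{A_i}{X}]\), apply a multiplicative Chernoff bound to each, and union-bound. The paper takes \(\psi_{\eps,\Chernoff} = 2/(\Chernoff^2\eps)\), so that the total failure probability is at most \(d\exp(-\Chernoff^2\eps T/2) \le 1/n\).

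Your concern about the upper bound \(M_i\) is well placed, and the paper's proof carries the same unstated assumption: it invokes Chernoff in the form \(\prob\bigl(s_i \le (1-\Chernoff)\Ebb[s]_i\bigr) \le \exp(-\Chernoff^2\Ebb[s]_i/2)\), which requires each summand \((\Acal^*(X_t))_i\) to lie in \([0,1]\), a hypothesis absent from the statement. In the only downstream use the random matrix is sign-tensor valued, \(X = \Round[Y]\), and in the CSP setting \(\Acal^*(\stensor{U}) \in \{0,1\}^d\), so the assumption holds there. Your fix (ii), adding the pointwise hypothesis \(\Acal^*(X) \le \ones\), is the clean repair and is exactly what the application supplies; your fix (i), unit-normalizing the \(A_i\), does not by itself bound \(M_i\) without a separate compactness assumption on the support of \(X\), so it does not close the gap on its own. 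Your aside about nonnegativity of \(\iprod{A_i}{X_t}\) from \(X_t \in \Cov\) is not quite right either (membership in \(\Cov\) does not force \(\Acal^*(X_t) \ge 0\)); again it is the hypothesis \(\Acal^*(X) \in [0,1]^d\) that does the work.
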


The main argument in the proof of
\Cref{prop:polyhedral-cone-concentration}, which appears
in~\cref{sec:concentration}, relies on Chernoff's bound
for each generating ray of the cone, followed by
union bound on those rays.

\begin{proposition}
\label{prop:polyhedral-conic-sampling}

Let \(\eps, \Chernoff \in (0, 1)\).
Let \(\Cov \coloneqq \Acal(\Lp{d})\) for a linear map
\(\Acal \colon \Reals^d \to \Sym{n}\).
Let \(\Round[]\) be a randomized rounding algorithm for \(\Dist\).
Let \(Y \in \Ecal(\Dist)\) be such that \(\Acal^*(Y) \ge
\eps \ones\).
There exists a randomized polynomial-time algorithm
producing a Grothendieck cover \(y \in \Lp{\Fcal(\Dist)}\)
for \(Y\) {w.h.p.} such that the algorithm performs
at most \(T \coloneqq O(\log(d) + \log(n))\) samples from
\(\Round[Y]\), the support size \(\card{\supp(y)}\) is at most \(T\)
and  \(\iprodt{\ones}{y} \le ((1 - \Chernoff)\Xalpha)^{-1}\).
\end{proposition}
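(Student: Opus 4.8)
The plan is to combine the rounding constant's defining property with the Chernoff-type concentration of \Cref{prop:polyhedral-cone-concentration}, then argue that the averaged empirical estimator gives a cover of bounded weight and bounded support. First I would fix the constant $\Chernoff \in (0,1)$ from the statement (and an auxiliary $\eps$, already given) and set $X \coloneqq \Round[Y]$, which is a random matrix supported on $\setst{\stensor{U}}{U \in \Fcal(\Dist)} \subseteq \Cov$ — note this lies in $\Cov$ because $\stensor{U} \succeq 0$ and, since $\Cov = \Acal(\Lp d)$ with the $A_i$ the generating rays, one checks $\stensor{U} \in \Cov$ via \cref{polyhedral-conic-ineq}; more carefully, what we actually need is just that each $\stensor{U}$ lies in the ambient $\Sym n$ and that $\Acal^*(\stensor{U}) \ge 0$, so that the empirical average is a legitimate candidate cover. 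By the definition~\cref{eq:rounding-constant} of the rounding constant we have $\Ebb[X] = \Ebb[\Round[Y]] \succeq_{\lift{\Cov^*}} \Xalpha\, Y$, hence $\Acal^*(\Ebb[X]) \ge \Xalpha\,\Acal^*(Y) \ge \Xalpha \eps\,\ones$, so the hypothesis $\Acal^*(\Ebb[X]) \ge \eps'\ones$ of \Cref{prop:polyhedral-cone-concentration} holds with $\eps' \coloneqq \Xalpha\eps \in (0,1)$.

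Next I would invoke \Cref{prop:polyhedral-cone-concentration} with this $X$ and with parameters $\eps'$ and $\Chernoff$: it yields $\psi_{\eps',\Chernoff} \in \Theta(1)$ such that for $T \coloneqq \lceil \psi_{\eps',\Chernoff}(\log d + \log n)\rceil$ i.i.d.\ samples $\Round[Y]^{(1)},\dots,\Round[Y]^{(T)}$, the average $M \coloneqq \tfrac1T\sum_{t\in[T]} \Round[Y]^{(t)}$ satisfies $M \succeq_{\lift{\Cov^*}} (1-\Chernoff)\Ebb[\Round[Y]]$ with probability at least $1 - 1/n$. On that event, chaining the two inequalities gives $M \succeq_{\lift{\Cov^*}} (1-\Chernoff)\Xalpha\, Y$. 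Now define $y \in \Lp{\Fcal(\Dist)}$ by letting $y_U$ be $\tfrac1{(1-\Chernoff)\Xalpha} \cdot \tfrac1T \cdot (\text{number of samples equal to } \stensor{U})$; equivalently, rescale the empirical distribution by $((1-\Chernoff)\Xalpha)^{-1}$. Then $\sum_{U \in \Fcal(\Dist)} y_U \stensor{U} = \tfrac1{(1-\Chernoff)\Xalpha} M \succeq_{\lift{\Cov^*}} Y$, so $y$ is a tensor sign cover for $Y$; its support is contained in the (at most $T$) distinct sampled vertices, giving $\card{\supp(y)} \le T = O(\log d + \log n)$; and $\iprodt{\ones}{y} = \tfrac1{(1-\Chernoff)\Xalpha} \cdot \tfrac1T \cdot T = ((1-\Chernoff)\Xalpha)^{-1}$. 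Since $((1-\Chernoff)\Xalpha)^{-1} \le \tfrac1{\Xalpha}\fevc_{\conepair}(Y) \cdot \tfrac{1}{1-\Chernoff}$ is at worst a $(1-\Chernoff)^{-1}$ slack over the Grothendieck bound — and one is free to fold $\Chernoff$ into the target ratio, or simply observe $\fevc(Y)\ge 1$ is not needed because the proposition only asks for $\iprodt\ones y \le ((1-\Chernoff)\Xalpha)^{-1}$ verbatim — $y$ is a Grothendieck cover in the sense required.

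The algorithm is then: compute such a $Y$ (given), draw $T$ i.i.d.\ samples from $\Round[Y]$, form the rescaled empirical measure $y$, and output it; this runs in polynomial time since $T$ is polylogarithmic and each sample is by assumption polynomial-time, and the success probability $1-1/n$ is the claimed ``w.h.p.'' guarantee. The main obstacle is purely a matter of bookkeeping in matching hypotheses: one must verify that $X = \Round[Y]$ genuinely takes values in the cone $\Cov$ (or at least that $\Acal^*$ applied to each realization is the relevant marginal and is nonnegative, which is what the union-bound-over-rays argument behind \Cref{prop:polyhedral-cone-concentration} actually consumes), and that the multiplicative constant $\eps' = \Xalpha\eps$ stays in $(0,1)$ so that $\psi_{\eps',\Chernoff}$ is a legitimate $\Theta(1)$ constant — both follow from the standing assumptions \cref{eq:conic-assumptions} and the fact that $\Xalpha \le 1$ by \cref{eq:bound-primal}. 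No genuinely hard estimate is needed beyond \Cref{prop:polyhedral-cone-concentration} itself; the content is entirely in the reduction.
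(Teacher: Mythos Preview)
Your proposal is correct and follows essentially the same approach as the paper: use the rounding constant to get \(\Acal^*(\Ebb[\Round[Y]]) \ge \Xalpha\eps\,\ones\), invoke \Cref{prop:polyhedral-cone-concentration} to conclude \(\tfrac{1}{T}\sum_t (\Round[Y])_t \succeq_{\lift{\Cov^*}} (1-\Chernoff)\Xalpha Y\) with high probability, and rescale the empirical distribution to define \(y\). Your self-correction on the hypothesis \(X \in \Cov\) is apt---what the concentration argument actually needs is \(\Acal^*(\stensor{U}) \ge 0\), which follows from \(\stensor{U} \in \Dist\) and \(\Cov \subseteq \Dist^*\).
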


Beyond polyhedral cones, we present a rounding algorithm under
the assumption that \(\Cov \subseteq \Psd{n}\).
In this case, we leverage matrix Chernoff bounds to ensure
correctness of our algorithms with high probability.
We refer the reader to~\cref{sec:concentration} for a
complete proof.
The result is an application of \cite[Corollary~6.2.1]{Tropp2015},
which exploits results arising from a Matrix Chernoff bound with
respect to the positive semidefinite (Löwner) order.
We denote by \(\norm{X} \coloneqq
\max\set{\abs{\lambdamax(X)}, \abs{\lambdamin(X)}}\) the
\emph{spectral norm} on \(\Sym{n}\).

\begin{proposition}
\label{prop:Bernstein-whp-form}
Let \(X\) be a random matrix in \(\Sym{n}\) such that \(\norm{X} \le
\rho\) almost surely, and set \(\sigma^2 \coloneqq \norm{\Ebb[X^2]}\).
Let \((X_t)_{t \in [T]}\) be i.i.d.\ random variables sampled from \(X\).
There is \(\psi_{\Chernoff} = \Theta(1)\) such that, if \(T\geq
\psi_{\Chernoff}\max\set{\sigma^2,\rho}\log(n)\), then \(\Ebb[X] -
\Chernoff I \preceq \frac{1}{T} \sum_{t \in [T]} X_t \preceq \Ebb[X] +
\Chernoff I\) holds w.h.p..
\end{proposition}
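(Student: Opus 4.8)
The plan is to apply the matrix Bernstein inequality, in the form of \cite[Corollary~6.2.1]{Tropp2015}, to the independent, centered, rescaled summands $S_t \coloneqq \tfrac{1}{T}\bigl(X_t - \Ebb[X]\bigr)$. The first observation is that the two-sided Löwner conclusion $\Ebb[X] - \Chernoff I \preceq \tfrac{1}{T}\sum_{t\in[T]} X_t \preceq \Ebb[X] + \Chernoff I$ is nothing but the spectral-norm estimate $\bigl\lVert\sum_{t\in[T]} S_t\bigr\rVert \le \Chernoff$, since for symmetric matrices $\norm{\cdot}$ is exactly the largest absolute eigenvalue; so it suffices to bound the probability that this spectral-norm estimate fails.

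Matrix Bernstein needs two inputs: an almost-sure spectral-norm bound on each $S_t$, and a bound on the variance proxy $v \coloneqq \norm{\sum_{t\in[T]} \Ebb[S_t^2]}$. For the first, Jensen's inequality gives $\norm{\Ebb[X]} \le \Ebb[\norm{X}] \le \rho$, so $\norm{X_t - \Ebb[X]} \le 2\rho$ almost surely and hence $\norm{S_t} \le 2\rho/T$. For the second, independence and identical distribution give $\sum_{t\in[T]} \Ebb[S_t^2] = \tfrac{1}{T}\,\Ebb\bigl[(X - \Ebb[X])^2\bigr] = \tfrac{1}{T}\bigl(\Ebb[X^2] - \Ebb[X]^2\bigr)$; since $\Ebb[X^2] \succeq 0$ and $\Ebb[X]^2 \succeq 0$, we get $0 \preceq \Ebb[X^2] - \Ebb[X]^2 \preceq \Ebb[X^2]$, whence $v \le \sigma^2/T$. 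Substituting $L = 2\rho/T$, $v \le \sigma^2/T$, and deviation $\Chernoff$ into the Bernstein tail bound yields
\[
  \prob\Bigl[\, \bigl\lVert \tfrac{1}{T}\textstyle\sum_{t\in[T]} X_t - \Ebb[X] \bigr\rVert \ge \Chernoff \,\Bigr]
  \;\le\;
  2n\exp\!\Bigl( \frac{-\,T\Chernoff^2/2}{\,\sigma^2 + \tfrac{2}{3}\rho\Chernoff\,} \Bigr).
\]

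To finish, I bound the denominator: because $\Chernoff < 1$ we have $\sigma^2 + \tfrac{2}{3}\rho\Chernoff \le \sigma^2 + \rho \le 2\max\{\sigma^2,\rho\}$. Hence if $T \ge \psi_{\Chernoff}\max\{\sigma^2,\rho\}\log n$ for $\psi_{\Chernoff}$ equal to a sufficiently large constant times $1/\Chernoff^2$, the exponent is at most $-\bigl(\log(2n)+\log n\bigr)$ and the failure probability is at most $1/n$; inflating $\psi_{\Chernoff}$ by a further constant factor makes it at most any prescribed inverse polynomial in $n$, which is the meaning of ``w.h.p.'' here. Since $\psi_{\Chernoff}$ depends only on $\Chernoff$ and not on $n$, we have $\psi_{\Chernoff} = \Theta(1)$.

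The only real content is the bookkeeping that matches the hypotheses of \cite[Corollary~6.2.1]{Tropp2015} --- which are phrased for \emph{centered} summands --- to the uncentered parameters $\rho$ and $\sigma^2 = \norm{\Ebb[X^2]}$ appearing in the statement. The two points that require a moment's care are the factor $2$ incurred when passing from $\norm{X}\le\rho$ to $\norm{X - \Ebb[X]}\le 2\rho$, and the Löwner inequality $\Ebb[(X-\Ebb[X])^2]\preceq\Ebb[X^2]$ used to control the variance proxy. After these, the conclusion is a direct substitution into the matrix Bernstein tail, so I do not anticipate any genuine obstacle.
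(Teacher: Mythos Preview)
Your proof is correct and follows essentially the same approach as the paper: both invoke the matrix Bernstein inequality from \cite[Corollary~6.2.1]{Tropp2015} to obtain the tail bound $2n\exp\bigl(-T\Chernoff^2/2\,\big/\,(\sigma^2+\tfrac{2}{3}\rho\Chernoff)\bigr)$ and then bound the denominator by a constant multiple of $\max\{\sigma^2,\rho\}$ to read off the required~$T$. The only cosmetic differences are that the paper quotes Bernstein already in uncentered form and handles the denominator via a two-case split on whether $\sigma^2 \gtrless \tfrac{2}{3}\rho\Chernoff$, whereas you perform the centering reduction explicitly and use the one-line estimate $\sigma^2+\tfrac{2}{3}\rho\Chernoff \le 2\max\{\sigma^2,\rho\}$.
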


The tensor sign covers we can obtain by directly applying
\cref{prop:Bernstein-whp-form} have polynomial support size.
To guarantee linear support size, we rely on the following
spectral sparsification result:

\begin{proposition}[{\cite[Corollary 10]{deCarliSilvaHarveyEtAl2015}}]
  \label{prop:sparsifying}
  Let \(Z \in \Psd{n}\).
  Let \(A_1, A_2 \dotsc, A_m \in \Psd{n}\) and \(c \in \Lp{m}\).
  Suppose that the semidefinite program
    \(
    \min\set[\big]
            {\iprodt{c}{y}: y \in \Lp{m},\,\sum_{i=1}^m y_iA_i \succeq Z}
  \)
  has a feasible solution \(y^*\).
  Let \(\bss \in (0,1)\).
  There is a deterministic polynomial-time algorithm that, given \(y^*\), and the matrices
  \(A_1,A_2\dotsc, A_m\) and \(Z\) as input, computes a feasible
  solution \(\bar{y}\) with at most \(O(n/\bss^2)\) nonzero entries and
  \(\iprodt{c}{\bar{y}} \leq (1+\bss) \iprodt{c}{y^{*}}\).
\end{proposition}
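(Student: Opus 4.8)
The plan is to reduce the statement to the known spectral sparsification machinery for sums of rank-one matrices, and then bootstrap from rank-one to arbitrary positive semidefinite $A_i$ via a standard spectral decomposition. First I would normalize the problem by the feasible solution $y^*$: replacing each $A_i$ by $y_i^* A_i$ (and discarding indices with $y_i^* = 0$), we may assume $y^* = \ones$, so that $\sum_{i=1}^m A_i \succeq Z$ and the goal becomes finding a reweighting $\bar y$ supported on $O(n/\bss^2)$ indices with $\sum_i \bar y_i A_i \succeq Z$ and $\sum_i c_i' \bar y_i \le (1+\bss)\sum_i c_i'$, where $c_i' \coloneqq c_i y_i^*$. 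Next, writing each $A_i = \sum_{k} v_{i,k} v_{i,k}^\transpose$ by an eigendecomposition, the constraint $\sum_i A_i \succeq Z$ becomes $\sum_{i,k} v_{i,k} v_{i,k}^\transpose \succeq Z$; a sparsifier of this enlarged rank-one system that keeps $O(n/\bss^2)$ of the vectors $v_{i,k}$, with the additional bookkeeping that at most $O(n/\bss^2)$ distinct original indices $i$ are touched, yields the desired $\bar y$ (setting $\bar y_i$ to the sum of the chosen weights on the eigenvectors coming from $A_i$).

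The core step is then to invoke the one-sided (lower) spectral sparsification result in the reference: given vectors $u_1,\dots,u_N$ with $\sum_j u_j u_j^\transpose \succeq Z$ and costs attached, the barrier-function / potential-based argument (as in Batson--Spielman--Srivastava and its weighted generalizations in \cite{deCarliSilvaHarveyEtAl2015}) selects a nonnegative reweighting supported on $O(n/\bss^2)$ vectors that still dominates $Z$ in the Löwner order while inflating the linear cost by at most $1+\bss$. I would apply this verbatim after the rank-one expansion above, then collapse back to the $A_i$'s. The deterministic polynomial running time is inherited from the referenced algorithm, since the eigendecompositions and the collapsing are polynomial-time operations.

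The main obstacle I anticipate is purely bookkeeping rather than conceptual: when several eigenvectors of the same $A_i$ survive the sparsification with different weights, one must argue that summing their weights into a single $\bar y_i$ both preserves the Löwner domination (immediate, since $\bar y_i A_i = \sum_k (\text{weight}_{i,k}) v_{i,k} v_{i,k}^\transpose$ only if the chosen weights on the eigenvectors of $A_i$ are equal) and controls the cost. In fact equality of the per-eigenvector weights is not guaranteed by the sparsifier, so the cleaner route is to take $\bar y_i \coloneqq \max_k (\text{weight}_{i,k})$, which only helps domination and increases the cost by at most a bounded factor absorbed into the $O(n/\bss^2)$ support bound and the $(1+\bss)$ slack after a reparametrization of $\bss$; alternatively one applies a variant of the reference that already handles block (matrix-weighted) sparsification directly, eliminating the issue. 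Either way the content is entirely in the cited corollary, and the remaining work is the reduction described above.
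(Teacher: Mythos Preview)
The paper does not prove this proposition at all: it is quoted verbatim as \cite[Corollary~10]{deCarliSilvaHarveyEtAl2015} and used as a black box. So there is no ``paper's own proof'' to compare against; the content lives entirely in the cited reference, whose argument extends the Batson--Spielman--Srivastava barrier method directly to sums of arbitrary PSD matrices rather than going through a rank-one reduction.

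Your proposed reduction has a real gap in the collapsing step, and you have essentially identified it yourself without resolving it. After the rank-one sparsifier returns weights \(w_{i,k}\) on the eigenvectors \(v_{i,k}\) of \(A_i\), setting \(\bar y_i \coloneqq \max_k w_{i,k}\) does preserve the L\"owner domination, but the cost \(\sum_i c_i' \bar y_i\) can exceed the sparsifier's guaranteed cost \(\sum_{i,k} c_{i,k} w_{i,k}\) by a factor as large as the rank of \(A_i\) (hence up to \(n\)), regardless of how you split \(c_i'\) among the eigenvectors. This blowup is not absorbable into the \((1+\bss)\) slack or the \(O(n/\bss^2)\) support bound. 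Your fallback, ``apply a variant of the reference that already handles block (matrix-weighted) sparsification directly,'' is exactly the statement you are trying to prove, so it is circular. The actual proof in \cite{deCarliSilvaHarveyEtAl2015} avoids this by running the barrier-potential argument with the matrices \(A_i\) themselves as the atoms, never decomposing into rank-one pieces; that is the missing idea.
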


Thus, we obtain the following result,
which is proved in \cref{sec:concentration}.

\begin{proposition}
\label{prop:conic-sampling}

Let \(\Chernoff,\, \eps, \bss \in (0, 1)\).
%Let \(\Dist, \Cov \subseteq \Sym{n}\) be closed convex cones such
%that~\cref{eq:conic-assumptions} holds.
Let \(\Cov \subseteq \Psd{n}\).
Let \(\Round[]\) be a randomized rounding algorithm for~\(\Dist\).
Let \(Y \in \Ecal(\Dist)\) be such that \(Y \succeq \eps
I\).
%Set \(\Xalpha \coloneqq \alpha(\Round[], \Dist, \Cov)\).
There exists a randomized polynomial time algorithm producing a Grothendieck  cover \(y \in
\Lp{\Fcal(\Dist)}\) for \(Y\) {w.h.p.}
such that the algorithm performs at most \(O(n^2 \log(n))\) samples from \(\Round[Y]\), the support size \(\card{\supp(y)} \) is \(O(n/\bss^2)\) and \(\iprodt{\ones}{y} \le (1 + \bss)((1 - \Chernoff)\Xalpha)^{-1}\).
\end{proposition}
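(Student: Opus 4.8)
The plan is to mirror the structure of the polyhedral case (\cref{prop:polyhedral-conic-sampling}), but replace the union-bound-over-rays argument with the matrix Chernoff bound of \cref{prop:Bernstein-whp-form}, and then invoke the spectral sparsifier \cref{prop:sparsifying} to cut the support down to $O(n/\bss^2)$. First I would set up the random matrix $X \coloneqq \Round[Y]$, which by definition takes values in $\setst{\stensor{U}}{U \in \Fcal(\Dist)}$; in particular $\norm{X} \le n$ almost surely (since $\norm{\stensor{s_U}{s_U}} = \norm{s_U}^2 = n$), and $\Ebb[X^2] \preceq n^2 I$ crudely, so $\rho$ and $\sigma^2$ in \cref{prop:Bernstein-whp-form} are both $O(n^2)$. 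Thus choosing $T = \Theta(n^2 \log n)$ samples suffices to get, w.h.p., $\frac{1}{T}\sum_{t\in[T]} X_t \succeq \Ebb[X] - \Chernoff' I$ for a suitably small $\Chernoff'$ to be chosen below. Here I must be slightly careful about the order I use: I want a \emph{lower} Löwner bound on the empirical average, which is exactly the left half of the conclusion of \cref{prop:Bernstein-whp-form}.

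Next I would convert this additive concentration into a multiplicative one using the hypothesis $Y \succeq \eps I$. By \cref{eq:rounding-constant} we have $\Ebb[X] = \Ebb[\Round[Y]] \succeq_{\lift{\Cov^*}} \Xalpha Y$. Combining with the concentration bound, $\frac{1}{T}\sum_t X_t \succeq_{\lift{\Cov^*}} \Xalpha Y - \Chernoff' I$; since $Y \succeq \eps I$, we have $\Chernoff' I \preceq (\Chernoff'/\eps) Y$, hence $\frac{1}{T}\sum_t X_t \succeq_{\lift{\Cov^*}} (\Xalpha - \Chernoff'/\eps) Y = (1 - \Chernoff'/(\eps\Xalpha))\Xalpha Y$. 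Choosing $\Chernoff' \coloneqq \eps\Xalpha\Chernoff$ (which only affects the $\Theta(1)$ constant $\psi$, keeping $T = O(n^2\log n)$) yields $\frac{1}{T}\sum_t X_t \succeq_{\lift{\Cov^*}} (1-\Chernoff)\Xalpha\, Y$. Now define $y^{(0)} \in \Lp{\Fcal(\Dist)}$ by letting $y^{(0)}_U$ be the number of indices $t \in [T]$ with $X_t = \stensor{s_U}{s_U}$, divided by $T(1-\Chernoff)\Xalpha$; then $\sum_U y^{(0)}_U \stensor{s_U}{s_U} \succeq_{\lift{\Cov^*}} Y$, the support size is at most $T = O(n^2\log n)$, and $\iprodt{\ones}{y^{(0)}} = T / (T(1-\Chernoff)\Xalpha) = ((1-\Chernoff)\Xalpha)^{-1}$. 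So $y^{(0)}$ is already a tensor sign cover for $Y$ with the right objective value, but its support is only polynomial, not linear.

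Finally, to reduce the support I would apply \cref{prop:sparsifying} with $Z \coloneqq Y$, with the matrices $A_i$ being the (at most $T$) distinct $\stensor{s_U}{s_U}$ appearing in the support of $y^{(0)}$ — these lie in $\Psd{n}$, as required — with cost vector $c = \ones$, and with $y^* = y^{(0)}$ as the given feasible solution. The proposition returns a $\bar y$ with $O(n/\bss^2)$ nonzeros and $\iprodt{\ones}{\bar y} \le (1+\bss)\iprodt{\ones}{y^{(0)}} = (1+\bss)((1-\Chernoff)\Xalpha)^{-1}$, still satisfying $\sum_U \bar y_U \stensor{s_U}{s_U} \succeq Y \succeq_{\lift{\Cov^*}} Y$ (note $\Cov \subseteq \Psd{n}$ implies $\Psd{n} \subseteq \lift{\Cov^*}$, so the Löwner inequality is at least as strong as the $\lift{\Cov^*}$-inequality). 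Hence $\bar y$ is a Grothendieck cover for $Y$ with the claimed support size and objective bound, and the whole procedure runs in randomized polynomial time with $O(n^2\log n)$ samples. The main obstacle I anticipate is the bookkeeping around the two small parameters: one must thread $\Chernoff$ (and the auxiliary $\Chernoff'$) and $\bss$ through so that the final objective bound reads exactly $(1+\bss)((1-\Chernoff)\Xalpha)^{-1}$ without the sparsification step degrading the Löwner feasibility — which is handled by the observation that \cref{prop:sparsifying} preserves the $\succeq$ constraint exactly, not just approximately.
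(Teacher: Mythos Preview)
There is a genuine gap at the sparsification step. You invoke \cref{prop:sparsifying} with \(Z \coloneqq Y\) and \(y^* \coloneqq y^{(0)}\), but \cref{prop:sparsifying} requires \(y^*\) to be feasible in the L\"owner order, i.e., \(\sum_U y^{(0)}_U \stensor{U} \succeq Y\). What you have actually established is only \(\sum_U y^{(0)}_U \stensor{U} \succeq_{\lift{\Cov^*}} Y\): the rounding-constant inequality \(\Ebb[\Round[Y]] \succeq_{\lift{\Cov^*}} \Xalpha Y\) from \cref{eq:rounding-constant} is a \(\lift{\Cov^*}\)-inequality, not a PSD one, and once it enters your chain you never recover L\"owner domination of \(Y\). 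Your own observation that ``the L\"owner inequality is at least as strong as the \(\lift{\Cov^*}\)-inequality'' is correct but is being applied in the wrong direction --- you have \(\lift{\Cov^*}\) and need PSD, not the other way around.

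The paper avoids this by reordering. It conjugates by \(Y^{-1/2}\), applies \cref{prop:Bernstein-whp-form} to the conjugated samples, and conjugates back to obtain the L\"owner bound \(\tfrac{1}{T}\sum_t (\Round[Y])_t \succeq \Ebb[\Round[Y]] - \Xalpha\Chernoff Y\). Sparsification via \cref{prop:sparsifying} is then applied against \emph{this} right-hand side, so the sparsifier operates entirely in the L\"owner order where it is stated. Only after the support has been cut to \(O(n/\bss^2)\) does the paper invoke \cref{eq:rounding-constant} to pass to \(\succeq_{\lift{\Cov^*}} (1 - \Chernoff)\Xalpha Y\). Your argument can be fixed the same way: carry the PSD bound \(\tfrac{1}{T}\sum_t X_t \succeq \Ebb[\Round[Y]] - \Chernoff' I\) through the sparsification call and defer the \(\lift{\Cov^*}\) step to the end.
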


\section{Simultaneous Approximation Algorithms}

The last ingredient of our algorithms is to ensure the feasible solutions
behave well with respect to our sampling results.
Both \cref{prop:polyhedral-conic-sampling,prop:conic-sampling} require
a numeric bound \(\eps\) on how interior to the cone the feasible
solutions are: either by requiring
\(Y \succeq \eps I\) or \(\Acal^*(Y) \ge \eps \ones\).
These assumptions are necessary: \cite{BenedettoProencadeCarliSilvaEtAl2023}
exhibits instances of the fractional cut-covering problem and optimal
solutions to the SDP relaxation that require, in expectation,
exponentially many samples to ensure feasibility.
For a fixed element of \(\Cov^*\) which is ``central'' enough,
we define perturbed versions of \cref{eq:nu-def,eq:nu-polar-def} whose
feasible regions exclude these ill-behaved matrices.
For concreteness, we assume that
\begin{equation}
  \label{eq:I-in-Dist}
  I
  \in \cone\setst{\stensor{U}}{U \in \Fcal(\Dist)}
  \subseteq \Dist,
\end{equation}
which can be easily verified in the examples we will work with.

We now describe one of the algorithms in \cref{theorem:sdp-contained-algorithm}.
Let \(\Round[]\) be a randomized rounding algorithm for \(\Dist\).
Let \(\beta \in (0, \Xalpha)\).
Assume we are given an instance \(Z \in \Cov^*\) of \(\fevc\) as input.
Then
\begin{enumerate}
  \item nearly solve the perturbed version of \cref{eq:nu-polar-def}
  to compute \((\mu,Y) \in \Reals_+ \times \Dist\) and \((W, x) \in \Cov \times \Reals^n\);
  \item sample \(O(n^2 \log n)\) times from \(\Round[Y]\) to obtain a Grothendieck cover \(y
  \in \Lp{\Fcal(\Dist)}\) for \(Z\);
  \item apply \cref{prop:sparsifying} to reduce the support size of
  \(y\) to \(O(n)\);
  \item choose \(U\) that maximizes \(\qform{W}{s_{U'}}\) among all
  \(U' \in \supp(y)\);
  \item output \(W\) and the \(\beta\)-certificate \((1, \mu, U, y, x)\).
\end{enumerate}
(Steps (1)--(3) involve errors terms that are chosen small
enough to guarantee our desired approximation factor \(\beta\).)
\Cref{prop:conic-sampling} proves the correctness of steps (2) and (3).
This is where we crucially exploit \(\Cov \subseteq \Psd{n}\), so that
concentration and sparsification results developed for positive
semidefinite matrices can be translated to the cone \(\Cov^*\).
That (4) will define a set \(U\) which is part of the \(\beta\)-certificate
follows from \(y\) being a good enough estimate: we have that \(\beta \rho \le
\qform{W}{s_U}\) since
\[
  \rho\mu
  = \iprod{W}{Z}
  \le \sum_{U' \in \Fcal(\Dist)} y_{U'}^{} \iprod{W}{\stensor{U'}}
  \le \iprod{W}{\stensor{U}} \iprodt{\ones}{y}
  \le \left(\iprodt{s_U}{W s_U}\right) \frac{1}{\beta} \mu.
\]
\Cref{sec:algorithmic-certificates} has the precise proofs.

The algorithm sketched above highlights an important part of our
framework.
For a given instance \(Z \in \Cov^*\) of \(\fevc\), we obtain from
the SDP solutions to \cref{eq:nu-polar-def} an instance \(W \in \Cov\)
of \(\maxq\), and we then certify \emph{the pair} \((W, Z)\).
This mapping among instances is something we now make explicit.
Define \(\GWOpt_{\conepair} \coloneqq \setst[\big]{
    (Z, W) \in \Cov \times \Cov^*
  }{
    \iprod{W}{Z} = \nu_{\conepair}^{}(W)\cdot \nu_{\conepair}^{\polar}(Z)
  }\).
One may prove that
\begin{align}
  \label{eq:N-def}
    \GWOpt=
    \setst[\Bigg]{
    (W, Z) \in \Cov \times \Cov^*
  }
  {
    \begin{array}{c}
      \exists (\mu,Y) \text{ feasible in~\cref{eq:nu-polar-A-K-gaugef} for }Z,\\
      \exists (\rho,x)\text{ feasible in~\cref{eq:nu-A-K-gaugef-def} for }W,\\
      \text{and } \iprod{W}{Z} = \rho \mu
     \end{array}
  }.
\end{align}
We invite the reader to compare the RHS of \cref{eq:N-def} with
the feasible regions of \cref{eq:nu-def} and \cref{eq:nu-polar-def}.
One may see solving either SDP as fixing one side of the pair of
instances and obtaining the other; i.e., as computing an element
of \(\optZ(W) \coloneqq \setst{Z \in \Cov^*}{(W, Z) \in
\GWOpt}\) when given \(W \in \Cov\) as input, or computing
an element of \(\optW(Z) \coloneqq \setst{W \in \Cov}{(W, Z) \in
\GWOpt}\) when given \(Z \in \Cov^*\) as input.
In~both cases, by solving a single (primal-dual pair of) SDP we obtain an element of
\(\GWOpt\) and the objects \((\rho, x)\) and \((\mu, Y)\)
which witness the membership.

We now address \cref{rem:polyhedral-algorithm},
in which the cone \(\Cov\) is polyhedral and not necessarily contained in~\(\Psd{n}\).
Here, we do not require the use of sparsification,
as the cover produced is already (very) sparse.

\begin{theorem}[Main Polyhedral Theorem]
\label{thm:polyhedral-algorithm}

Let \(\Cov \coloneqq \Acal(\Lp{d})\) for a linear map
\(\Acal \colon \Reals^d \to \Sym{n}\).
Assume~\cref{eq:I-in-Dist} and that \(\Acal^*(I) \ge \kappa \ones\)
for some positive \(\kappa \in \Reals\).
Let \(\Round[]\) be a randomized rounding algorithm for~\(\Dist\).
Fix \(\beta \in (0, \Xalpha)\).
There exists a randomized polynomial-time algorithm that, given
an instance \(z \in \Lp{d}\) of \(\fevc\) as input, computes an
instance \(w \in \Lp{d}\) of \(\maxq\) and a \(\beta\)-certificate
for \((w, z)\).
Dually, there exists a randomized polynomial-time algorithm that,
given an instance \(w \in \Lp{d}\) of \(\maxq\) as input, computes
an instance \(z \in \Lp{d}\) of \(\fevc\) and a \(\beta\)-certificate
for \((w, z)\).
Both algorithms output covers whose support size
is bounded by \(C \cdot (\log(d) + \log(n))\), where \(C \coloneqq
C(\kappa, \Xalpha, \beta)\) is independent of \(d\) and \(n\).
\end{theorem}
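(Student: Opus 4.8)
The plan is to follow the same five-step template sketched above for the semidefinite case, but replacing the matrix concentration and sparsification (\cref{prop:conic-sampling}) with the purely polyhedral sampling result \cref{prop:polyhedral-conic-sampling}, and taking advantage of the characterizations in \cref{eq:polyhedral-nu-polar} to work directly in $\Lp{d}$. I treat the $\fevc$-to-$\maxq$ direction first; the dual direction is symmetric once the weak/strong gauge duality machinery around \cref{eq:cauchy-1} is invoked.

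First I would set up the perturbation. Given $z \in \Lp{d}$, the issue is that an optimal $Y$ feasible in \cref{eq:polyhedral-nu-polar-min} may satisfy $\Acal^*(Y) \ge z$ with some coordinates of $\Acal^*(Y)$ near zero, so that \cref{prop:polyhedral-conic-sampling} cannot be applied with a usable $\eps$. Using assumption~\cref{eq:I-in-Dist} (so $\tfrac1n I \in \Ecal(\Dist)$) and $\Acal^*(I) \ge \kappa\ones$, I would replace $z$ by a slightly inflated target, or equivalently add a small multiple of $I$ to the candidate $Y$, to guarantee $\Acal^*(\bar Y) \ge \eps\ones$ for an $\eps = \eps(\kappa,\Xalpha,\beta)$ chosen so that the resulting multiplicative loss, together with the $(1-\Chernoff)$ loss from sampling, stays below the target factor $\beta < \Xalpha$. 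Concretely: pick $\Chernoff$ and the perturbation parameter small enough that $(1-\Chernoff)\Xalpha$ and the perturbed $\nu^\polar$ value are both within the slack $\Xalpha - \beta$; this is the same bookkeeping already used in the semidefinite algorithm, now with $\Acal^*$ in place of the Löwner order.

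Second, I would solve the (primal--dual pair of) convex program \cref{eq:polyhedral-nu-polar} for the perturbed instance, extracting $(\mu, Y)$ feasible in \cref{eq:polyhedral-nu-polar-min} and $(w, x)$ feasible in \cref{eq:polyhedral-nu-polar}; by strong duality for conic/gauge programs (the remark after \cref{thm:maxq-fevc-nu-nupolar} and the strong duality statement following \cref{eq:cauchy-1}, which uses assumptions~\cref{eq:conic-assumptions}) these witness an element $(W,Z) = (\Acal(w), Z)$ of $\GWOpt$ with $\iprod{W}{Z} = \rho\mu$ where $\rho := \iprodt{\ones}{x}$, $Z \in \Cov^*$, $\Acal^*(Z) \ge z$. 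Then apply \cref{prop:polyhedral-conic-sampling} with this $Y$ (which now has $\Acal^*(Y)\ge\eps\ones$) to obtain, with $T = O(\log d + \log n)$ samples from $\Round[Y]$, a Grothendieck cover $y$ for $Y$ --- hence for $Z$, via $\Acal^*(Y)\ge\Acal^*$-pullback of the covering constraint and $\Acal^*(Z)\ge z$ --- with $\card{\supp(y)} \le T$ and $\iprodt{\ones}{y} \le ((1-\Chernoff)\Xalpha)^{-1}\mu \le \tfrac1\beta\mu$. Next, as in step~(4) of the sketch, choose $U \in \supp(y)$ maximizing $\qform{W}{s_{U'}}$; the displayed chain $\rho\mu = \iprod{W}{Z} \le \sum_{U'} y_{U'}\iprod{W}{\stensor{U'}} \le \iprod{W}{\stensor{U}}\iprodt{\ones}{y} \le \qform{W}{s_U}\tfrac1\beta\mu$ gives $\qform{W}{s_U} \ge \beta\rho$. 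Finally, output $w$ and the tuple $(\rho,\mu,U,y,x)$, which satisfies all four conditions of \cref{def:beta-certificate}, so by the Proposition before \cref{sec:rounding} it certifies that $(w,z)$ is a $\beta$-pairing. The support bound $C(\kappa,\Xalpha,\beta)\cdot(\log d + \log n)$ comes from $T = \psi_{\eps,\Chernoff}(\log d + \log n)$ in \cref{prop:polyhedral-cone-concentration} with $\eps,\Chernoff$ fixed as functions of $\kappa,\Xalpha,\beta$ only; no sparsification step is needed since $T$ is already of the claimed order.

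The main obstacle I expect is the perturbation accounting in the first step: one must simultaneously (a) inflate $Y$ (or shrink the covering target) enough that $\Acal^*(\bar Y)\ge\eps\ones$ for a uniform $\eps$ depending only on $\kappa,\Xalpha,\beta$, using $\Acal^*(I)\ge\kappa\ones$ and $I \in \cone\setst{\stensor U}{U\in\Fcal(\Dist)}$; (b) keep $\bar Y \in \Ecal(\Dist)$ up to rescaling its diagonal, which is where \cref{eq:I-in-Dist} is essential, since $\stensor U \in \Dist$ for $U \in \Fcal(\Dist)$ and convexity give $\bar Y/\mu \in \Ecal(\Dist)$; and (c) verify that the perturbed optimal value $\nu^\polar$ is still at most $(1+o(1))$ times the true $\fevc_{\Dist,\Acal}(z)$, so that the final bound reads $\iprodt{\ones}{y} \le \tfrac1\beta\fevc_{\Dist,\Acal}(z)$ rather than merely $\tfrac1\beta\nu^\polar$. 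Each of these is a routine continuity-of-the-optimal-value argument, but assembling them so the three error terms ($\eps$-perturbation, $\Chernoff$, and the gap $\Xalpha-\beta$) interlock correctly is the delicate part; everything else is a direct substitution of \cref{prop:polyhedral-conic-sampling} into the algorithm already described before \cref{sec:algorithmic-certificates}. The dual direction ($w \mapsto z$) is handled identically after using the second strong-duality statement to produce $Z \in \Cov^*$ attaining equality in \cref{eq:cauchy-1} against the given $W$.
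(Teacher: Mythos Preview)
Your proposal is correct and follows essentially the same route as the paper: perturb so that the SDP solution $Y$ satisfies $\Acal^*(\bar Y)\ge\eps\kappa\ones$ (using $\Acal^*(I)\ge\kappa\ones$ and \cref{eq:I-in-Dist}), solve the perturbed primal--dual conic program, apply \cref{prop:polyhedral-conic-sampling}, pick $U$ maximizing $\qform{W}{s_{U'}}$ over $\supp(y)$, and assemble the $\beta$-certificate. The paper packages the perturbation slightly differently---rather than ``adding a small multiple of $I$ to $Y$'' post hoc, it builds the constraint $Y\succeq_{\Dist}\eps\mu I$ directly into a modified program $\nu_\eps^{\polar}$ (so $\nu_\eps(W)=(1-\eps)\nu(W)+\eps\trace(W)$), proves the clean sandwich $(1-\eps)\nu\le\nu_\eps\le\nu$, and then sets $\eps=\sigma=\Chernoff=\tfrac13(1-\beta/\Xalpha)$ so that $(1-\eps)(1-\sigma)(1-\Chernoff)\Xalpha\ge\beta$ via $(1-\tau/3)^3\ge 1-\tau$; this is exactly the ``interlocking error terms'' bookkeeping you flagged as the delicate part, and your anticipated resolution is the right one. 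One small slip: in the dual direction you appeal to equality in \cref{eq:cauchy-1}, but that involves $\maxq$ and $\fevc$; what is actually used (and what you invoke correctly in the forward direction via $\GWOpt$) is near-equality in \cref{eq:cauchy-2}, i.e., the SDP relaxations $\nu,\nu^\polar$---the paper simply solves the perturbed primal $\nu_\eps(W)$ for the given $W=\Acal(w)$ and sets $z\coloneqq\Acal^*(Y)$ from the dual optimal $Y$.
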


\section{Boolean 2-CSP}
\label{sec:boolean-csp}

Let \(U \subseteq \set{0} \cup [n]\) with \(0 \in U\).
Let \(x \colon [n] \to \set{\false, \true}\) be defined such that
\(x_i = \true\) if and only if \(i \in U \setminus \set{0}\).
Let \(i, j \in [n]\).
For any predicate \(P\), we let \(\Iverson{P} \in \set{0, 1}\) be
1 if the predicate \(P\) is \(\true\), and 0 otherwise.
\Cref{ssec:boolean-csp-appendix} defines matrices \(\Delta_{\pm i,
\pm j} \in \Sym{\set{0} \cup [n]}\) such that
\begin{equation*}
  \begin{aligned}
    \Iverson{\overline{x_i} \land \overline{x_j}}
    &= \tfrac{1}{4}\iprod{\Delta_{-i, -j}}{\stensor{U}},
    \qquad
    \Iverson{\overline{x_i} \land x_j}
    &= \tfrac{1}{4}\iprod{\Delta_{-i, +j}}{\stensor{U}},\\
    \Iverson{x_i \land \overline{x_j}}
    &= \tfrac{1}{4}\iprod{\Delta_{+i, -j}}{\stensor{U}},
    \qquad
    \Iverson{x_i \land x_j}
    &= \tfrac{1}{4}\iprod{\Delta_{+i, +j}}{\stensor{U}}.
  \end{aligned}
\end{equation*}
By decomposing a predicate as a disjunction of conjunctions, one can
write any Boolean function on two variables as a sum of these
matrices.  Thus, for any set \(\constr\) of constraints on two
variables, one can define a linear map
\(\Acal \colon \Reals^{\constr} \to \Sym{\set{0} \cup [n]}\) such that
\begin{equation}
  \label{eq:1}
  \iprod{\Acal(e_f)}{\stensor{U}} = \Iverson{f(x)} \text{ for every }
  f \in \constr;
\end{equation}
here, \(e_f \coloneqq \incidvector{\set{f}} \in \set{0,1}^{\constr}\)
is a canonical basis vector.  With this particular linear map
\(\Acal\), we say that \(\Cov\) is the \emph{polyhedral cone defined
  by \(\constr\)} if \(\Cov = \Acal(\Lp{\constr})\).  The definitions
are made so that
\begin{equation}
    \Ebb[\Round[Y]] \succeq_{\lift{\Cov^*}} \alpha Y
    \text{ if and only if }
    \prob(f(x) = \true) \ge \alpha \iprod{\Acal(e_f)}{Y}
    \text{ for every }
    f \in \constr,
\end{equation}
where \(x \in \assign\) is obtained from \(\Round[Y]\) in the
following way: let \(U  \subseteq \set{0} \cup [n]\) be such
that \(0 \in U\) and \(\stensor{U}\) was sampled from \(\Round[Y]\), and
define \(x \in \assign\) by \(x_i = \true\) if and only if \(i \in U \setminus \set{0}\).
If we set \(\Delta^n \coloneqq \conv\paren{ \bigcup_{i, j
\in [n]} \set{ \Delta_{\pm i, \pm j}}}\),
it is immediate that \(\Cov \subseteq \cone(\Delta^n)\).
The set
\(
  \Dist_{\Delta}
  \coloneqq
  \Psd{\set{0} \cup [n]} \cap \paren[\big]{\,
    \bigcup_{i, j \in [n]} \set{
      \Delta_{\pm i, \pm j}
    }
  }^*
\)
has been studied --- see e.g.,
\cite{Raghavendra2008,LewinLivnatZwick2002} ---, and these additional
inequalities are referred to as \emph{triangle inequalities}.
Since \(\CUT^{\Dist_{\Delta}} = \conv\setst{\stensor{U}}{U \subseteq
\set{0} \cup [n]}\) (see \cref{ssec:boolean-csp-appendix}) we have that
%\begin{equation}
%  \label{eq:triangle-inequalities-dual}
%  \Dist_{\Delta}^*
%  = \Psd{\set{0,\dotsc,n}} + \cone(\Delta^n).
%\end{equation}
\(\Dist_{\Delta}^*  = \Psd{\set{0} \cup [n]} + \cone(\Delta^n)\).
This then ensures that \cref{eq:conic-assumptions} holds for \(\Dist_{\Delta}\)
and the polyhedral cone \(\Cov\) defined by \(\constr\).

\begin{theorem}[Fractional \(\pred\)-Covering Theorem]
\label{thm:2-CSP-theorem}
Let \(\pred\) be a set of predicates in two Boolean variables.
For every \(n \in \Naturals\), let \({\Round[]}_n\) be a randomized
rounding algorithm for \(\Dist_{\Delta} \subseteq \Sym{n + 1}\).
Let
\[
  \alpha
  \le \inf\setst[\bigg]{
    \alpha_{\Dist_{\Delta}, \Cov, {\Round[]}_n}
  }{
    \begin{array}{c}
      \constr \text{ set of \(\pred\)-constraints on \(n\) variables},\\
      \Cov \text{ polyhedral cone defined by } \constr.
    \end{array}
  },
\]
and fix \(\beta \in (0, \alpha)\).
There exists a randomized polynomial-time algorithm that,
given an instance \((\constr, z)\) of \(\fConstrC{\pred}\) as
input, computes \(w \in \Lp{\constr}\) and a \(\beta\)-certificate
for \((w, z)\).
Dually, there exists a polynomial-time randomized algorithm that,
given an instance \((\constr, w)\) of \(\maxConstrSat{\pred}\) as
input, computes \(z \in \Lp{\constr}\) and a \(\beta\)-certificate
for \((w, z)\).
Both algorithms take at most \(O(\log n)\) samples from \({\Round[]}_n\) and
produce covers with \(O(\log n)\) support size.
\end{theorem}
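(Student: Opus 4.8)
The plan is to derive \cref{thm:2-CSP-theorem} as a direct instantiation of the Main Polyhedral Theorem (\cref{thm:polyhedral-algorithm}) for the cone data attached to a Boolean \(2\)-CSP. Given a set \(\constr\) of \(\pred\)-constraints on \(n\) variables, let \(\Acal \colon \Reals^{\constr} \to \Sym{\set{0} \cup [n]}\) be the linear map of~\cref{eq:1}, set \(\Cov \coloneqq \Acal(\Lp{\constr})\) (the polyhedral cone defined by \(\constr\)) and take \(\Dist \coloneqq \Dist_{\Delta} \subseteq \Sym{n+1}\). First I would invoke the discussion preceding the theorem, which records that the standing assumptions~\cref{eq:conic-assumptions} hold for this pair \((\Dist_{\Delta}, \Cov)\), and that \(\Dist_{\Delta}^{*} = \Psd{\set{0}\cup[n]} + \cone(\Delta^{n})\). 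It then remains to check the two extra hypotheses of \cref{thm:polyhedral-algorithm}.

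For~\cref{eq:I-in-Dist}: since \(\CUT^{\Dist_{\Delta}} = \conv\setst{\stensor{U}}{U \subseteq \set{0}\cup[n]}\), every \(\stensor{U}\) lies in \(\Dist_{\Delta}\), so \(\Fcal(\Dist_{\Delta})\) is the full power set; a direct computation of \(\sum_{U \ni 0}\stensor{U}\) gives \(I = 2^{-n}\sum_{U \ni 0}\stensor{U} \in \cone\setst{\stensor{U}}{U \in \Fcal(\Dist_{\Delta})} \subseteq \Dist_{\Delta}\). For the condition \(\Acal^{*}(I) \ge \kappa\ones\): by~\cref{eq:1} and the same expression for \(I\), the \(f\)-th entry \((\Acal^{*}(I))_f = \iprod{\Acal(e_f)}{I}\) equals the probability that a uniformly random assignment satisfies \(f\); since the constant \(\false\) predicate is excluded from \(\pred\), every \(f \in \constr\) has at least one satisfying assignment among the four possibilities for its two variables, so this probability is at least \(1/4\). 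Hence \(\kappa \coloneqq 1/4\) works, uniformly over \(\constr\).

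Next, because \(\beta < \alpha \le \alpha_{\Dist_{\Delta},\Cov,{\Round[]}_n}\) by the choice of \(\alpha\) as an infimum over all \(\pred\)-constraint sets on \(n\) variables, the fixed \(\beta\) lies in \((0,\Xalpha)\) for the rounding constant \(\Xalpha = \alpha_{\Dist_{\Delta},\Cov,{\Round[]}_n}\) of this instance. \Cref{thm:polyhedral-algorithm} then supplies, in each direction, a randomized polynomial-time algorithm that on input \(z \in \Lp{\constr}\) (resp.\ \(w \in \Lp{\constr}\)) outputs \(w \in \Lp{\constr}\) (resp.\ \(z\)) together with a \(\beta\)-certificate for \((w,z)\), using \(O(\log\card{\constr} + \log(n+1))\) samples from \({\Round[]}_n\) and producing covers of support size \(C(\kappa,\Xalpha,\beta)\,(\log\card{\constr} + \log(n+1))\). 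The bookkeeping step is to observe that each \(\pred\)-constraint is determined by a predicate in \(\pred\) together with an ordered pair of indices from \([n]\), so \(\card{\constr} = O(n^{2})\) with \(\pred\) fixed; therefore both the sample count and the support size are \(O(\log n)\), the hidden constant depending only on \(\pred\) and \(\beta\) (through \(\kappa = 1/4\) and through \(\alpha \le \Xalpha \le 1\)).

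Finally, to phrase the output in terms of the named problems, I would record that under the bijection between subsets of \(\set{0}\cup[n]\) containing \(0\) and assignments in \(\assign\) (using \(\stensor{U} = \stensor{U^{c}}\)), \cref{eq:1} and \cref{polyhedral-conic-ineq} yield \(\maxq_{\Dist_{\Delta},\Acal}(w) = \maxConstrSat{\pred}(\constr,w)\) and \(\fevc_{\Dist_{\Delta},\Acal}(z) = \fConstrC{\pred}(\constr,z)\); this identification is carried out in \cref{ssec:boolean-csp-appendix}. Consequently the \(\beta\)-certificate produced by \cref{thm:polyhedral-algorithm} is precisely a \(\beta\)-certificate for the instances \((\constr,w)\) of \(\maxConstrSat{\pred}\) and \((\constr,z)\) of \(\fConstrC{\pred}\) in the sense of \cref{def:beta-certificate}. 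I expect the only real work to be this translation together with the verification of \(\Acal^{*}(I) \ge \kappa\ones\) --- that is, matching the abstract framework to the concrete CSP objects built from the matrices \(\Delta_{\pm i, \pm j}\); once those identifications are in place the statement is an immediate corollary of \cref{thm:polyhedral-algorithm}.
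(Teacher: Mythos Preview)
Your proposal is correct and follows essentially the same approach as the paper's proof: instantiate \cref{thm:polyhedral-algorithm} with \(\Dist_{\Delta}\) and the polyhedral cone \(\Cov = \Acal(\Lp{\constr})\), verify~\cref{eq:I-in-Dist} via the identity \(I = 2^{-n}\sum_{U \ni 0}\stensor{U}\), verify \(\Acal^{*}(I) \ge \tfrac14\ones\) by interpreting \(\Acal^{*}(I)\) as the vector of uniform satisfaction probabilities, and reduce the \(\log\card{\constr}\) term to \(O(\log n)\) since \(\card{\constr} = O(n^{2})\). The only minor addition worth making explicit is that the constant \(C(\kappa,\Xalpha,\beta)\) in \cref{thm:polyhedral-algorithm} is uniformly bounded over all instances because \(\Xalpha \ge \alpha\) and \(1-\beta/\Xalpha \ge 1-\beta/\alpha > 0\), so the support bound is truly \(O(\log n)\) with a constant depending only on \(\beta\) and \(\alpha\).
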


\begin{proof}[Proof of~\Cref{thm:2-CSP-theorem}]
Set \(\Dist \coloneqq \Dist_{\Delta}\) and let \(\Acal \colon
\Reals^{\constr} \to \Sym{\set{0} \cup [n]}\) be as in~\cref{eq:1}.
From~\cref{eq:triangle-ineq-preserves-sign-vectors} we have
that~\cref{eq:I-in-Dist} holds.
Since \(\Acal^*(\stensor{U}) = \val_{\constr}(x)\), using that
\(
  2^{-n} I = \sum\setst{\stensor{\set{0} \cup U}}{U \subseteq [n]}
\),
we see that \(\Acal^*(I)\) computes the marginal probability of
satisfying each constraint by  uniformly sampling an assignment
in \(\assign\).
As the constant \(\false\) function is not in \(\pred\), any
constraint is satisfied by at least \(1/4\) of the assignments.
Hence \(\Acal^*(I) \ge \tfrac{1}{4} \ones\).
Note that since \(\card{\pred} \le 16\), we have that
\(\log(\card{\constr}) = O(\log n)\).
\Cref{thm:polyhedral-algorithm} then ensures we can compute
\(\beta\)-certificates \((\rho, \mu, U, y, x)\) with
\(
  \card{\supp(y)}
  = O(\log(\card{\constr}) + \log(n))
  = O(\log(n))
\).
\end{proof}

\begin{proof}[Proof of \Cref{thm:maxdicut-cover}]
Set \(\pred \coloneqq \set{x_1 \land \overline{x_2}}\).
For every digraph \(D = (V, A)\), each arc \(uv\) can be mapped to a
constraint \(x_u \land \overline{x_v}\).
Hence, there exists \(\constr\) such that \(\md(D, w) =
\maxConstrSat{\pred}(\constr, w)\) and \(\fdc(\constr, z) =
\fConstrC{\pred}(\constr, z)\) for every \(w \in \Lp{A}\) and \(z \in \Lp{A}\).
\Cite{BrakensiekHuangPotechinZwick2023} --- see formulation after Proposition~2.4.
--- define \(\Round[]\) such that
\[
  \iprod{\Ebb[\Round[Y]]}{\tfrac{1}{4}\Delta_{+u, -v}}
  \ge \BHPZalpha \iprod{Y}{\tfrac{1}{4}\Delta_{+u, -v}}
  \text{ for every arc }
  uv \in A
  \text{ and }
  Y \in \Ecal(\Dist).
\]
Thus \(\Ebb[\Round[Y]] \succeq_{\lift{\Cov^*}} \BHPZalpha Y\), so
\Cref{thm:2-CSP-theorem} finishes the proof.
%\cref{eq:BHPZalpha-const-def}, and~\cref{eq:dicut-as-CSP}.
\end{proof}

\begin{proof}[Proof~of~\Cref{thm:max2sat-cover}]

Let \(\Round[Y] = \stensor{U}\) where \(U = \set{0} \cup \setst{i \in [n]}{x_i = \true}\)
for \(x \in \assign\) being sampled from the algorithm
defined by \nameandcite{LewinLivnatZwick2002}. Then
\begin{detailedproof}
  \iprod{\Ebb[\Round[Y]]}{\tfrac{1}{4}(\Delta_{-i, + j} + \Delta_{+i, -j} + \Delta_{+i, +j})}
  &= \prob(x_i \lor x_j)
  &&\text{{def.} of \(\Round\) and \(\Delta\)}\\*
  &\ge \LLZalpha
  \iprod{Y}{
    \tfrac{1}{4}\paren{
      3 + Y_{0i} + Y_{0j} + Y_{ij}
    }
  }
  &&\text{by~\cite{LewinLivnatZwick2002}}\\*
  &= \LLZalpha\iprod{Y}{\tfrac{1}{4}\paren{\Delta_{-i, + j} + \Delta_{+i, -j} + \Delta_{+i, +j}}}.
\end{detailedproof}
The mismatch between the expression in the second line
and the expression in \cite[Section~5]{LewinLivnatZwick2002}
arises from our modelling imposing \(x_0 = \true\), whereas \cite{LewinLivnatZwick2002}
impose \(x_0 = \false\).
The case for constraints \(\overline{x_i} \lor x_j\), \(x_i \lor \overline{x_j}\),
and \(\overline{x_i} \lor \overline{x_j}\) is analogous.
As this holds for every constraint, we have that
\(\Ebb[\Round[Y]] \succeq_{\lift{\Cov^*}} \LLZalpha Y\),
where \(\Cov\) is the polyhedral cone defined by \(\constr\).
Thus \(\LLZalpha \le \alpha_{\Dist, \Cov, \Round}\) and \cref{thm:2-CSP-theorem}
implies the statement.
\end{proof}

\section{Concluding Remarks and Future Directions}

Despite its generality, our framework still captures
several best possible and best known results.

A first aspect concerns the approximation constants
of the algorithms presented.
We refer to
\begin{equation}
  \label{eq:integrality-ratio}
  \intRatio_{\conepair}
  \coloneqq \min\setst[\bigg]{
    \frac{\maxq_{\conepair}(W)}{\nu_{\conepair}(W)}}{
    W \in \Cov
  }
  = \min\setst[\bigg]{
    \frac{\nu_{\conepair}^{\polar}(Z)}{\fevc_{\conepair}(Z)}
  }{
    Z \in \Cov^*
  }
\end{equation}
as the \emph{integrality ratio of \((\Dist, \Cov)\)}.
Equality between the two expressions above follows from gauge duality.
One may see that \(\intRatio_{\conepair}\) is the largest \(\beta\)
such that every \((W, Z) \in \GWOpt(\conepair)\) is a \(\beta\)-pairing.
Positivity of \(\intRatio_{\conepair}\) is a corollary of all norms on
a finite-dimensional vector space being equivalent.
It is more interesting then to consider families \(\Ical\) of triples
\((\Dist, \Cov, \Round[])\) where \(\Round[]\) is a randomized
rounding algorithm for \(\Dist\).
\Cref{thm:maxq-fevc-nu-nupolar} implies that
\begin{equation}
  \label{eq:integrality-ratio-rounding-constant}
  \intRatio_{\Ical}
  \coloneqq \inf\setst{\intRatio_{\conepair}}{(\Dist, \Cov, \Round[]) \in \Ical}
  \ge \inf\setst{\alpha_{\Dist, \Cov, \Round[]}}{(\Dist, \Cov, \Round[]) \in \Ical}
  \eqqcolon \alpha_{\Ical}.
\end{equation}
For example, if \(\Ical\) encodes all the cones arising from instances
of the maximum cut problem, to say that \(\intRatio_{\Ical} \ge \GWalpha\)
is to say we have a \(\GWalpha\)-approximation algorithm for the maximum
cut problem, and a \(1/\GWalpha\)-approximation algorithm for the fractional
cut covering problem.
Equality in \cref{eq:integrality-ratio-rounding-constant} indicates
that no better approximation algorithm can be obtained without
strengthening the formulation (by~changing \(\Dist\)) or restricting
the input instances (by~changing \(\Cov\)).
Whenever \(\Ical\) arises from instances related to a specific 2-CSP,
Raghavendra \cite[Corollary~1.5]{Raghavendra2008} shows that the
triangle inequalities (i.e., \(\Dist_{\Delta}\)) are enough, as there
exists a randomized rounding algorithm ensuring equality in
\cref{eq:integrality-ratio-rounding-constant}.
It is also known that \(2/\pi = \inf\setst{\intRatio_{\Psd{n}, \Psd{n}}}{
n \in \Naturals }\) \cite{FriedlandLim2020,AlonNaor2006,Grothendieck1953,Krivine1977}.
In this way, the algorithms in \Cref{theorem:sdp-contained-algorithm,%
thm:2-CSP-theorem} all have tight analyses.

One formulation of the ``equivalence between separation and optimization''
proved by \nameandcite{GrotschelLovaszEtAl1981} is that one
can compute a positive definite monotone gauge whenever one can compute
its dual.
In this way, whenever \(\nu_{\conepair}\) is the best polynomial-time
computable approximation to \(\maxq_{\conepair}\) under the Unique
Games Conjecture \cite{KhotKindlerEtAl2007} (and assuming P
\(\neq\) NP), the same immediately holds for \(\nu_{\conepair}^{\polar}\)
and \(\fevc_{\conepair}\).
In particular, \cite{Raghavendra2008} shows that, assuming
the UGC, it is NP-hard to obtain any approximation
algorithm for a Boolean 2-CSP with approximation factor better than
\(\intRatio_{\Dist_{\Delta}, \Cov}\).
Thus, the UGC implies that \Cref{thm:2-CSP-theorem}
is best possible unless P = NP.

The support size bounds in \Cref{thm:2-CSP-theorem,theorem:sdp-contained-algorithm}
are asymptotically tight.
If \(\Cov = \Psd{n}\), it is immediate that any
feasible~\(y\) in \cref{eq:nu-polar-A-K-gaugef} for \(Z = I\) has
\(\card{\supp(y)} \ge n\).
Hence the \(O(n)\) support size in \cref{theorem:sdp-contained-algorithm}
is best possible.
\Cite{BenedettoProencadeCarliSilvaEtAl2023} argues that
\(\card{\supp(y)} \ge \log(\chi(G))\) for every graph \(G = (V, E)\)
whenever \(\Cov = \Laplacian_G(\Lp{E})\), where \(\Laplacian_G(w)
\coloneqq \sum_{ij \in E} w_{ij}\soprod{(e_i - e_j)}\) is the
\emph{Laplacian of \(G\)}.
Hence the \(O(\log n)\) support size in \Cref{thm:2-CSP-theorem} is also
best possible.

Tightly related to the support size of the solutions we produce, is the
number of samples necessary to ensure a good enough cover with high
probability.
Although \Cref{theorem:sdp-contained-algorithm} shows that \(O(n^2 \log n)\)
samples suffice when \(\Cov \subseteq \Psd{n}\),
in specific cones we exploited \emph{conic concentration bounds} in
\cref{prop:polyhedral-cone-concentration,prop:psd-concentration} to
obtain better sampling bounds.
It is conceivable that other families of cones also admit better bounds.
E.g., \cite{SongZhang2022} offers conic concentration
for hyperbolicity cones.

Three of the natural generalizations of our framework not discussed here
are:
\begin{enumerate}[(i)]
\item extension to the complex field and Hermitian matrices,
\item extension of the intractable pairs defined by exponentially many
  constraints and exponentially many variables to a semi-infinite
  setting (infinitely many constraints in the intractable primal and
  infinitely many variables in the intractable gauge dual),
\item extension to handle general CSPs.
\end{enumerate}
The first two generalizations allow the treatment of many applications
in continuous mathematics and engineering, including some applications
in robust optimization and system and control theory.
The underlying theoretical results include as a special case the
Extended Matrix Cube Theorem \cite{B-TNR2003}.

\clearpage

\appendix

\section{maxq and fevc, and their Conic Relaxations}
\label{sec:conic-maxq-fevc}

We denote by \(\aff(S)\) the \emph{affine hull} of the set \(S
\subseteq \Sym{n}\), which is the intersection of all affine subspaces
of \(\Sym{n}\) containing \(S\).
Define the \emph{dual cone} of \(S \subseteq \Sym{n}\) as
\begin{equation*}
  S^*
  \coloneqq \setst{X \in \aff(S)}{
    \iprod{Y}{X} \ge 0
    \text{ for all }
    Y \in S
  }.
\end{equation*}
For full-dimensional convex sets, our definition matches the usual
definition of dual cone.
In general, the dual of a set is taken in its affine hull, analogous
to how the relative interior is taken with respect to the topology
induced in the affine hull of the set.
This becomes most relevant as~\cref{eq:conic-assumptions} allows for
cones \(\Cov\) which are not full dimensional.
Although for a convex cone \(\Cov \subseteq \Sym{n}\) the set
\(\aff(\Cov^*)\) may be strictly smaller than \(\aff(\Cov)\) --- and
hence \(\Cov^{**}\) may not be \(\Cov\) ---, if \(\Cov\) is a pointed
cone, then \(\Cov^{**} = \Cov\).

%Recall \(\CUT^n \coloneqq \conv\setst{\stensor{U}}{U \subseteq [n]}\).
Let \(\Dist, \Cov \subseteq \Sym{n}\) be closed convex cones.
Recall that \(\CUT^{\Dist} \coloneqq \conv\setst{\stensor{U}}{U
  \subseteq V,\, \stensor{U} \in \Dist}\).
Assume
\begin{subequations}
  \label{eq:conic-assumptions-appendix}
  \begin{gather}
    \label{eq:dist-psd}
    \Dist \subseteq \Psd{n},\\
    \label{eq:non-negativity}
    \Cov \subseteq \Dist^*,\\
    \label{eq:Dist-slater-point}
    \int\paren{\CUT^{\Dist}} \neq \emptyset,\\
    \label{eq:Cov-slater-point}
    \Cov = \lift{\Cov} \cap \Null(\Lcal)
    \text{ and }
    \exists
    \slater{X} \in \int(\lift{\Cov}) \setminus \set{0}
    \text{ s.t. }
    \Lcal(\slater{X}) = 0,
  \end{gather}
\end{subequations}
where \(\lift{\Cov} \subseteq \Sym{n}\) is a closed convex cone and
\(\Lcal \colon \Sym{n} \to \Reals^k\) is a linear map for some \(k \in
\Naturals\).
Under these assumptions, we have that
\begin{equation}
  \label{eq:Cov-properties}
  \Cov \text{ is pointed and }
  \ri(\Cov) \setminus \set{0} \neq \emptyset.
\end{equation}
Note that \(\Dist^*\) is pointed, since \(\int(\Dist) \neq \emptyset\)
by~\cref{eq:Dist-slater-point}.
As \(\Dist^* \supseteq \Cov\) by~\cref{eq:non-negativity}, we conclude
\(\Cov\) is pointed.
The second part follows from~\cref{eq:Cov-slater-point}, as
\(\slater{X} \neq 0\) and \(\Null(\Lcal)\) is the smallest affine
subspace of \(\Sym{n}\) containing \(\Cov\).

Let \(\Dist, \Cov\) be closed convex cones such
that~\cref{eq:conic-assumptions-appendix} holds, and let \(\Lcal
\colon \Sym{n} \to \Reals^k\) and \(\lift{\Cov}\) be the linear
transformation and cone appearing in~\cref{eq:Cov-slater-point},
respectively.
We write
\begin{equation}
  \label{eq:lift-Cov-dual}
  \lift{\Cov^*}
  \coloneqq \lift{\Cov}^* + \Img(\Lcal^*)
  \subseteq \Sym{n}.
\end{equation}
If we denote by \(P \colon \Sym{n} \to \Sym{n}\) the orthogonal
projector onto \(\aff(\Cov) = \Null(\Lcal)\), then
\begin{equation}
  \label{eq:Cov-dual}
  \Cov^* = P(\lift{\Cov^*}).
\end{equation}
This relationship motivates the notation in~\cref{eq:lift-Cov-dual}:
it shows that \(\lift{\Cov^*}\) is a lifting of the cone \(\Cov^*\).
In our setting, we will have \(\aff(\Cov)\) as the \emph{instance
space}, where the inputs to our gauges arise from, and \(\Sym{n}\) as
the lifted space where optimization is performed.
In this way, both \(\Cov^*\) and its lifting \(\lift{\Cov^*}\) appear
throughout our developments.
From~\cref{eq:lift-Cov-dual} we have that \(\lift{\Cov^*} \supseteq
\Img(\Lcal^*) = \Null(\Lcal)^{\perp} = \aff(\Cov)^{\perp}\).
Hence
\begin{equation}
  \label{eq:aff-perp-subseteq-lift}
  \aff(\Cov)^{\perp} \subseteq \lift{\Cov^*}.
\end{equation}
From \cref{eq:non-negativity}, \cref{eq:Cov-slater-point},
and~\cref{eq:lift-Cov-dual} we have that
\begin{equation}
  \label{eq:lift-cov-dua-contains-dist}
  \Dist \subseteq  \lift{\Cov^*}.
\end{equation}
Since \(\Cov\) is pointed by~\cref{eq:Cov-properties}, we have that
\begin{equation}
  \label{eq:Cov-dual-involution}
  \Cov^{**} = \Cov.
\end{equation}
Finally, the orthogonal projector gives a convenient map from
\(\Dist\) to \(\Cov^*\), since
\begin{equation}
  \label{eq:Dist-to-dual-Cov}
  Y \succeq_{\lift{\Cov^*}} P(Y) \in \Cov^*
  \text{ for every }
  Y \in \Dist.
\end{equation}
Indeed, for every \(Y \in \Dist\) we have that \(P(Y) \in \Cov^*\), as
\(P(\Dist) \subseteq P(\lift{\Cov^*}) = \Cov^*\)
by~\cref{eq:Cov-dual} and~\cref{eq:lift-cov-dua-contains-dist}.
Moreover, \(Y - P(Y) \in \Null(\Lcal)^{\perp} = \Img(\Lcal^*) \subseteq
\lift{\Cov^*}\) by~\cref{eq:lift-Cov-dual},
so~\cref{eq:Dist-to-dual-Cov} holds.

Recall the definitions of \(\maxq_{\Dist,\Cov}\),
\(\fevc_{\conepair}\), \(\nu_{\conepair}\), and
\(\nu_{\conepair}^{\polar}\), along with conic dual formulations,
for~each \(W \in \Cov\) and \(Z \in \Cov^*\):
\begin{align}
  \label{eq:maxq-def-appendix}
  \maxq_{\Dist,\Cov}(W)
  &\!\coloneqq \mathrlap{%
    \max\setst{
      \iprodt{s_U}{W s_U}
    }{
      U\in\Fcal(\Dist)
    },%
  }%
  \phantom{%
    \max\setst[\big]{
      \iprod{Z}{X}
    }{
      X \in \Cov,\,
      \iprod{\stensor{U}}{X} \leq 1 \text{ for every }U \in \Fcal(\Dist)
    }
  }
\end{align}
\vspace*{-12pt}
\begin{subequations}%
  \label{eq:fevc-def-appendix}
  \begin{align}%
    \label{eq:fevc-gaugef-appendix}
    \phantom{\maxq_{\Dist,\Cov}(W)}
    \mathllap{\fevc_{\Dist,\Cov}(Z)}
    &\coloneqq \min\setst[\bigg]{
        \iprodt{\ones}{y}
      }{
      y \in \Lp{\Fcal(\Dist)},\,
      \sum_{\mathclap{U \in \Fcal(\Dist)}} y_U^{} \stensor{U}
      \succeq_{\lift{\Cov^*}} Z
      }
      \\
      \label{eq:fevc-supf}
      &\eqaligned \max\setst[\big]{
        \iprod{Z}{X}
      }{
        X \in \Cov,\,
        \iprod{\stensor{U}}{X} \leq 1 \text{ for every }U \in \Fcal(\Dist)
      };
  \end{align}
\end{subequations}
\vspace*{-12pt}
\begin{subequations}%
  \label{eq:nu-def-appendix}
  \begin{align}%
    \label{nu-supf-def}
    \phantom{\maxq_{\Dist,\Cov}(W)}
    \mathllap{\nu_{\Dist,\Cov}(W)}
    &\coloneqq \max\setst{
    \iprod{W}{Y}
    }{
      Y \in \Dist,\,
      \diag(Y) = \ones
    }
    \\
    \label{eq:nu-dual-def}
    &\eqaligned \mathrlap{%
      \min\setst{
        \iprodt{\ones}{x}
      }{
        x \in \Reals^n,\,
        \Diag(x) \succeq_{\Dist^*} W
      },
    }
  \phantom{%
    \max\setst[\big]{
      \iprod{Z}{X}
    }{
      X \in \Cov,\,
      \iprod{\stensor{U}}{X} \leq 1 \text{ for every }U \in \Fcal(\Dist)
    }
  }
  \end{align}
\end{subequations}
\vspace*{-12pt}
\begin{subequations}
  \label{eq:nu-polar-appendix-def}
  \begin{align}
    \label{eq:nu-polar-A-K-gaugef-appendix}
    \phantom{\maxq_{\Dist,\Cov}(W)}
    \mathllap{\nu^{\polar}_{\conepair}(Z)}
    &\coloneqq \min\setst{
      \mu
    }{
      \mu \in \Lp{},\, Y \in \Dist,\,
      \diag(Y) = \mu\ones,\,
      Y \succeq_{\lift{\Cov^*}} Z}
      \\
      \label{eq:nu-polar-A-K-supf}
    &\eqaligned \mathrlap{%
    \max\setst{
      \iprod{Z}{X}
    }{
      X \in \Cov,\,
      x \in \Reals^n,\,
      \Diag(x) \succeq_{\Dist^*} X,\,
      \iprodt{\ones}{x} \le 1
    }.
    }
  \phantom{%
    \max\setst[\big]{
      \iprod{Z}{X}
    }{
      X \in \Cov,\,
      \iprod{\stensor{U}}{X} \leq 1 \text{ for every }U \in \Fcal(\Dist)
    }
  }
  \end{align}
\end{subequations}
Our arguments rely on standard results on Conic Programming Duality
--- see, e.g., \cite[Chapter~7]{Nemirovski2024}.
In particular, a \emph{strictly feasible solution} to an optimization
problem is a feasible solution where every conic constraint is
satisfied by a point in the interior of the relevant cone.
By~\cref{eq:Dist-slater-point}, there exists
\begin{equation}
  \label{eq:slater-Y-Dist}
  \slater{y} \in \Lp{\Fcal(\Dist)}
  \text{ such that }
  \sum_{S \in \Fcal(\Dist)} \slater{y}_{U}^{}\stensor{U}
  \eqqcolon \slater{Y} \in \int(\Dist)
  \text{ and }
  \diag(\slater{Y}) = \ones.
\end{equation}
We may assume that \(\slater{y} > 0\).
Note that \(\alpha \slater{Y} - Z = \alpha(\slater{Y} -
\tfrac{1}{\alpha}Z)\in \int(\Dist) \subseteq \int(\lift{\Cov^*})\) for
large enough \(\alpha \in \Reals_{++}\).
Hence~\cref{eq:fevc-gaugef-appendix} has a strictly feasible solution.
From~\cref{eq:Cov-slater-point} one may
reformulate~\cref{eq:fevc-supf} into an equivalent problem with a
strictly feasible solution.
Conic Programming Strong Duality \cite[Theorem~7.2]{Nemirovski2024}
implies equality and attainment in~\cref{eq:fevc-def-appendix}.
Similarly, note that \(\slater{Y}\) is a strictly feasible solution
to~\cref{nu-supf-def}, whereas \(\slater{x} \coloneqq 2\lambdamax(W)
\ones\) is a strictly feasible solution for \cref{eq:nu-dual-def}, as
\(\Psd{n} \subseteq \Dist^*\) by~\cref{eq:dist-psd}.
Once again, Strong Duality ensures equality and attainment
in~\cref{eq:nu-def-appendix}.
A positive multiple of \((1, \slater{Y})\) is a strictly feasible
point to~\cref{eq:nu-polar-A-K-gaugef-appendix}.
Let \(\slater{X} \in \Cov\) be as in~\cref{eq:Cov-slater-point}.
Without loss of generality, assume that \(\lambdamax(\slater{X}) <
1\).
Then \((\tfrac{1}{2n} \ones, \tfrac{1}{2n}\slater{X})\) is a strictly
feasible solution to~\cref{eq:nu-polar-A-K-supf}.
Hence equality and attainment holds
in~\cref{eq:nu-polar-appendix-def}.

We will look at these functions through the lens of conic gauges, which are defined as follows:
\begin{definition}[Conic Gauges]
  Let \(\Ebb\) be an Euclidean space.
  Let \(\Cov \subseteq \Ebb\) be a closed convex cone.
  A function \(\varphi \colon \Cov \to \Reals_+\) is a \emph{gauge} if \(\varphi\) is positively homogeneous, sublinear, and \(\varphi(0) = 0\).
  The gauge \(\varphi\) is \emph{positive definite} if \(\varphi(x) > 0\) for each nonzero \(x \in \Cov\), and \(\varphi\)~is monotone if \(0 \preceq_{\Cov} x \preceq_{\Cov} y\) implies \(\varphi(x) \leq \varphi(y)\).
\end{definition}

\begin{definition}
  Let \(\varphi \colon \Cov \to \Lp{}\) be a positive definite
  monotone gauge.
  The \emph{dual} of \(\varphi\) is the positive definite monotone
  gauge \(\varphi^{\polar} \colon \Cov^* \to \Reals_+\) defined by
  \begin{equation}
      \varphi^{\polar}(y)
      \coloneqq
      \max\setst{\iprod{y}{x}}{x \in \Cov,\, \varphi(x) \leq 1}
\mathrlap{
      \qquad
      \text{for each }
      y \in \Cov^*.}
  \end{equation}
\end{definition}

Let \(\phi \colon \Cov \to \Lp{}\) be a positive definite monotone
gauge.
Whenever \(\Cov^{**} = \Cov\), --- in particular
whenever~\cref{eq:conic-assumptions-appendix} holds --- one can prove
that \(\phi^{\polar \polar} = \phi\).
We show that \(\maxq_{\conepair}\), \(\fevc_{\conepair}\),
\(\nu_{\conepair}\), and \(\nu^{\polar}_{\conepair}\)
are positive definite monotone gauges
and how they are related.

\begin{theorem}
\label{thm:conic-assumptions}

Let \(\Dist, \Cov \subseteq \Sym{n}\) be closed convex cones such
that~\cref{eq:conic-assumptions-appendix} holds.
Then
\begin{enumerate}[(i),itemsep=5pt]
    \item \(\maxq_{\conepair}\) and \(\fevc_{\conepair}\) are positive
      definite monotone gauges, dual to each other;
    \item\label{item:nu-gauges} \(\nu_{\conepair}\) and
      \(\nu_{\conepair}^{\polar}\) are positive definite monotone
      gauges, dual to each other;
    \item \(  \maxq_{\conepair} \le \nu_{\conepair}\) and
  \(\nu^{\polar}_{\conepair} \le \fevc_{\conepair}.\)
\end{enumerate}
\end{theorem}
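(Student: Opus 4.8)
The plan is to establish the three items in order, leaning on the conic duality facts just assembled (strict feasibility of all four pairs in \cref{eq:fevc-def-appendix,eq:nu-def-appendix,eq:nu-polar-appendix-def}) and on \cref{thm:maxq-fevc-nu-nupolar}. First I would prove that $\fevc_{\conepair}$ is a positive definite monotone gauge on $\Cov^*$. Positive homogeneity and sublinearity of $\fevc_{\conepair}$ are immediate from the minimization formulation \cref{eq:fevc-gaugef-appendix}: scaling $Z$ scales the feasible $y$, and given feasible $y$ for $Z$ and $y'$ for $Z'$, the sum $y+y'$ is feasible for $Z+Z'$ since $\lift{\Cov^*}$ is a convex cone closed under the relevant additions; $\fevc_{\conepair}(0)=0$ because $y=0$ is feasible for $Z=0$. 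Monotonicity with respect to $\preceq_{\Cov^*}$ follows because enlarging $Z$ along $\Cov^*$ only shrinks the feasible region (here one uses \cref{eq:Cov-dual} to relate the order on $\Cov^*$ with $\lift{\Cov^*}$). Positive definiteness is where I expect to actually use an assumption from \cref{eq:conic-assumptions-appendix}: I would invoke the supremum formulation \cref{eq:fevc-supf} together with $\int(\CUT^{\Dist})\neq\emptyset$ from \cref{eq:Dist-slater-point} to produce, for any nonzero $Z\in\Cov^*$, a feasible $X\in\Cov$ with $\iprod{Z}{X}>0$, forcing $\fevc_{\conepair}(Z)>0$. The dual of $\fevc_{\conepair}$ is then $\maxq_{\conepair}$: by definition $\fevc_{\conepair}^{\polar}(W)=\max\set{\iprod{W}{X}: X\in\Cov^*,\ \fevc_{\conepair}(X)\le 1}$, and comparing with \cref{eq:fevc-supf} (which exhibits $\fevc_{\conepair}$ itself as the dual gauge of $\maxq_{\conepair}$, whose unit-sublevel set is $\conv\setst{\stensor{U}}{U\in\Fcal(\Dist)}$, i.e. $\CUT^{\Dist}$) shows the two gauges are a dual pair; here I would cite the general fact recorded in the excerpt that $\phi^{\polar\polar}=\phi$ whenever $\Cov^{**}=\Cov$, which holds by \cref{eq:Cov-dual-involution}, so that $\maxq_{\conepair}^{\polar}=\fevc_{\conepair}$ and $\fevc_{\conepair}^{\polar}=\maxq_{\conepair}$. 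One still has to check $\maxq_{\conepair}$ is itself a positive definite monotone gauge, which is the same routine argument as for $\fevc_{\conepair}$ but on the $\Cov$ side, using $\Cov\subseteq\Dist^*$ from \cref{eq:non-negativity} for monotonicity and again \cref{eq:Dist-slater-point} for positive definiteness.

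Item \ref{item:nu-gauges} is entirely analogous, now reading $\nu_{\conepair}$ and $\nu_{\conepair}^{\polar}$ through their two formulations in \cref{eq:nu-def-appendix,eq:nu-polar-appendix-def}. The maximization formulation \cref{nu-supf-def} makes positive homogeneity and sublinearity of $\nu_{\conepair}$ transparent (it is a support-function-type maximum over the fixed compact-ish set $\setst{Y\in\Dist}{\diag(Y)=\ones}$, intersected suitably), $\nu_{\conepair}(0)=0$, and monotonicity over $\Cov$ follows from $\Cov\subseteq\Dist^*$ so that $Y\succeq 0$ in $\Dist$ pairs nonnegatively with the difference of two $\Cov$-ordered matrices. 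For positive definiteness of $\nu_{\conepair}$ I would feed a strictly feasible $Y$ (the matrix $\slater{Y}\in\int(\Dist)$ from \cref{eq:slater-Y-Dist}) against a nonzero $W\in\Cov$: since $W\in\Dist^*\setminus\set{0}$ and $\slater{Y}$ is interior, $\iprod{W}{\slater{Y}}>0$. The duality $\nu_{\conepair}^{\polar}=\nu_{\conepair}^{\,\polar}$ (i.e. that $\nu_{\conepair}^{\polar}$ as defined in \cref{eq:nu-polar-appendix-def} really is the gauge dual of $\nu_{\conepair}$) is exactly what the pair of formulations \cref{eq:nu-dual-def} and \cref{eq:nu-polar-A-K-supf} encode: the supremum form \cref{eq:nu-polar-A-K-supf} says $\nu_{\conepair}^{\polar}(Z)=\max\set{\iprod{Z}{X}:X\in\Cov,\ \nu_{\conepair}(X)\le 1}$ once one recognizes, via the dual form \cref{eq:nu-dual-def}, that $\nu_{\conepair}(X)\le 1$ is equivalent to the existence of $x$ with $\Diag(x)\succeq_{\Dist^*}X$ and $\iprodt{\ones}{x}\le 1$. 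Then $\nu_{\conepair}^{\polar}$ is a positive definite monotone gauge on $\Cov^*$ by the same reasoning as above (or simply because the gauge dual of a positive definite monotone gauge is again one, given $\Cov^{**}=\Cov$), and $\nu_{\conepair}^{\polar\polar}=\nu_{\conepair}$.

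Item (iii), $\maxq_{\conepair}\le\nu_{\conepair}$ and $\nu_{\conepair}^{\polar}\le\fevc_{\conepair}$, is then just a restatement of the outer inequalities already proved in \cref{thm:maxq-fevc-nu-nupolar}: the second inequality of \cref{eq:bound-primal} gives $\maxq_{\conepair}(W)\le\nu_{\conepair}(W)$ for every $W\in\Cov$, and the first inequality of \cref{eq:bound-dual} gives $\nu_{\conepair}^{\polar}(Z)\le\fevc_{\conepair}(Z)$ for every $Z\in\Cov^*$. Alternatively, and perhaps more in keeping with the gauge viewpoint, these two inequalities are \emph{equivalent} under gauge duality: $\maxq\le\nu$ on $\Cov$ is equivalent to $\nu^{\polar}\le\maxq^{\polar}=\fevc$ on $\Cov^*$, because dualizing reverses pointwise inequalities between positive definite monotone gauges; so one really only needs one of them, and $\maxq\le\nu$ is the trivial one since every $\stensor{U}$ with $U\in\Fcal(\Dist)$ is feasible in \cref{nu-supf-def}. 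The main obstacle I anticipate is not any single hard computation but rather the bookkeeping around the non-full-dimensional cone $\Cov$: one must be careful that ``dual cone'' means the dual within $\aff(\Cov)$, that the gauge duality $\phi^{\polar\polar}=\phi$ is being applied only where $\Cov^{**}=\Cov$ (guaranteed by pointedness via \cref{eq:Cov-properties,eq:Cov-dual-involution}), and that the order relations $\preceq_{\Cov}$, $\preceq_{\Cov^*}$, and $\preceq_{\lift{\Cov^*}}$ are correctly matched up through the projector identity \cref{eq:Cov-dual} when checking monotonicity and positive definiteness; once that dictionary is set up, each of the three items reduces to a short argument.
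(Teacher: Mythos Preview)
Your approach is essentially the paper's: verify gauge, monotonicity, and positive definiteness for each of the four functions directly from the paired formulations \cref{eq:maxq-def-appendix,eq:fevc-def-appendix,eq:nu-def-appendix,eq:nu-polar-appendix-def}, read off the duality relations from the sup-forms, and derive item~(iii) from the trivial inclusion \(\stensor{U}\in\Ecal(\Dist)\) (the paper does not cite \cref{thm:maxq-fevc-nu-nupolar} here, but your alternative argument via gauge duality is exactly what it does).

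There is one concrete slip. For positive definiteness of \(\fevc_{\conepair}\) on \(\Cov^*\) you invoke \cref{eq:Dist-slater-point}, i.e.\ \(\int(\cone(\CUT^{\Dist}))\neq\emptyset\), to produce a feasible \(X\in\Cov\) with \(\iprod{Z}{X}>0\). That assumption lives in \(\Dist\), not in \(\Cov\), and gives you no handle on elements of \(\Cov\); it is the assumption you need for positive definiteness of \(\maxq_{\conepair}\) and \(\nu_{\conepair}\) on \(\Cov\) (where the test vector lies in \(\Dist\) and the argument lies in \(\Cov\subseteq\Dist^*\)). For the \(\Cov^*\)-side you need \cref{eq:Cov-slater-point}/\cref{eq:Cov-properties}: a nonzero \(\slater{X}\in\ri(\Cov)\) satisfies \(\iprod{Z}{\slater{X}}>0\) for every nonzero \(Z\in\Cov^*\) (this is exactly the argument the paper carries out for \(\nu^{\polar}_{\conepair}\), then lifts to \(\fevc_{\conepair}\) via \(\fevc\ge\nu^{\polar}\)). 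Swap the citations and your sketch goes through.
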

\begin{proof}

The fact that \(\maxq_{\conepair}\) and \(\nu_{\conepair}\) are gauges
follows directly from their definitions in \cref{eq:maxq-def-appendix}
and \cref{nu-supf-def}.
The monotonicity of \(\maxq_{\conepair}\) and \(\nu_{\conepair}\) is a
direct consequence of~\cref{eq:non-negativity}.
Next we show that \(\maxq_{\conepair}\) is positive definite.
Let \(\slater{y} \in \Fcal(\Dist)\) and \(\slater{Y} \in \int(\Dist)\)
be as in~\cref{eq:slater-Y-Dist}.
Then
\[
  0 < \iprod{\slater{Y}}{W}
  = \sum_{U \in \Fcal(\Dist)} \slater{y}_U^{}\iprod{\stensor{U}}{W}
  \text{ for every nonzero } W \in \Cov \subseteq \Dist^*.
\]
Thus there exists \(U \in \Fcal(\Dist)\) such that
\(\maxq_{\conepair}(W) \ge \qform{W}{s_U} > 0\).
This implies that \(\maxq_{\conepair}\) is positive definite.
Since
\[\maxq_{\Dist, \Cov}(W) =
\max\setst{\qform{W}{s_U}}
{U \in \Fcal(\Dist)}
\le
\max\setst{\iprod{W}{Y}}
{
    Y \in \Dist,\,
    \diag(Y) = \ones
}
= \nu_{\Dist, \Cov}(W),\]
it follows that \(\nu_{\conepair}\) is positive definite. Thus,
\(\maxq_{\conepair}\) and \(\nu_{\conepair}\) are positive definite
monotone gauges such that \(\maxq_{\conepair} \le \nu_{\conepair}\).

%We have that~\cref{eq:slater-points} ensures~\cref{eq:fevc-gaugef-appendix}
%has a Slater point, so the equality and attainment
%in~\cref{eq:fevc-supf} are justified.
%If \(\Cov = \set{0}\), then it is clear that equality and attainment
%happens in both~\cref{eq:fevc-gaugef-appendix} and~\cref{eq:fevc-supf}.
%If \(\Cov \neq \set{0}\), let \(\slater{X} \in \ri(\Cov)\) be nonzero.
%Then \(\maxq_{\conepair}(\slater{X}) > 0\) since \(\maxq_{\conepair}\) is %positive definite, so \(\maxq_{\conepair}(
%\slater{X})^{-1} \slater{X}\) is a Slater point
%for~\cref{eq:fevc-supf}, and attainment in~\cref{eq:fevc-gaugef-appendix} is
%justified.
%In both cases,~\cref{eq:fevc-def-appendix} is justified.

The fact that \(\fevc_{\conepair}\) and \(\nu^{\polar}_{\conepair}\)
are gauges follows directly from \cref{eq:fevc-supf} and
\cref{eq:nu-polar-A-K-supf}.
We now prove \(\fevc_{\conepair}\) to be monotone.
Let \(P \colon \Sym{n} \to \Sym{n}\) denote the orthogonal projector
onto \(\aff(\Cov)\).
Let \(Z_0, Z_1 \in \Cov^*\) be such that \(Z_0 \preceq_{\Cov^*} Z_1\).
Let \(y \in \Lp{\Fcal(\Dist)}\) be such that \(\sum_{U \in
  \Fcal(\Dist)} y_U^{}\stensor{U} \succeq_{\lift{\Cov^*}} Z_1\).
By~\cref{eq:Cov-dual},
\begin{detailedproof}
  P\paren[\bigg]{
    \sum_{U \in \Fcal(\Dist)} y_U^{}\stensor{U}
  } \succeq_{\Cov^*} P(Z_1)
  = Z_1
  \succeq_{\Cov^*} Z_0
  = P(Z_0).
\end{detailedproof}
By~\cref{eq:Cov-dual}, there exists \(\hat{Y} \in \lift{\Cov^*}\) such
that
\(
  P\paren[\bigg]{
    \sum_{U \in \Fcal(\Dist)} y_U^{}\stensor{U}
    - Z_0
  } = P(\lift{Y}).
\)
Since \(\Null(P) = \aff(\Cov)^{\perp}\),
from~\cref{eq:aff-perp-subseteq-lift} we conclude
\[
  \sum_{U \in \Fcal(\Dist)} y_U^{}\stensor{U} - Z_0 - \lift{Y}
  \in \aff(\Cov)^{\perp}
  \subseteq \lift{\Cov^*},
\]
which ensures
\[
  \sum_{U \in \Fcal(\Dist)} y_U^{}\stensor{U}
  \succeq_{\lift{\Cov^*}} Z_0 + \lift{Y}
  \succeq_{\lift{\Cov^*}} Z_0.
\]
Hence, by~\cref{eq:fevc-gaugef-appendix}, we have that
\(\fevc_{\conepair}(Z_0) \le \fevc_{\conepair}(Z_1)\).
Similarly, let \(Y \in \Dist\) and \(\mu \in \Lp{}\) be such that
\(\diag(Y) = \mu\ones\) and \(Y \succeq_{\lift{\Cov^*}} Z_1\).
Then \cref{eq:Cov-dual} implies \(P(Y) \succeq_{\Cov^*} Z_1
\succeq_{\Cov^*} Z_0 = P(Z_0)\), so \(P(Y - Z_0) = P(\lift{X})\) for
some \(\lift{X} \in \lift{\Cov^*}\) by~\cref{eq:Cov-dual}.
Hence \(Y - Z_0 - \lift{X} \in \aff(\Cov)^{\perp} \subseteq
\lift{\Cov^*}\) by~\cref{eq:aff-perp-subseteq-lift}, and hence
\[
  Y \succeq_{\lift{\Cov^*}} Z_0 + \lift{X} \succeq_{\lift{\Cov^*}} Z_0.
\]
From~\cref{eq:nu-polar-A-K-gaugef-appendix} we conclude
\(\nu_{\conepair}^{\polar}\) is monotone.

By~\cref{eq:Cov-properties}, there exists \(\slater{X} \in \ri(\Cov)
\setminus \set{0}\).
Then \(\nu_{\conepair}(\slater{X}) > 0\), and hence we may assume
\(\nu_{\Dist, \Cov}(\slater{X}) = 1\).
We claim that
\begin{equation}
  \label{eq:nu-polar-pd}
  \iprod{Z}{\slater{X}} > 0
  \text{ for every nonzero }
  Z \in \Cov^*.
\end{equation}
Note that this implies via~\cref{eq:nu-polar-A-K-supf} that
\(\nu^{\polar}_{\Dist, \Cov}(Z) \ge \iprod{Z}{\slater{X}} > 0\) for
every nonzero \(Z \in \Cov^*\), so \(\nu^{\polar}_{\conepair}\) is
positive definite.
We now prove~\cref{eq:nu-polar-pd}.
Let \(\hat{Y} \in \lift{\Cov^*}\) and let \(P \colon \Sym{n} \to
\Sym{n}\) be the orthogonal projector onto \(\aff(\Cov)\).
Assume that \(P(\hat{Y}) \neq 0\).
By~\cref{eq:Cov-dual}, it suffices to prove
\(\iprod{P(\hat{Y})}{\slater{X}} > 0\).
Let \(\eps \in (0, 1)\).
Note that \(\slater{X} - \eps P(\hat{Y}) \in \aff(\Cov)\), since
\((1 - \eps)^{-1}\slater{X} \in \Cov\) and \(-P(\hat{Y}) \in
\aff(\Cov)\), so
\[
  \slater{X} - \eps P(\hat{Y})
  = (1 - \eps)\paren[\bigg]{\frac{1}{1 - \eps}\slater{X}}
  + \eps(-P(\hat{Y}))
  \in \aff(\Cov).
\]
Since \(\slater{X} \in \ri(\Cov)\), there exists \(\bar{\eps} > 0\) such
that \(\slater{X} - \bar{\eps} P(\hat{Y}) \in \Cov\).
Using that \(\iprod{\slater{X}}{\hat{Y}} =
\iprod{P(\slater{X})}{\hat{Y}} = \iprod{\slater{X}}{P(\hat{Y})}\)
and \(\iprod{P(\hat{Y})}{\hat{Y}} = \iprod{P(\hat{Y})}{P(\hat{Y})}\),
we use~\cref{eq:Cov-dual} to conclude
\begin{equation*}
  0
  \le \iprod{\slater{X} - \bar{\eps} P(\hat{Y})}{\hat{Y}}
  = \iprod{\slater{X}}{\hat{Y}} - \bar{\eps} \iprod{P(\hat{Y})}{\hat{Y}}
%  = \iprod{\slater{X}}{\hat{Y}} - \bar{\eps}\iprod{P(\hat{Y})}{P(\hat{Y})}
  = \iprod{\slater{X}}{P(\hat{Y})} - \bar{\eps}\iprod{P(\hat{Y})}{P(\hat{Y})},
\end{equation*}
so \cref{eq:nu-polar-pd} holds.
Since
\begin{equation*}
\begin{split}
  \fevc_{\Dist, \Cov}(Z)
  &= \min\setst[\bigg]{
    \iprodt{\ones}{y}
  }{
    y \in \Lp{\Fcal(\Dist)},\,
    \sum_{U \in \Fcal(\Dist)} y_U^{} \stensor{U}
    \succeq_{\lift{\Cov^*}} Z
  }\\
  &\ge
  \min\setst[\bigg]{
    \mu \in \Lp{}
  }{
    Y \in \Dist,\, \diag(Y) = \mu\ones,\,
    Y \succeq_{\lift{\Cov^*}} Z
  }= \nu^{\polar}_{\Dist, \Cov}(Z).
\end{split}
\end{equation*}
it follows that \(\fevc_{\conepair}\) is positive definite.
Thus, \(\fevc_{\conepair}\) and \(\nu^{\polar}_{\conepair}\) are
positive definite monotone gauges such that \(\fevc_{\conepair} \ge
\nu^{\polar}_{\conepair}\).

It is immediate from~\cref{eq:fevc-supf} that
\(\maxq_{\Dist, \Cov}^{\polar}  = \fevc_{\Dist, \Cov}\).
We have that
\begin{equation}
  \label{eq:nu-polar-polar}
  \nu_{\Dist, \Cov}(W)
  = \max\setst{
    \iprod{W}{Z}
  }{
    Z \in \Cov^*,\,
    \nu^{\polar}_{\Dist, \Cov}(Z) \le 1
  }
  \text{ for every }
  W \in \Cov.
\end{equation}
Moreover,
\begin{detailedproof}
  &\max\setst{
    \iprod{W}{Y}
  }{
    Y \in \Dist,\,
    \diag(Y) = \ones
  }\\
  &= \max\setst{
    \iprod{W}{P(Y)}
  }{
    Y \in \Dist,\,
    \diag(Y) = \ones
  }
  &&\text{since \(W \in \Cov \subseteq \Null(\Lcal)\)}\\
  &\le \max\setst{
    \iprod{W}{Z}
  }{
    Z \in \Cov^*,\,
    Y \in \Dist,\, \diag(Y) = \ones,\,
    Y \succeq_{\lift{\Cov^*}} Z
  }
  &&\text{by~\cref{eq:Dist-to-dual-Cov}}\\
  &\le \max\setst{
    \iprod{W}{Y}
  }{
    Y \in \Dist,\, \diag(Y) = \ones
  }.
  &&\text{since \(W \in \Cov \subseteq \lift{\Cov}\)}
\end{detailedproof}
Hence equality holds throughout, which
implies~\cref{eq:nu-polar-polar}
via~\cref{eq:nu-polar-A-K-gaugef-appendix}.
\end{proof}

\section{Polyhedral Cones}
\label{sec:polyhedral-cones}

Let \(\Acal \colon \Reals^d \to \Sym{n}\) be a linear map, and set
\(\Cov \coloneqq \Acal(\Lp{d})\).
We have that~\cref{eq:Cov-slater-point} always hold.
Indeed, we may write
\[
  \Acal(\Lp{d})
  = \setst{X \in \Sym{n}}{\Lcal(X) = 0,\, \Bcal(X) \ge 0}
\]
with \(\Lcal \colon \Sym{n} \to \Reals^k\) and \(\Bcal \colon \Sym{n}
\to \Reals^{\ell}\) linear transformations such that there exists
\(\slater{X} \in \Sym{n}\) such that \(\Lcal(X) = 0\) and \(\Bcal(X) >
0\).
Since \(\lift{\Cov} \coloneqq \setst{X \in \Sym{n}}{\Bcal(X) \ge 0}\)
has nonempty interior, we have that~\cref{eq:Cov-slater-point} holds.
Note further that \(\Null(\Lcal) = \Img(\Acal)\).
We further have that
\begin{equation}
  \label{eq:2}
  X \succeq_{\lift{\Cov^*}} Y
  \text{ if and only if }
  \Acal^*(X) \ge \Acal^*(Y)
  \quad
  \text{for every }
  X,\,Y \in \Sym{n}.
\end{equation}

\begin{proposition}
\label{prop:polyhedral-cone-pointed}
Let \(\Dist, \Cov\) be closed convex cones such
that~\cref{eq:conic-assumptions-appendix} holds,
where \(\Cov \coloneqq \Acal(\Lp{d})\) for a linear map
\(\Acal \colon \Reals^d \to \Sym{n}\) such that \(\Acal(e_i) \neq 0\)
for each \(i \in [d]\).
If \(w \in \Lp{d}\) is such that \(\Acal(w) = 0\), then \(w = 0\).
\end{proposition}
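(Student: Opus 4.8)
The plan is to deduce the claim directly from the fact that \(\Cov\) is pointed, which is guaranteed by \cref{eq:Cov-properties} under the standing assumption \cref{eq:conic-assumptions-appendix}. Recall that a convex cone is \emph{pointed} precisely when it contains no line, equivalently when \(\Cov \cap (-\Cov) = \set{0}\). So the first thing I would record is that pointedness is available here, via \cref{eq:Cov-properties}, which in turn follows from \(\int(\Dist) \neq \emptyset\) (so that \(\Dist^*\) is pointed) together with \(\Cov \subseteq \Dist^*\) from \cref{eq:non-negativity}.

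Next I would note the elementary but crucial observation that \(\Acal(v) \in \Cov\) for every \(v \in \Lp{d}\), since \(\Cov = \Acal(\Lp{d})\); in particular \(\Acal(e_i) \in \Cov\) for each \(i \in [d]\).

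The main step is then a short argument by contradiction. Suppose there were some \(w \in \Lp{d} \setminus \set{0}\) with \(\Acal(w) = 0\), and pick an index \(j \in [d]\) with \(w_j > 0\). Rearranging \(0 = \Acal(w) = \sum_{i \in [d]} w_i \Acal(e_i)\) yields
\[
  w_j \Acal(e_j) = -\sum_{i \in [d] \setminus \set{j}} w_i \Acal(e_i).
\]
The left-hand side lies in \(\Cov\), being a nonnegative multiple of \(\Acal(e_j) \in \Cov\); the right-hand side lies in \(-\Cov\), being minus a nonnegative combination of the elements \(\Acal(e_i) \in \Cov\) (with the empty sum interpreted as \(0 \in \Cov\), which covers the degenerate case where \(j\) is the unique positive coordinate of \(w\)). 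Hence \(w_j \Acal(e_j) \in \Cov \cap (-\Cov) = \set{0}\) by pointedness of \(\Cov\), and since \(w_j > 0\) this forces \(\Acal(e_j) = 0\), contradicting the hypothesis \(\Acal(e_i) \neq 0\) for each \(i \in [d]\). Therefore \(w = 0\).

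I do not anticipate a genuine obstacle in this argument; the only point that requires care is making sure pointedness of \(\Cov\) is legitimately in hand before invoking it, which is exactly what \cref{eq:Cov-properties} provides under the ambient assumptions \cref{eq:conic-assumptions-appendix}.
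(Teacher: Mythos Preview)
Your argument is correct and follows essentially the same approach as the paper's proof: both derive a contradiction with the pointedness of \(\Cov\) established in \cref{eq:Cov-properties}. The paper's proof is simply the one-line version of your argument, asserting that a nonzero \(w \in \Lp{d}\) with \(\Acal(w) = 0\) would make \(\Cov\) non-pointed; your rearrangement \(w_j \Acal(e_j) = -\sum_{i \neq j} w_i \Acal(e_i) \in \Cov \cap (-\Cov)\) is precisely the detail that justifies that assertion.
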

\begin{proof}

If there exists nonzero \(w \in \Lp{d}\) and such that \(\Acal(w) =
0\), then \(\Cov\) is not pointed,
contradicting~\cref{eq:Cov-properties}.
\end{proof}

\begin{theorem}
\label{thm:projected-gauges}
Let \(\Dist, \Cov \subseteq \Sym{n}\) be closed convex cones such
that~\cref{eq:conic-assumptions-appendix} holds,
where \(\Cov \coloneqq \Acal(\Lp{d})\) for a linear map
\(\Acal \colon \Reals^d \to \Sym{n}\) such that \(\Acal(e_i) \neq 0\)
for each \(i \in [d]\).
Then \(\nu_{\Dist,\Acal} \colon \Lp{d} \to \Reals\) defined by
\begin{equation}
  \label{eq:nu-projected-def}
  \nu_{\Dist, \Acal}(w)
  \coloneqq \nu_{\conepair}(\Acal(w))
  \mathrlap{
  \qquad
  \text{ for every }
  w \in \Lp{d}}
\end{equation}
is a positive definite monotone gauge, and its dual is the positive
definite monotone gauge
\begin{equation}
  \label{eq:nu-polar-projected-def}
  \begin{aligned}
    \nu_{\Dist,\Acal}^{\polar}(z)
    &=
    \min\setst{
    \nu_{\conepair}^{\polar}(Z)
    }{
    Z \in \Cov^*,\,
    \Acal^*(Z) \ge z
    }\\
    &= \min\setst{
      \mu
      }{
      \mu \in \Lp{},\,
      Y \in \Dist,\,
      \diag(Y) = \mu\ones,\,
      \Acal^*(Y) \ge z
      }
  \end{aligned}
\end{equation}
for every \(z \in \Lp{d}\).
Similarly, \(\maxq_{\Dist,\Acal} \colon \Lp{d} \to \Reals\) defined by
\begin{equation}
  \label{eq:maxq-projected-def}
  \maxq_{\Dist, \Acal}(w)
  \coloneqq \maxq_{\conepair}(\Acal(w))
  \mathrlap{
  \qquad
  \text{ for every }
  w \in \Lp{d}}
\end{equation}
is a positive definite monotone gauge, and its dual is the positive
definite monotone gauge
\begin{equation}
  \label{eq:fevc-projected-def}
  \begin{aligned}
    \fevc_{\Dist,\Acal}(z)
    &=
    \min\setst{
    \fevc_{\conepair}(Z)
    }{
      Z \in \Cov^*,\,
      \Acal^*(Z) \ge z
    }\\
    &= \min\setst[\bigg]{
      \iprodt{\ones}{y}
    }{
      y \in \Lp{\Fcal(\Dist)},\,
      \sum_{U \in \Fcal(\Dist)} y_U^{}\Acal^*(\stensor{U}) \ge z
    }
  \end{aligned}
\end{equation}
for every \(z \in \Lp{d}\).
\end{theorem}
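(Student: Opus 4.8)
The plan is to treat \(\nu_{\Dist,\Acal}\) and \(\maxq_{\Dist,\Acal}\) uniformly as pullbacks \(\varphi\circ\Acal\) of a positive definite monotone gauge \(\varphi\) on \(\Cov\) (namely \(\varphi=\nu_{\conepair}\), respectively \(\varphi=\maxq_{\conepair}\), which are positive definite monotone gauges by \cref{thm:conic-assumptions}). I would first note that \(\Acal\) maps \(\Lp{d}\) onto \(\Cov\) by definition and satisfies \(\Acal^{-1}(0)\cap\Lp{d}=\set{0}\) by \cref{prop:polyhedral-cone-pointed}. For any positive definite monotone gauge \(\varphi\) on \(\Cov\), the composite \(\varphi\circ\Acal\) on \(\Lp{d}\) is then positively homogeneous and sublinear (as \(\varphi\) is and \(\Acal\) is linear), vanishes at \(0\), is monotone — since \(0\le w\le w'\) in \(\Lp{d}\) gives \(\Acal(w')-\Acal(w)=\Acal(w'-w)\in\Acal(\Lp{d})=\Cov\), i.e.\ \(\Acal(w)\preceq_{\Cov}\Acal(w')\) — and is positive definite, since \(\Acal(w)\neq 0\) for every nonzero \(w\in\Lp{d}\). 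Hence \(\nu_{\Dist,\Acal}\) and \(\maxq_{\Dist,\Acal}\) are positive definite monotone gauges, and so are their duals; it then remains to identify those duals with the right-hand sides of \cref{eq:nu-polar-projected-def} and \cref{eq:fevc-projected-def}.

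For \(\nu_{\Dist,\Acal}^{\polar}\), I would start from the definition of the dual gauge, \(\nu_{\Dist,\Acal}^{\polar}(z)=\max\setst{\iprodt{z}{w}}{w\in\Lp{d},\ \nu_{\conepair}(\Acal(w))\le 1}\), and substitute the conic-dual formula \cref{eq:nu-dual-def} for \(\nu_{\conepair}\) to turn the constraint \(\nu_{\conepair}(\Acal(w))\le 1\) into the solvability of \(\Diag(x)\succeq_{\Dist^*}\Acal(w)\), \(\iprodt{\ones}{x}\le 1\); this is the first equality of \cref{eq:polyhedral-nu-polar}. Passing to the conic Lagrangian dual of this maximization, via the adjoint identities \(\iprod{Y}{\Acal(w)}=\iprodt{\Acal^*(Y)}{w}\) and \(\iprod{Y}{\Diag(x)}=\iprodt{\diag(Y)}{x}\), produces the minimization \cref{eq:polyhedral-nu-polar-min}, with weak duality being the chain \(\iprodt{z}{w}\le\iprodt{\Acal^*(Y)}{w}=\iprod{Y}{\Acal(w)}\le\iprod{Y}{\Diag(x)}=\mu\iprodt{\ones}{x}\le\mu\). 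To close the gap and obtain attainment of the minimum, I would exhibit a strictly feasible point of the maximization — e.g.\ \((w,x)=(\varepsilon\ones,\tfrac{1}{2n}\ones)\) with \(\varepsilon>0\) small, since \(\varepsilon\ones\in\int(\Lp{d})\), the matrix \(\tfrac{1}{2n}I-\varepsilon\Acal(\ones)\) is positive definite and hence lies in \(\int(\Dist^*)\) because \(\Psd{n}\subseteq\Dist^*\) (dualizing \cref{eq:dist-psd}), and \(\iprodt{\ones}{\tfrac{1}{2n}\ones}=\tfrac12<1\) — together with finiteness of the value (the sublevel set \(\setst{w\in\Lp{d}}{\nu_{\Dist,\Acal}(w)\le 1}\) is compact by positive definiteness), and then invoke conic strong duality \cite{Nemirovski2024}. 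Finally, to match \cref{eq:polyhedral-nu-polar-min} with \(\min\setst{\nu_{\conepair}^{\polar}(Z)}{Z\in\Cov^*,\ \Acal^*(Z)\ge z}\) I would unfold \cref{eq:nu-polar-A-K-gaugef-appendix} and rewrite \(Y\succeq_{\lift{\Cov^*}}Z\) as \(\Acal^*(Y)\ge\Acal^*(Z)\) using \cref{eq:2}: one inequality is the chain \(\Acal^*(Y)\ge\Acal^*(Z)\ge z\), and for the other, given a feasible \((\mu,Y)\) with \(\Acal^*(Y)\ge z\), I would take \(Z\coloneqq P(Y)\) with \(P\) the orthogonal projection onto \(\aff(\Cov)=\Img(\Acal)\), so that \(Z\in\Cov^*\) by \cref{eq:Dist-to-dual-Cov} and \(\Acal^*(Z)=\Acal^*(Y)\) because \(Y-P(Y)\in\Img(\Acal)^{\perp}=\Null(\Acal^*)\).

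The computation of \(\maxq_{\Dist,\Acal}^{\polar}\) follows the same template with linear programming duality in place of conic duality: by \cref{eq:maxq-def-appendix} the constraint \(\maxq_{\conepair}(\Acal(w))\le 1\) is the finite system \(\setst{\iprodt{\Acal^*(\stensor{U})}{w}\le 1}{U\in\Fcal(\Dist)}\), so \(\maxq_{\Dist,\Acal}^{\polar}(z)\) is a linear program whose dual is \(\min\setst{\iprodt{\ones}{y}}{y\in\Lp{\Fcal(\Dist)},\ \sum_{U}y_U\Acal^*(\stensor{U})\ge z}\); this LP is feasible (\(w=0\)) and has finite value (compactness again), so LP strong duality gives the second equality of \cref{eq:fevc-projected-def}, and the identification with \(\min\setst{\fevc_{\conepair}(Z)}{Z\in\Cov^*,\ \Acal^*(Z)\ge z}\) proceeds exactly as in the previous paragraph, now unfolding \cref{eq:fevc-gaugef-appendix}, using \cref{eq:2}, and sending a feasible \(y\) to \(Z\coloneqq P(\sum_{U}y_U\stensor{U})\), where \(\sum_{U}y_U\stensor{U}\in\Dist\) since \(\stensor{U}\in\Dist\) for each \(U\in\Fcal(\Dist)\). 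I expect the main obstacle to be the strong-duality step for \(\nu_{\Dist,\Acal}^{\polar}\): it is not a single Slater check but requires simultaneously producing a strictly feasible point of the maximization and certifying that it is bounded above, which is precisely where \cref{eq:dist-psd} and the strict-feasibility hypotheses on \(\Dist\) enter; the gauge axioms and the projection-based identification of the two minimum formulas are routine given the development in \cref{sec:conic-maxq-fevc}.
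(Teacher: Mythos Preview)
Your proposal is correct and follows essentially the same approach as the paper: establish the gauge axioms for \(\varphi\circ\Acal\) via \cref{thm:conic-assumptions} and \cref{prop:polyhedral-cone-pointed}, compute the dual by substituting the conic-dual formula for \(\varphi\) and invoking strong duality, then identify the two minimum formulations via the projection \(P\) and \cref{eq:2,eq:Dist-to-dual-Cov}. The only cosmetic difference is that the paper verifies Slater's condition on \emph{both} sides of the conic pair, whereas you verify it on the maximization side and appeal to boundedness (compactness of the sublevel set of a positive definite gauge) to close the gap; both routes yield the required equality and attainment of the minimum.
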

\begin{proof}

We first prove~\cref{eq:nu-projected-def} to be a positive definite
monotone gauge.
As the composition of the gauge \(\nu_{\Dist, \Cov}\) with a
linear function, it is immediate that \(\nu_{\Dist,\Acal}\) is a gauge.
If \(0 \le w \le v\), then \(0 \preceq_{\Cov} \Acal(w) \preceq_{\Cov}
\Acal(v)\), so monotonicity of \(\nu_{\Dist, \Acal}\) follows from
the monotonicity part of \cref{thm:conic-assumptions}, \cref{item:nu-gauges}.
Let \(w \in \Lp{d}\) be such that \(\nu_{\Dist, \Acal}(w) = 0\).
Then \(\nu_{\conepair}(\Acal(w)) = 0\),
so \cref{thm:conic-assumptions} implies that \(\Acal(w) = 0\).
Hence \(w = 0\) by \cref{prop:polyhedral-cone-pointed}.
Thus \(\nu_{\Dist, \Acal}\) is a positive definite monotone
gauge.
Hence
\begin{detailedproof}
  \nu_{\Dist, \Acal}^{\polar}(z)
  &= \max\setst{
    \iprodt{z}{w}
  }{
    w \in \Lp{d},\, \nu_{\Dist, \Acal}(w) \le 1
  }\\
  &= \max\setst{
    \iprodt{z}{w}
  }{
    w \in \Lp{d},\, \nu_{\conepair}(\Acal(w)) \le 1
  }\\
  &= \max\setst{
    \iprodt{z}{w}
  }{
    w \in \Lp{d},\,
    x \in \Reals^n,\,
    \Diag(x) \succeq_{\Dist^*} \Acal(w),\,
    \iprodt{\ones}{x} \le 1
  }
  \\
  &= \min\setst{
    \mu
  }{
    \mu \in \Lp{},\,
    Y \in \Dist,\,
    \diag(Y) = \mu\ones,\,
    \Acal^*(Y) \ge z
  }.
\end{detailedproof}
Let \(\alpha > 0\) be such that \(\lambdamax(\Acal(\alpha \ones)) <
1\).
Then \((\slater{w}, \slater{x}) \coloneqq (\tfrac{\alpha}{2n}\ones,
\tfrac{1}{2n}\ones)\) is strictly feasible in the second to last
optimization problem, since \(\Diag(\slater{x}) - \Acal(\slater{w}) =
\tfrac{1}{2n} I - \tfrac{\alpha}{2n}\Acal(\ones) \in
\int(\Psd{n}) \subseteq \int(\Dist^*)\) and
\(\iprodt{\ones}{\slater{x}} < 1\).
For \(\slater{Y}\) as in~\cref{eq:slater-Y-Dist}, since \(\slater{Y}
\in \int(\Dist) \subseteq \int(\lift{\Cov^*})\), we have that
\(\Acal^*(Y) > 0\) from~\cref{eq:2}, and thus \(\Acal^*(\alpha
\slater{Y}) - z = \alpha(\Acal^*(\slater{Y}) - \tfrac{1}{\alpha} z) >
0\) for \(\alpha \in \Lp{}\) big enough.
Thus the last optimization problem is also strictly feasible.
Hence
\begin{detailedproof}
  \nu_{\Dist, \Acal}^{\polar}(z)
  &= \min\setst{
    \mu
  }{
    \mu \in \Lp{},\,
    Y \in \Dist,\,
    \diag(Y) = \mu\ones,\,
    \Acal^*(Y) \ge z
  }\\
  &= \min\setst{
    \mu
  }{
    \mu \in \Lp{},\,
    Y \in \Dist,\,
    Z \in \Cov^*,\,
    \diag(Y) = \mu\ones,\,
    Y \succeq_{\lift{\Cov^*}} Z,\,
    \Acal^*(Z) \ge z
  }\\
  &= \min\setst{
    \nu_{\conepair}^{\polar}(Z)
  }{
    Z \in \Cov^*,\,
    \Acal^*(Z) \ge z
  }.
\end{detailedproof}
The second equation holds because \((\mu, Y) \mapsto (\mu, Y, P(Y))\)
and \((\mu, Y, Z) \mapsto (\mu, Y)\) map feasible solutions between
both problems while preserving objective value by~\cref{eq:Dist-to-dual-Cov}.

That~\cref{eq:maxq-projected-def} is a positive definite monotone
gauge follows from \cref{thm:conic-assumptions} and
\cref{prop:polyhedral-cone-pointed} as above.
Hence
\begin{detailedproof}
  &\maxq_{\Dist, \Acal}^{\polar}(z)\\
  &= \max\setst{
    \iprodt{z}{w}
  }{
    w \in \Lp{d},\, \maxq_{\conepair}(\Acal(w)) \le 1
  }\\
  &= \max\setst{
    \iprodt{z}{w}
  }{
    w \in \Lp{d},\,
    \iprod{w}{\Acal^*(\stensor{U})} \le 1
    \text{ for every }
    U \in \Fcal(\Dist)
  }\\
  &= \min\setst[\bigg]{
    \iprodt{\ones}{y}
  }{
    y \in \Lp{\Fcal(\Dist)},\,
    \sum_{U \in \Fcal(\Dist)} y_{U}^{}\Acal^*(\stensor{U}) \ge z
  }
  &&\text{by LP Strong Duality}\\
  &= \min\setst[\bigg]{
    \iprodt{\ones}{y}
  }{
    Z \in \Cov^*,\,
    y \in \Lp{\Fcal(\Dist)},\,
    \sum_{U \in \Fcal(\Dist)} y_{U}^{}\stensor{U} \succeq_{\lift{\Cov^*}} Z,\,
    \Acal^*(Z) \ge z
  }
  &&\text{by~\cref{eq:Dist-to-dual-Cov} and~\cref{eq:2}}\\
  &= \min\setst{
    \fevc(Z)
  }{
    Z \in \Cov^*,\,
    \Acal^*(Z) \ge z
  }.
  &&\text{by~\cref{eq:fevc-gaugef-appendix}}
  &&\qedhere
\end{detailedproof}
\end{proof}

\section{Concentration Results}
\label{sec:concentration}
In this section we prove the concentration results in
\cref{sec:rounding}. First we prove the results concerning the
polyhedral case.

\begin{proof}[Proof of \cref{prop:polyhedral-cone-concentration}]

Set \(S \coloneqq \sum_{t \in [T]} X_t\).
Also, define \(x_t \coloneqq \Acal^*(X_t)\) for every \(t \in [T]\),
and set \(s \coloneqq \Acal^*(S)\).
We also denote \(x \coloneqq \Acal^*(X)\), where \(X\) is the
random matrix in the statement.
Then, by linearity of expectation,
\begin{equation*}
  \Ebb[s]
  = T \Ebb[x]
  = T \Ebb[\Acal^*(X)]
  = T \Acal^*(\Ebb[X])
  \ge T \eps \ones.
\end{equation*}
Let \(i \in [d]\).
Chernoff's bound and the previous inequality imply that
\begin{equation*}
  \prob\paren[\big]{s_i \le (1 - \Chernoff)\Ebb[s]_i}
  \le \exp\paren[\bigg]{
    - \Chernoff^2 \frac{\Ebb[s]_i}{2}
  }
  \le \exp\paren[\bigg]{
    - \frac{\Chernoff^2 \eps}{2}T
  }.
\end{equation*}
Hence, by the union bound,
\begin{detailedproof}
  \prob\paren{
    \exists i \in [d],\, s_i \le (1 - \Chernoff)\Ebb[s]_i
  }
  &\le d \exp\paren[\bigg]{
    - \frac{\Chernoff^2 \eps}{2} T
  }\\
  &\le \exp\paren[\bigg]{
    \log(d) - \frac{\Chernoff^2\eps}{2}
    \frac{2(\log(d) + \log(n))}{\Chernoff^2\eps}
  }
  = \frac{1}{n}.
\end{detailedproof}
Thus with probability at least \(1 - 1/n\) we have that \(s \ge (1 -
\Chernoff)\Ebb[s]\).
By~\cref{eq:2}, this event holds if and only if \(S
\succeq_{\lift{\Cov^*}} (1 - \Chernoff) \Ebb[S]\).
\end{proof}

\begin{proof}[Proof of \cref{prop:polyhedral-conic-sampling}]

Both \cref{polyhedral-conic-ineq} and~\cref{eq:rounding-constant}
imply that \(\Acal^*(\Ebb[\Round[Y]]) \ge \Xalpha \Acal^*(Y) \ge
\Xalpha \eps \ones\).
\Cref{prop:polyhedral-cone-concentration} implies that, with
probability at least \(1 - 1/n\),
\[
  \frac{1}{T} \sum_{t \in [T]} (\Round[Y])_t
  \succeq_{\lift{\Cov^*}} (1 - \Chernoff) \Ebb[\Round[Y]]
  \succeq_{\lift{\Cov^*}} (1 - \Chernoff) \Xalpha Y.
\]
Hence \(y \in \Lp{\Fcal(\Dist)}\) defined by \((1 - \Chernoff)\Xalpha \cdot y_U \coloneqq
\frac{1}{T}\card{\setst{t \in [T]}{(\Round[Y])_t = \stensor{U}}}\) for
every \(U \in \Fcal(\Dist)\) satisfies the desired properties.
\end{proof}

One case we treat separately is when \(\Cov = \Psd{n}\).
For this case, we use the following result by Tropp:
\begin{proposition}[{see \cite[Theorem~1.1]{Tropp2012}}]
\label{prop:Matrix-Chernoff}
Let \(\setst{X_t}{t \in T}\) be independent random matrices in
\(\Sym{n}\).
Let \(\rho \in \Reals\) be such that
\[
  0 \preceq X_t \preceq \rho I
  \text{ almost surely for every }
  t \in T.
\]
Set \(S \coloneqq \sum_{t \in T} X_t\).
Then for every \(\Chernoff \in (0, 1)\),
\[
  \prob\paren{
    \lambdamin(S)
    \le (1 - \Chernoff) \lambdamin(\Ebb S)
  }
  \le n \exp\paren[\bigg]{-\frac{\Chernoff^2}{2}\frac{\lambdamin(\Ebb S)}{\rho}}.
\]
\end{proposition}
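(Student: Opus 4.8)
The statement is the lower-tail Matrix Chernoff bound of \cite{Tropp2012}, so the plan is to sketch its proof via the matrix Laplace transform method of Ahlswede--Winter and Tropp. Write $\mu \coloneqq \lambdamin(\Ebb S)$ (which is nonnegative, as $\Ebb S \succeq 0$), and fix a parameter $\theta < 0$. Since $\lambdamax(\theta S) = \theta\,\lambdamin(S)$, the event that $\lambdamin(S) \le (1-\Chernoff)\mu$ implies $\operatorname{tr}\exp(\theta S) \ge \exp\paren{\theta(1-\Chernoff)\mu}$, so Markov's inequality applied to the nonnegative random variable $\operatorname{tr}\exp(\theta S)$ gives
\[
  \prob\paren{\lambdamin(S) \le (1-\Chernoff)\mu}
  \le \exp\paren{-\theta(1-\Chernoff)\mu}\,\Ebb\bigl[\operatorname{tr}\exp(\theta S)\bigr].
\]
It then remains to control the matrix moment generating function $\Ebb[\operatorname{tr}\exp(\theta S)]$ and optimize over $\theta$.

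The key step, and the only substantial obstacle, is the subadditivity-of-cumulants bound
\[
  \Ebb\bigl[\operatorname{tr}\exp(\theta S)\bigr]
  \le \operatorname{tr}\exp\Bigl(\,\sum_{t \in T}\log\Ebb\bigl[\exp(\theta X_t)\bigr]\Bigr),
\]
which follows from independence of the $X_t$ together with Lieb's concavity theorem (the map $A \mapsto \operatorname{tr}\exp(H + \log A)$ is concave on the positive definite cone), applied iteratively with Jensen's inequality. One could instead iterate the Golden--Thompson inequality in the Ahlswede--Winter style, but that loses a constant factor; Lieb's theorem is what yields the sharp $\rho$-dependence in the statement.

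Next I would estimate each matrix MGF using $0 \preceq X_t \preceq \rho I$. Convexity of $z \mapsto e^{\theta z}$ on $[0,\rho]$ gives $e^{\theta z} \le 1 + g(\theta)z$ there, where $g(\theta) \coloneqq (e^{\theta\rho}-1)/\rho < 0$; by the transfer rule for matrix functions and $I + A \preceq \exp(A)$, this yields $\Ebb[\exp(\theta X_t)] \preceq \exp\paren{g(\theta)\Ebb[X_t]}$. Operator monotonicity of the logarithm and summation give $\sum_{t}\log\Ebb[\exp(\theta X_t)] \preceq g(\theta)\,\Ebb[S]$, whence, using that $g(\theta) < 0$ together with monotonicity of $\operatorname{tr}\exp$ in the Loewner order,
\[
  \Ebb\bigl[\operatorname{tr}\exp(\theta S)\bigr]
  \le n\exp\paren{g(\theta)\,\lambdamin(\Ebb S)}
  = n\exp\paren{g(\theta)\mu}.
\]

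Finally, combining the two displays bounds the target probability by $n\exp\paren{\bigl(-\theta(1-\Chernoff) + g(\theta)\bigr)\mu}$. Choosing $\theta$ with $\theta\rho = \log(1-\Chernoff)$ makes the bracket equal to $\rho^{-1}\bigl(-\Chernoff - (1-\Chernoff)\log(1-\Chernoff)\bigr)$, and the elementary inequality $(1-\Chernoff)\log(1-\Chernoff) \ge -\Chernoff + \Chernoff^2/2$ (valid for $\Chernoff \in (0,1)$) turns this into $-\Chernoff^2/(2\rho)$, giving the claimed bound $n\exp\paren{-\tfrac{\Chernoff^2}{2}\tfrac{\lambdamin(\Ebb S)}{\rho}}$. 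This last optimization is the classical scalar Chernoff computation and presents no difficulty.
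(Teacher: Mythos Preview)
Your sketch is correct and complete; it is the standard matrix Laplace transform argument of Tropp, and your final elementary inequality $(1-\Chernoff)\log(1-\Chernoff) \ge -\Chernoff + \Chernoff^2/2$ is exactly the simplification the paper uses.

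The only difference is scope. The paper does not reprove Tropp's theorem: it simply invokes \cite[Theorem~1.1]{Tropp2012}, which already gives the bound $n\bigl[e^{-\Chernoff}/(1-\Chernoff)^{1-\Chernoff}\bigr]^{\lambdamin(\Ebb S)/\rho}$, and then records only the elementary step that this base is at most $e^{-\Chernoff^2/2}$ (derived by comparing the power series of $\log\frac{1}{1-\Chernoff}$ with $\Chernoff + \sum_{k\ge 2}\Chernoff^k/2$). Your proposal instead reconstructs the entire Lieb-concavity/MGF argument before arriving at the same simplification. What you gain is a self-contained treatment; what the paper gains is brevity, since the heavy lifting is outsourced to the citation.
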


\cref{prop:Matrix-Chernoff} weakens the upper bound from
\cite[Theorem~1.1]{Tropp2012} using that
\[
  \frac{\exp(-\Chernoff)}{(1 - \Chernoff)^{1 - \Chernoff}}
  \le \exp\paren[\bigg]{-\frac{\Chernoff^2}{2}},
\]
which follows from
\begin{equation*}
  \paren[\bigg]{1 - \frac{\Chernoff}{2}}\frac{\Chernoff}{1 - \Chernoff}
  = \Chernoff + \sum_{k = 2}^{\infty} \frac{\Chernoff^k}{2}
  \ge \Chernoff + \sum_{k = 2}^{\infty} \frac{\Chernoff^k}{k}
  = \log\paren[\bigg]{\frac{1}{1 - \Chernoff}}.
\end{equation*}

We prove the following result.
\begin{proposition}
\label{prop:psd-concentration}

Let \(\Chernoff \in (0, 1)\), let \(\tau,\,\rho \in \Lp{}\), and let
\(\bar{Y} \in \Sym{n}_{++}\).
Let \(X \colon \Omega \to \Sym{n}\) be a random matrix such that
\[
  0 \preceq X \preceq \rho \bar{Y}
  \text{ almost surely, and }
  \tau \bar{Y} \preceq \Ebb[X].
\]
Let \((X_t)_{t \in [T]}\) be independent identically distributed
random variables sampled from \(X\), for any
\begin{equation}
  \label{eq:psd-concentration-T}
  T \ge \ceil[\bigg]{
    \frac{4 \rho}{\Chernoff^2 \tau} \log(2n)
  }.
\end{equation}
Then, with probability at least \(1 - 1/2n\),
\[
  \frac{1}{T} \sum_{t \in [T]} X_t \succeq (1 - \Chernoff) \tau \bar{Y}.
\]
\end{proposition}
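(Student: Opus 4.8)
The plan is to reduce to the case \(\bar{Y} = I\) by a change of variables and then invoke Tropp's matrix Chernoff bound, i.e.\ \cref{prop:Matrix-Chernoff} (we may assume \(\tau,\rho > 0\), as otherwise the hypotheses force \(X = 0\) or \(\tau\bar{Y}\preceq 0\) and the claim is immediate). First I would conjugate everything by \(\bar{Y}^{-1/2}\): set \(X' \coloneqq \bar{Y}^{-1/2} X \bar{Y}^{-1/2}\) and \(X'_t \coloneqq \bar{Y}^{-1/2} X_t \bar{Y}^{-1/2}\) for \(t \in [T]\). Since \(M \mapsto \bar{Y}^{-1/2} M \bar{Y}^{-1/2}\) is a linear bijection of \(\Sym{n}\) that preserves the Löwner order, and since it commutes with expectation (each entry of \(X'\) is a fixed linear combination of entries of \(X\), which are integrable because \(X\) is bounded), the hypotheses \(0 \preceq X \preceq \rho\bar{Y}\) and \(\tau\bar{Y} \preceq \Ebb[X]\) translate into \(0 \preceq X'_t \preceq \rho I\) almost surely and \(\tau I \preceq \Ebb[X']\), with the \(X'_t\) being i.i.d.\ copies of \(X'\). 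Conjugating the target conclusion back the same way shows it suffices to prove that \(\tfrac{1}{T}\sum_{t \in [T]} X'_t \succeq (1-\Chernoff)\tau I\) with probability at least \(1 - 1/2n\).

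Next I would apply \cref{prop:Matrix-Chernoff} to the independent matrices \((X'_t)_{t \in [T]}\), which satisfy \(0 \preceq X'_t \preceq \rho I\). Write \(S' \coloneqq \sum_{t\in[T]} X'_t\) and \(m \coloneqq \lambdamin(\Ebb[S'])\); since \(\Ebb[S'] = T\,\Ebb[X'] \succeq T\tau I\) we get \(m \ge T\tau\). \Cref{prop:Matrix-Chernoff} then yields
\[
  \prob\paren{\lambdamin(S') \le (1-\Chernoff)m}
  \le n\exp\paren{-\tfrac{\Chernoff^2 m}{2\rho}}
  \le n\exp\paren{-\tfrac{\Chernoff^2 T\tau}{2\rho}},
\]
where the last inequality uses \(m \ge T\tau\). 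On the complementary event, \(\lambdamin(S') > (1-\Chernoff)m \ge (1-\Chernoff)T\tau\) because \(1-\Chernoff > 0\), which is exactly \(\tfrac{1}{T}S' \succeq (1-\Chernoff)\tau I\), as required.

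It then remains to check that the choice of \(T\) in \cref{eq:psd-concentration-T} forces the failure probability to be at most \(1/2n\). Taking logarithms, \(n\exp(-\tfrac{\Chernoff^2 T\tau}{2\rho}) \le \tfrac{1}{2n}\) is equivalent to \(\tfrac{\Chernoff^2 T\tau}{2\rho} \ge \log(2n^2)\), and since \(\log(2n^2) \le \log((2n)^2) = 2\log(2n)\) it is enough to have \(T \ge \tfrac{4\rho}{\Chernoff^2\tau}\log(2n)\), which holds by \cref{eq:psd-concentration-T}. Honestly there is no deep obstacle here: all of the probabilistic content is supplied by \cref{prop:Matrix-Chernoff}, and the only steps demanding any care are the elementary constant bookkeeping in this last paragraph --- making sure the factor \(4\) absorbs both the union-bound factor \(n\) and the target probability \(\tfrac{1}{2n}\), hence the passage from \(n^2\) to \((2n)^2\) --- together with the verification that conjugation by \(\bar{Y}^{-1/2}\) really does preserve every Löwner inequality appearing in the hypotheses and in the conclusion.
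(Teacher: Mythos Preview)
Your proof is correct and follows essentially the same approach as the paper: conjugate by \(\bar{Y}^{-1/2}\) to reduce to the case \(\bar{Y}=I\), apply \cref{prop:Matrix-Chernoff} to the resulting sum, use \(\lambdamin(\Ebb[S']) \ge T\tau\), and then verify that \(T \ge \tfrac{4\rho}{\Chernoff^2\tau}\log(2n)\) makes the failure probability at most \(1/(2n)\). The only cosmetic difference is in the final arithmetic---the paper loosens the Chernoff bound from \(n\) to \(2n\) before simplifying, whereas you keep \(n\) and bound \(\log(2n^2)\le 2\log(2n)\)---but the content is identical.
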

\begin{proof}

For every \(t \in [T]\), set \(Y_t \coloneqq
\bar{Y}^{-1/2}X_t \bar{Y}^{-1/2}\).
Then
\[
  0
  \preceq Y_t
  = \bar{Y}^{-1/2} X_t \bar{Y}^{-1/2}
  \preceq \rho \bar{Y}^{-1/2} \bar{Y} \bar{Y}^{-1/2}
  = \rho I
\]
for every \(t \in [T]\) almost surely.
Set \(Q \coloneqq \sum_{t \in [T]} Y_t\).
Since \(\Ebb[X] \succeq \tau \bar{Y}\),
\[
  \Ebb[Q]
  = \Ebb\sqbrac[\bigg]{\,\sum_{t \in [T]} Y_t}
  = \sum_{t \in [T]} \bar{Y}^{-1/2}\Ebb[X_t]\bar{Y}^{-1/2}
  \succeq T \tau I,
\]
which implies that \(\lambdamin(\Ebb[Q]) \ge T \tau\).
Hence
\begin{detailedproof}
  1 - \prob\paren[\bigg]{
    \frac{1}{T}\sum_{t \in [T]} X_t \succeq (1 - \Chernoff) \tau \bar{Y}
  }
  &=
  1 - \prob\paren{
    Q \succeq T(1 - \Chernoff)\tau I
  }\\
  &\le 1 - \prob\paren{Q \succeq (1 - \Chernoff)\lambdamin(\Ebb[Q]) I}
  &&\text{as \(\lambdamin(\Ebb[Q]) \ge T\tau\)}\\
  &\le \prob\paren{\lambdamin(Q) \le (1 - \Chernoff) \lambdamin(\Ebb[Q])}\\
  &\le 2n \exp\paren[\bigg]{-\Chernoff^2\frac{\lambdamin(\Ebb[Q])}{2\rho}}
  &&\text{by \cref{prop:Matrix-Chernoff}}\\
  &\le 2n\exp\paren[\bigg]{-\frac{\Chernoff^2 \tau}{2 \rho} T}
  &&\text{as \(\lambdamin(\Ebb[Q]) \ge T\tau\)}\\
  &\le \exp\paren[\bigg]{
    \log(2n) - \frac{\Chernoff^2\tau}{2\rho}\frac{4\rho\log(2n)}{\Chernoff^2\tau}
  }
  = \frac{1}{2n}.
  &&\text{by~\cref{eq:psd-concentration-T}}
  &&\qedhere
\end{detailedproof}
\end{proof}

For the more general case when \(\Cov\subseteq \Psd{n}\), we use the following result by Tropp which requires a bound on the spectral norm of the random matrix and uses its second moment:

\begin{theorem}[{see \cite[Corollary~6.2.1]{Tropp2015}}]
\label{thm:Bernstein}

Let \(T \in \Naturals\) be nonzero.
Let \(X\) be a random matrix in \(\Sym{n}\) such that
\(
  \norm{X} \le \rho
  \text{ almost surely}.
\)
Let \((X_t)_{t \in [T]}\) be i.i.d.\ random variables sampled from~\(X\).
Set \(\sigma^2 \coloneqq \norm{\Ebb[X^2]}\), set \(M
\coloneqq \Ebb[X]\), and set
\[
  E \coloneqq \frac{1}{T}\sum_{t = 1}^T X_t.
\]
Then for all \(\Chernoff \ge 0\),
\[
  \prob\paren{
    \norm{E - M}
    \ge \Chernoff
  }
  \le 2n\exp\paren[\bigg]{
    -T \frac{\Chernoff^2/2}{\sigma^2 + 2\rho\Chernoff/3}
  }.
\]
\end{theorem}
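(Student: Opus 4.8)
The cleanest route is to read off \cref{thm:Bernstein} from \cite[Corollary~6.2.1]{Tropp2015}, which is (a mild reformulation of) exactly this statement; for completeness I sketch how it reduces to the general matrix Bernstein inequality. Write $M \coloneqq \Ebb[X]$ and, for each $t \in [T]$, set $Z_t \coloneqq X_t - M$, so the $Z_t$ are i.i.d.\ with $\Ebb[Z_t] = 0$ and $E - M = \tfrac{1}{T}\sum_{t \in [T]} Z_t$; thus the event $\norm{E - M} \ge \Chernoff$ is the deviation event for $\sum_{t \in [T]} Z_t$ at scale $T\Chernoff$. I would then assemble the two inputs matrix Bernstein needs. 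A uniform almost-sure norm bound on the summands: Jensen's inequality for the convex spectral norm gives $\norm{M} \le \Ebb[\norm{X}] \le \rho$, so $\norm{Z_t} \le \norm{X_t} + \norm{M} \le 2\rho$. A matrix-variance bound: the semidefinite sandwich $0 \preceq \Ebb[Z_t^2] = \Ebb[X^2] - M^2 \preceq \Ebb[X^2]$ (both $\Ebb[X^2]$ and $M^2$ being positive semidefinite) yields $\norm{\sum_{t \in [T]} \Ebb[Z_t^2]} = T\norm{\Ebb[Z_1^2]} \le T\sigma^2$ with $\sigma^2 = \norm{\Ebb[X^2]}$ as in the statement.

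Plugging these into the general matrix Bernstein tail bound for $\sum_{t \in [T]} Z_t$ and dividing the deviation scale by $T$ produces a bound of the shape $2n\exp\big(-T(\Chernoff^2/2)/(\sigma^2 + c\,\rho\,\Chernoff)\big)$ for an absolute constant $c$. Matching the displayed constant $c = 2/3$ is then cosmetic: either the cited corollary is already phrased in terms of $\norm{X} \le \rho$ and the uncentered $\sigma^2$ --- our exact setting --- or one absorbs the factor of two from $\norm{Z_t} \le 2\rho$ into the leading constant via the same kind of elementary scalar estimate used to pass from \cite[Theorem~1.1]{Tropp2012} to \cref{prop:Matrix-Chernoff}, at the harmless price of a slightly larger (still $\Theta(1)$) constant $\psi_{\Chernoff}$ in the applications. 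If the cited form is one-sided (for $\lambdamax$), a union bound over $\pm(E-M)$ recovers the two-sided norm statement and only doubles the dimension prefactor, already absorbed in the $2n$.

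I expect the only real work to be this bibliographic bookkeeping --- reconciling constants, the $2n$ versus $n$ prefactor, and centered-versus-uncentered conventions in \cite{Tropp2015} with the clean form stated here --- rather than anything substantive. There is no new probabilistic content: the concentration is entirely inherited from \cite{Tropp2015} (ultimately the matrix Laplace-transform method together with Lieb's concavity theorem), and the only genuinely new observation is the positive-semidefinite inequality $\Ebb[X^2] - (\Ebb[X])^2 \preceq \Ebb[X^2]$, which is what permits quoting the convenient uncentered variance proxy $\sigma^2 = \norm{\Ebb[X^2]}$ rather than its centered counterpart.
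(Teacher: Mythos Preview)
Your proposal is correct in spirit and in its technical details, but there is nothing to compare it against: the paper does not prove \cref{thm:Bernstein} at all. It is stated with the attribution ``see \cite[Corollary~6.2.1]{Tropp2015}'' and used as a black box; the only argument the paper supplies is the downstream calculation in the proof of \cref{prop:Bernstein-whp-form}. So your reading --- that the work here is purely bibliographic bookkeeping, with the concentration inherited wholesale from Tropp --- is exactly right, and your centering sketch (with the variance bound $\Ebb[Z_t^2] = \Ebb[X^2] - M^2 \preceq \Ebb[X^2]$) is a correct way to recover the statement from the standard matrix Bernstein inequality, modulo the constant in front of $\rho\Chernoff$ that you already flag.
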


Using \cref{thm:Bernstein}, we are ready to prove our general
concentration result.

\begin{proof}[Proof of \cref{prop:Bernstein-whp-form}]
If \(\sigma^2 \ge (2/3)\rho\Chernoff\), then
\[
  -\frac{T}{2}\frac{\Chernoff^2}{\sigma^2 + (2/3)\Chernoff\rho}
  \le -\frac{T}{4}\frac{\Chernoff^2}{\sigma^2}
  \le -\frac{8\sigma^2\log(2n)}{\Chernoff^2}\frac{\Chernoff^2}{4 \sigma^2}
  = -2\log(2n).
\]
On the other hand, if \(\sigma^2 \le (2/3)\rho\Chernoff\), then
\[
  -\frac{T}{2}\frac{\Chernoff^2}{\sigma^2 + (2/3)\Chernoff\rho}
  \le -\frac{T}{4}\frac{\Chernoff^2}{(2/3)\Chernoff\rho}
  \le -\frac{16\rho\log(2n)}{3\Chernoff}\frac{3\Chernoff}{8\rho}
  = -2\log(2n).
\]
\Cref{thm:Bernstein} implies
\begin{equation*}
  \prob\paren[\bigg]{
    \bigg\| \dfrac{1}{T}\sum_{t \in [T]} X_t - \Ebb[X] \bigg\| \ge \Chernoff
  }
  \le
  2n\exp\paren[\bigg]{ -\frac{T}{2} \frac{\Chernoff^2}{\sigma^2 +
      (2/3)\rho\Chernoff}
  }
  \le 2n\exp\paren{
    -2\log(2n)
  }
  = \frac{1}{2n}.
  \qedhere
\end{equation*}
\end{proof}

Finally, we prove \Cref{prop:conic-sampling} by combining \cref{prop:Bernstein-whp-form} with
\cref{prop:sparsifying}.
\begin{proof}[Proof of~\Cref{prop:conic-sampling}]

Set \(Z \coloneqq Y^{-1/2} \Round[Y] Y^{-1/2}\).
For every \(U \subseteq [n]\),
\begin{equation*}
  \norm{Y^{-1/2} \stensor{U} Y^{-1/2}}
  = \iprod{\stensor{U}}{Y^{-1}}
  \quad
  \text{and}
  \quad
  \norm{(Y^{-1/2}\stensor{U}Y^{-1/2})^2}
  = \iprod{\stensor{U}}{Y^{-1}}^2.
\end{equation*}
Thus
\begin{equation*}
  \maxq(Y^{-1}) \ge \norm{Z} \text{  and }
  \maxq(Y^{-1})^2 I \succeq Z^2 \text{ almost surely}.
\end{equation*}
Set \(\sigma^2 \coloneqq (n/\eps)^2\) and \(\rho \coloneq n/\eps\).
As \(Y \succeq \eps I\), we have that \(Y^{-1} \preceq (1/\eps) I\),
so \(\maxq(Y^{-1}) \le n/\eps\).
Since \(n \ge 1 \ge (2/3)\Chernoff\Xalpha\eps\), we
have that
\[
  T
  \ge \ceil[\bigg]{
    \frac{8}{(\Chernoff\eps\Xalpha)^2}n^2 \log(2n)
  }
  \ge \frac{8}{(\Chernoff\Xalpha)^2}\frac{n^2}{\eps^2}\log(2n)
  = \frac{8\sigma^2\log(2n)}{(\Chernoff\Xalpha)^2}
  \ge \frac{16}{3}\frac{n}{\eps}\frac{1}{\Chernoff\Xalpha}\log(2n)
  = \frac{16\rho\log(2n)}{3\Chernoff\Xalpha}.
\]
Let \(\setst{Z_t}{t \in [T]}\) be {i.i.d.} random variables sampled from~\(Z\).
\Cref{prop:Bernstein-whp-form} implies that
\[
  Y^{-1/2}\Ebb[\Round[Y]]Y^{-1/2}
  - \gamma \Xalpha I
  \preceq Y^{-1/2}\paren[\bigg]{
    \frac{1}{T}\sum_{t \in [T]} Z_t
  }Y^{-1/2}
\]
with probability at least \(1 - 1/(2n)\).
Assume that this event holds.
Let \(y \in \Lp{\Fcal(\Dist)}\) be defined by
\(y_U \coloneqq \frac{1}{T}\card{\setst{t \in
    [T]}{Z_t = \stensor{U}}}\).
Then \(\iprodt{\ones}{y} = 1\) and
\[
  \sum_{U \in \Fcal(\Dist)} y_U^{} \stensor{U}
  = \dfrac{1}{T} \sum_{t \in [T]} Z_t
  \succeq \Ebb[\Round[Y]] - \Xalpha \Chernoff Y.
\]
\Cref{prop:sparsifying} implies we can compute in polynomial time
\(\tilde{y} \in \Lp{\Fcal(\Dist)}\) with support size
\(\card{\supp(\tilde{y})} \in O(n/\bss^2)\), such that
\(\iprodt{\ones}{\tilde{y}} \le 1 + \bss\), and \(\sum_{U \in
\Fcal(\Dist)} \tilde{y}_U^{} \stensor{U} \succeq \Ebb[\Round[Y]] -
\Xalpha\Chernoff Y\).
As \(\Cov \subseteq \Psd{n}\), we have that \(\Psd{n} \subseteq
\lift{\Cov^*}\), so by~\cref{eq:rounding-constant} we obtain
\begin{equation*}
  \sum_{U \in \Fcal(\Dist)} \tilde{y}_U \stensor{U}
  \succeq_{\lift{\Cov^*}}
  \Ebb[\Round[Y]] - \gamma \Xalpha Y
  \succeq_{\lift{\Cov^*}}
  \Xalpha Y - \gamma \Xalpha Y
  = (1 - \Chernoff)\Xalpha Y.
  \qedhere
\end{equation*}
\end{proof}

\section{Algorithmic Simultaneous Certificates}
\label{sec:algorithmic-certificates}

Let \(\Dist, \Cov \subseteq \Sym{n}\) be cones such
that~\cref{eq:conic-assumptions-appendix} holds.
Let \(\eps \in (0, 1)\).
Set, for every \(W \in \Cov\),
\begin{subequations}
  \begin{align}
    \label{eq:MAXQ-eps-def}
    \nu_{\eps,\conepair}(W)
    &\coloneqq (1 - \eps)\nu_{\conepair}(W) + \eps\iprod{I}{W}
    \\
    \label{eq:MAQQ-eps-def-dual}
    &\eqaligned
      \min\setst[\big]{
      \rho
      }{
      \rho \in \Reals_{+},\,
      x \in \Reals^n,\,
      \rho \geq (1-\eps)\iprodt{\ones}{x} + \eps\iprod{I}{W},\,
      \Diag(x) \succeq_{\Dist^*} W
    }.
  \end{align}
\end{subequations}
For every \(Z \in \Cov^*\), set
\begin{subequations}
  \label{eq:FEVC-eps-def}
  \begin{align}
    \label{eq:FEVC-eps-supf}
    \nu_{\eps,\conepair}^{\polar}(Z)
    &= \max\setst{
      \iprod{Z}{W}
      }{
      x \in \Reals^n,\,
      W \in \Cov,\,
      W \preceq_{\Dist^*} \Diag(x),\,
      (1 - \eps)\iprodt{\ones}{x} + \eps\iprod{I}{W} \le 1
      }
    \\
    \label{eq:FEVC-eps-gaugef}
    &= \min\setst{
      \mu
      }{
      \mu \in \Reals_+,\,
      Y \in \Sym{n},\,
      Y \succeq_{\Dist} \mu \eps I,\,
      Y \succeq_{\lift{\Cov^*}} Z,\,
      \diag(Y) = \mu \ones
      }.
  \end{align}
\end{subequations}
One may check that \(\nu_{\eps,\conepair}\) and
\(\nu_{\eps,\conepair}^{\polar}\) are positive definite monotone
gauges, dual to each other.
Let \(\sigma \in (0, 1)\) and set
  \begin{equation}
    \label{eq:N-approx-def}
    \GWOpt_{\eps, \sigma}(\Dist, \Cov)
    \coloneqq
    \setst*{
      (W, Z)
      \in \Cov \times \Cov^*
    }{
      \begin{array}{@{} l @{} l @{} l @{}}
        \exists (\mu,Y)
        & \text{ feasible for~\cref{eq:FEVC-eps-gaugef} }
        & \text{for }Z,
        \\
        \exists (\rho,x)
        & \text{ feasible for~\cref{eq:MAQQ-eps-def-dual} }
        & \text{for }W,
        \\
        \multicolumn{3}{c}{%
        \text{and } \iprod{W}{Z} \ge (1 - \sigma) \rho \mu
        }
      \end{array}
    }.
  \end{equation}
One may check that
\begin{equation}
  \label{eq:approx-Cauchy-Schwarz}
  \begin{gathered}
    \text{%
      if \((\rho, x)\) and \((\mu,Y)\) witness the membership \((W, Z)
      \in \GWOpt_{\eps, \sigma}(\conepair)\),
      then
    }\\
    (1-\sigma)\rho\mu
    \le
    \iprod{W}{Z}
    \le
    \rho\mu.
  \end{gathered}
\end{equation}

\begin{theorem}
  \label{thm:approx-bound-conversion}
  Let \(\eps \in \halfopen{0,1}\).
  Then,
  \begin{subequations}
    \begin{alignat}{3}
      \label{eq:MAXQ-eps-bound}
      (1 - \eps) \nu(W)
      &\le \nu_{\eps}(W)
      &&\le \nu(W),
      &
      \mathrlap{
        \qquad
        \text{for each \(W \in \Cov\)},
      }
      \\
      \label{eq:FEVC-eps-bound}
      \nu^{\polar}(Z)
      &\le \nu^{\polar}_\eps(Z)
      &&\le \frac{1}{1 - \eps} \nu^{\polar}(Z),
      &
      \mathrlap{
        \qquad
        \text{for each \(Z \in \Cov^*\)}.
      }
    \end{alignat}
  \end{subequations}
\end{theorem}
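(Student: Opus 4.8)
The plan is to establish the two two-sided bounds by comparing the definitions~\eqref{eq:MAXQ-eps-def} and~\eqref{eq:FEVC-eps-def} with the unperturbed gauges from~\eqref{eq:nu-def-appendix} and~\eqref{eq:nu-polar-appendix-def}, exploiting the fact that \(\iprod{I}{W}\ge 0\) and that the feasible regions of the perturbed problems are nested between those of the unperturbed ones. First I would prove~\eqref{eq:MAXQ-eps-bound} using the supremum formulation~\eqref{eq:MAXQ-eps-def}. Since \(W\in\Cov\subseteq\Psd{n}\) by~\eqref{eq:non-negativity} together with \(I\in\Dist\subseteq\Dist^*\) (indeed \(\Psd{n}\subseteq\Dist^*\) by~\eqref{eq:dist-psd}), we have \(\iprod{I}{W}\ge 0\); moreover \(\maxq_{\conepair}(W)\le\nu_{\conepair}(W)\) and, more to the point, \(\iprod{I}{W}\le\nu_{\conepair}(W)\) because any optimal \(Y\) in~\eqref{nu-supf-def} satisfies \(Y\succeq 0\) with unit diagonal while \(I\) is itself such a matrix --- so \(\nu_{\conepair}(W)=\max\{\iprod{W}{Y}\}\ge\iprod{W}{I}\). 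Substituting \(\iprod{I}{W}\le\nu_{\conepair}(W)\) into the convex combination \(\nu_{\eps}(W)=(1-\eps)\nu(W)+\eps\iprod{I}{W}\) gives the upper bound \(\nu_\eps(W)\le\nu(W)\); dropping the nonnegative term \(\eps\iprod{I}{W}\ge 0\) gives the lower bound \((1-\eps)\nu(W)\le\nu_\eps(W)\).

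For~\eqref{eq:FEVC-eps-bound} I would argue by gauge duality: since \(\nu_{\eps,\conepair}\) and \(\nu_{\eps,\conepair}^{\polar}\) are dual positive definite monotone gauges (as remarked right before the statement), and likewise \(\nu_{\conepair},\nu_{\conepair}^{\polar}\) by~\cref{thm:conic-assumptions}\ref{item:nu-gauges}, the inequality \((1-\eps)\nu\le\nu_\eps\le\nu\) on \(\Cov\) dualizes. Concretely, using \(\varphi_1\le\varphi_2\implies\varphi_2^{\polar}\le\varphi_1^{\polar}\) and \((c\varphi)^{\polar}=\tfrac1c\varphi^{\polar}\) for \(c>0\), the chain \((1-\eps)\nu\le\nu_\eps\le\nu\) yields \(\nu^{\polar}\le\nu_\eps^{\polar}\le\tfrac{1}{1-\eps}\nu^{\polar}\) on \(\Cov^*\), which is exactly~\eqref{eq:FEVC-eps-bound}. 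Alternatively, one can give a direct proof from~\eqref{eq:FEVC-eps-gaugef}: a feasible \((\mu,Y)\) for \(\nu^{\polar}\) (i.e.\ \(Y\in\Dist\), \(\diag(Y)=\mu\ones\), \(Y\succeq_{\lift{\Cov^*}}Z\)) is almost feasible for \(\nu_\eps^{\polar}\) except possibly for the extra constraint \(Y\succeq_{\Dist}\mu\eps I\); scaling \((\mu,Y)\mapsto(\tfrac{\mu}{1-\eps},\,\tfrac{1}{1-\eps}(Y - \eps\mu I) + \tfrac{\eps\mu}{1-\eps}I)\) --- or more simply replacing \(Y\) by \((1-\eps)^{-1}Y\) after noting \(Y\succeq 0\) hence \(Y\succeq\eps\cdot 0\) is not quite enough --- requires care, so the cleaner route is the duality argument, and conversely any \((\mu,Y)\) feasible for \(\nu_\eps^{\polar}\) is feasible for \(\nu^{\polar}\) since the perturbed problem only adds a constraint, giving \(\nu^{\polar}(Z)\le\nu_\eps^{\polar}(Z)\).

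The main obstacle I anticipate is the lower inequality \(\nu^{\polar}(Z)\le\nu_\eps^{\polar}(Z)\) in~\eqref{eq:FEVC-eps-bound} if one insists on a direct primal (gaugef) argument rather than invoking duality, because \(\nu_\eps^{\polar}\) is defined over \(Y\in\Sym{n}\) with the \emph{two} order constraints \(Y\succeq_{\Dist}\mu\eps I\) and \(Y\succeq_{\lift{\Cov^*}}Z\), and one must check that an optimal such \(Y\) still lies in \(\Dist\) --- which follows from \(Y\succeq_{\Dist}\mu\eps I\) together with \(\mu\eps I\in\Dist\) (guaranteed by~\cref{eq:I-in-Dist}, i.e.\ \(I\in\cone\setst{\stensor{U}}{U\in\Fcal(\Dist)}\subseteq\Dist\)) and closedness of \(\Dist\) under the partial order it induces. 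Once that membership is confirmed, \((\mu,Y)\) is feasible for~\eqref{eq:nu-polar-A-K-gaugef-appendix}, yielding the desired inequality. For the upper inequality \(\nu_\eps^{\polar}(Z)\le\tfrac{1}{1-\eps}\nu^{\polar}(Z)\), given an optimal \((\mu,Y)\) for \(\nu^{\polar}(Z)\) one sets \(\mu'\coloneqq\mu/(1-\eps)\) and \(Y'\coloneqq\mu'\eps I+(Y-0)\)... again the scaling must be done so that \(\diag(Y')=\mu'\ones\) and both order constraints hold; taking \(Y'\coloneqq \tfrac{1}{1-\eps}Y\) fails the diagonal unless combined correctly, so the robust choice is \(Y'\coloneqq Y + \tfrac{\eps}{1-\eps}\mu I\), which has diagonal \(\mu\ones+\tfrac{\eps}{1-\eps}\mu\ones=\mu'\ones\), satisfies \(Y'\succeq_{\lift{\Cov^*}}Z\) since we only added a \(\lift{\Cov^*}\)-nonnegative matrix (\(I\in\Dist\subseteq\lift{\Cov^*}\) by~\eqref{eq:lift-cov-dua-contains-dist}), and satisfies \(Y'\succeq_{\Dist}\mu'\eps I\) because \(Y'-\mu'\eps I = Y+\tfrac{\eps}{1-\eps}\mu I - \tfrac{\eps}{1-\eps}\mu I = Y\in\Dist\). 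This witnesses \(\nu_\eps^{\polar}(Z)\le\mu'=\tfrac{1}{1-\eps}\nu^{\polar}(Z)\), completing the proof. I would present the \(\nu\)-side computation in full and then simply remark that the \(\nu^{\polar}\)-side follows by gauge duality, relegating the explicit witnesses above to a one-line parenthetical.
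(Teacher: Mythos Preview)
Your proposal is correct and follows essentially the same approach as the paper: establish \eqref{eq:MAXQ-eps-bound} directly from the convex-combination definition \(\nu_{\eps}(W)=(1-\eps)\nu(W)+\eps\iprod{I}{W}\) using \(0\le\iprod{I}{W}\le\nu(W)\), and then obtain \eqref{eq:FEVC-eps-bound} by gauge duality. One small correction: you write ``\(W\in\Cov\subseteq\Psd{n}\) by~\eqref{eq:non-negativity}'', but \(\Cov\subseteq\Psd{n}\) is not assumed in general; the clean justification (which the paper uses) is \(I\in\Dist\) by~\eqref{eq:I-in-Dist} together with \(W\in\Cov\subseteq\Dist^*\), giving \(\iprod{I}{W}\ge 0\), and \(I\in\Ecal(\Dist)\) giving \(\iprod{I}{W}\le\nu(W)\).
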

\begin{proof}

We have that \((1 - \eps) \nu(W) \le \nu_{\eps}(W)\) since
\(\nu_{\eps}(W) = (1 - \eps) \nu(W) + \eps \iprod{I}{W}\) and
\(\iprod{I}{W}\geq 0\) by~\cref{eq:I-in-Dist} since \(W \in \Cov
\subseteq \Dist^*\).

We have that \(\nu( W) \ge \iprod{I}{W}\) by~\cref{eq:I-in-Dist}.
Therefore, \(\nu_{\eps}(W) = (1-\eps) \nu(W) + \eps
\trace(W) \leq \nu(W)\).
\Cref{eq:FEVC-eps-bound} holds by duality.
\end{proof}

Let \(\Dist, \Cov \subseteq \Sym{n}\) be closed convex cones such
that~\cref{eq:conic-assumptions} holds, where \(\Cov\) is the polyhedral
cone defined by \(\Acal \colon \Reals^d \to \Sym{n}\).
Assume that~\cref{eq:I-in-Dist} holds.
Set
\begin{equation}
  \label{eq:nu-eps-projected-def}
  \nu_{\eps, \Dist, \Acal}(w)
  \coloneqq \nu_{\eps,\conepair}(\Acal(w))
  \text{ for every }
  w \in \Lp{d}.
\end{equation}
Then, for every \(z \in \Lp{d}\),
\begin{equation}
  \label{eq:nu-projected-psd}
  \begin{aligned}
    \nu_{\eps,\Cov,\Acal}^{\polar}(z)
    &= \min\setst{
      \mu \in \Lp{}
      }{
      Y \succeq_{\Dist} \eps\mu I,\,
      \diag(Y) = \mu\ones,\,
      \Acal^*(Y) \ge z
      }\\
    &= \max\setst{
      \iprodt{z}{w}
      }{
      w \in \Lp{d},\,
      x \in \Reals^n,\,
      (1 - \eps)\iprodt{\ones}{x} + \eps\trace(\Acal(w)) \le 1,\,
      \Diag(x) \succeq_{\Dist^*} \Acal(w)
      }.
  \end{aligned}
\end{equation}
Let \((\slater{\mu}, \slater{Y})\) be as in~\cref{eq:slater-Y-Dist}.
Since \(I \in \Dist \subseteq \Cov^*\) by~\cref{eq:I-in-Dist}
and~\cref{eq:conic-assumptions-appendix}, it follows
from~\cref{polyhedral-conic-ineq} that \(\Acal^*(I) \ge 0\).
Hence
\[
  \Acal^*((1 - \eps)\slater{Y} + \eps \slater{\mu}I)
  \ge (1 - \eps) \Acal^*(\slater{Y})
  > 0.
\]
We thus conclude that \((1 - \eps)\slater{Y} + \eps\slater{\mu} I\) is
strictly feasible in the first optimization problem
in~\cref{eq:nu-projected-psd}.
The second problem in~\cref{eq:nu-projected-psd} is also strictly
feasible, as one can see by setting \((\slater{w}, \slater{x})
\coloneqq (\tfrac{\alpha}{3n} \ones, \frac{1}{3n} \ones)\) for
\(\alpha \in \Reals_{++}\) such that \(\lambdamax(\Acal(\alpha \ones))
< 1\).
Hence the second equality in~\cref{eq:nu-projected-psd} and attainment
of both problems follow from Strong Duality.
For the first equality in~\cref{eq:nu-projected-psd}, note that
\begin{detailedproof}
  \label{eq:nu=polar-projected-def}
  &\nu_{\eps}^{\polar}(z)\\
  &= \min\setst{
    \nu_{\eps}^{\polar}(Z)
  }{
    Z \in \Cov^*,\,
    \Acal^*(Z) \ge z
  }\\
  &= \min\setst{
    \mu \in \Lp{}
  }{
    Z \in \Cov^*,\,
    \Acal^*(Z) \ge z,\,
    Y \in \Sym{n},\,
    Y \succeq_{\Dist} \eps\mu I,\,
    Y \succeq_{\lift{\Cov^*}} Z,\,
    \diag(Y) = \mu\ones
  }
  &&\text{by~\cref{eq:FEVC-eps-gaugef}}\\
  &\ge \min\setst{
    \mu \in \Lp{}
  }{
    Y \in \Sym{n},\,
    Y \succeq_{\Dist} \eps\mu I,\,
    \diag(Y) = \mu\ones,\,
    \Acal^*(Y) \ge z
  },
\end{detailedproof}
as \(Y \succeq_{\lift{\Cov^*}} Z\) implies \(\Acal^*(Y) \ge
\Acal^*(Z)\) by~\cref{eq:2}.
Equality follows from~\cref{eq:Dist-to-dual-Cov} and \(\Acal^*(Y) =
\Acal^*(P(Y))\) for every \(Y \in \Sym{n}\), where \(P \colon \Sym{n}
\to \Sym{n}\) is the orthogonal projector on \(\aff(\Cov)\).

\begin{proposition}
\label{prop:polyhedral-perturbed-versions}

Let \(\Dist, \Cov \subseteq \Sym{n}\) be closed convex cones such
that~\cref{eq:conic-assumptions} holds, where \(\Cov \coloneqq
\Acal(\Lp{d})\) for a linear map \(\Acal \colon \Reals^d
\to \Sym{n}\).
Assume that~\cref{eq:I-in-Dist} holds.
Let \(z \in \Lp{d}\).
If \((\mu, Y)\) and \((w, x)\) are feasible solutions
to~\cref{eq:nu-projected-psd} such that
\[
  (1 - \sigma)\mu
  \le \iprodt{z}{w}
  \le \mu,
\]
then
\[
  (1, x) \text{ and }  (\mu, Y)
  \text{ witness the membership }
  (\Acal(w),Y) \in \GWOpt_{\eps, \sigma}(\Dist, \Cov).
\]
\end{proposition}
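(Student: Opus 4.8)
The plan is to verify, clause by clause, the three requirements in the definition~\cref{eq:N-approx-def} of $\GWOpt_{\eps,\sigma}(\Dist,\Cov)$, taking $W \coloneqq \Acal(w)$, $\rho \coloneqq 1$, and $Z \coloneqq P(Y)$, where $P \colon \Sym{n} \to \Sym{n}$ is the orthogonal projector onto $\aff(\Cov)$; this $Z$ is the element of $\Cov^*$ that the conclusion abbreviates as ``$Y$'', since $W \in \aff(\Cov)$ gives $\iprod{W}{Y} = \iprod{W}{P(Y)}$ and~\cref{eq:Dist-to-dual-Cov} makes $Y$ and $P(Y)$ interchangeable with respect to $\preceq_{\lift{\Cov^*}}$.

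First I would observe that $Y \in \Dist$: feasibility of $(\mu,Y)$ in~\cref{eq:nu-projected-psd} gives $Y - \eps\mu I \in \Dist$, and $\eps\mu I \in \Dist$ by~\cref{eq:I-in-Dist}, so their sum $Y$ lies in the cone $\Dist$. Hence~\cref{eq:Dist-to-dual-Cov} yields $Z = P(Y) \in \Cov^*$ together with $Y \succeq_{\lift{\Cov^*}} Z$; combined with the constraints $Y \succeq_{\Dist} \eps\mu I$ and $\diag(Y) = \mu\ones$ already provided by~\cref{eq:nu-projected-psd}, this shows that $(\mu,Y)$ is feasible for~\cref{eq:FEVC-eps-gaugef} at $Z$, which is the first clause of~\cref{eq:N-approx-def}.

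Next I would check that $(\rho,x) = (1,x)$ is feasible for~\cref{eq:MAQQ-eps-def-dual} at $W = \Acal(w)$: feasibility of $(w,x)$ in~\cref{eq:nu-projected-psd} supplies $\Diag(x) \succeq_{\Dist^*} \Acal(w)$ and $(1-\eps)\iprodt{\ones}{x} + \eps\trace(\Acal(w)) \le 1$, and the latter is exactly $\rho \ge (1-\eps)\iprodt{\ones}{x} + \eps\iprod{I}{W}$ with $\rho = 1$, since $\trace(\Acal(w)) = \iprod{I}{\Acal(w)}$. This gives the second clause. For the third, the ``approximate Cauchy--Schwarz'' inequality $\iprod{W}{Z} \ge (1-\sigma)\rho\mu$, I would use $w \ge 0$ and $\Acal^*(Y) \ge z$ (again from~\cref{eq:nu-projected-psd}) to get $\iprod{\Acal(w)}{P(Y)} = \iprod{\Acal(w)}{Y} = \iprodt{w}{\Acal^*(Y)} \ge \iprodt{w}{z} = \iprodt{z}{w} \ge (1-\sigma)\mu = (1-\sigma)\rho\mu$, the last step being the hypothesis. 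Putting the three clauses together shows $(\Acal(w),Z) \in \GWOpt_{\eps,\sigma}(\Dist,\Cov)$ with $(1,x)$ and $(\mu,Y)$ as the witnessing data.

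This is in essence a bookkeeping lemma, so I do not anticipate a real obstacle; the only points that need a moment of care are confirming $Y \in \Dist$ --- which is where~\cref{eq:I-in-Dist} is used, so that the projected instance $P(Y)$ genuinely lies in $\Cov^*$ via~\cref{eq:Dist-to-dual-Cov} --- and matching the two perturbation terms $\eps\trace(\Acal(w))$ and $\eps\iprod{I}{W}$, which coincide by the definition of the trace inner product.
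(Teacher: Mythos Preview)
Your proposal is correct and follows essentially the same verification as the paper's proof, which simply checks feasibility of $(1,x)$ in~\cref{eq:MAQQ-eps-def-dual}, feasibility of $(\mu,Y)$ in~\cref{eq:FEVC-eps-gaugef}, and then bounds $\iprod{\Acal(w)}{Y} \ge \iprodt{w}{z} \ge (1-\sigma)\mu$ exactly as you do. The one place you are more careful than the paper is in distinguishing $Y$ from $P(Y)$: the paper writes $Z \coloneqq Y$ and leaves the $\Cov^*$-membership implicit, whereas you invoke~\cref{eq:I-in-Dist} and~\cref{eq:Dist-to-dual-Cov} to land in $\Cov^*$ properly; this is a genuine (if minor) improvement in rigor, not a different approach.
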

\begin{proof}

It is immediate that \((1, x)\) is feasible
in~\cref{eq:MAQQ-eps-def-dual} for \(W \coloneqq \Acal(w)\).
It is also clear that \((\mu, Y)\) is feasible
in~\cref{eq:FEVC-eps-gaugef} for \(Z \coloneqq Y\).
The proof follows from
\[
  (1 - \sigma)\mu
  \le \iprodt{w}{z}
  \le \iprod{w}{\Acal^*(Y)}
  = \iprod{\Acal(w)}{Y}.
  \qedhere
\]
\end{proof}

\begin{proposition}
\label{prop:certificate-auxiliary}

Let \(\eps, \sigma, \bss \in \halfopen{0, 1}\).
Let \(\Dist, \Cov \subseteq \Sym{n}\) be such
that~\cref{eq:conic-assumptions} holds.
Let \(W, Z \in \Sym{n}\) be nonzero.
Let \((\bar{\rho}, x)\) and \((\bar{\mu}, Y)\) witness the membership
\((W, Z) \in \GWOpt_{\eps, \sigma}(\Dist, \Cov)\).
Set \(\rho \coloneqq (1 - \eps)^{-1}\bar{\rho}\) and \(\mu \coloneqq
\rho^{-1}\iprod{Z}{W}\).
Let \(p \in \Lp{\Fcal(\Dist)}\) be such that \(\iprodt{\ones}{p} \le 1
+ \bss\) and
\begin{equation*}
  \sum_{U \in \Fcal(\Dist)} p_U^{}\stensor{U} \succeq_{\lift{\Cov^*}} \frac{1}{\bar{\mu}}Y.
\end{equation*}
Set \(\beta \coloneqq (1 - \eps)(1 - \sigma)/(1 + \bss)\).
Then
\[
  \qform{W}{s_V} \ge \beta \rho
  \text{ and }
  Z \preceq_{\lift{\Cov^*}} \frac{1}{\beta} \sum_{U \in \Fcal(\Dist)} p_U^{}\stensor{U}
\]
for \(V \coloneqq \argmax\setst{\qform{W}s_U}{U \in \supp(p)}\).
\end{proposition}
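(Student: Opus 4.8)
The plan is to extract the two witness conditions, prove the value bound $\qform{W}{s_V}\ge\beta\rho$ by an averaging argument over $\supp(p)$, and prove the covering bound by chaining $\succeq_{\lift{\Cov^*}}$-inequalities and then converting the leftover scalar $\bar\mu$ into $\beta^{-1}$ using the value identities. I would begin by unpacking $(W,Z)\in\GWOpt_{\eps,\sigma}(\Dist,\Cov)$ with the given witnesses: $(\bar\rho,x)$ gives $\Diag(x)\succeq_{\Dist^*}W$ and $\bar\rho\ge(1-\eps)\iprodt{\ones}{x}+\eps\iprod{I}{W}\ge(1-\eps)\iprodt{\ones}{x}$ by~\cref{eq:MAQQ-eps-def-dual}; $(\bar\mu,Y)$ gives $Y\succeq_{\lift{\Cov^*}}Z$ and $\diag(Y)=\bar\mu\ones$ by~\cref{eq:FEVC-eps-gaugef}; and the membership together with~\cref{eq:approx-Cauchy-Schwarz} gives $(1-\sigma)\bar\rho\bar\mu\le\iprod{W}{Z}\le\bar\rho\bar\mu$. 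I would also record the identities $\iprod{W}{Z}=\rho\mu$ and $\bar\rho=(1-\eps)\rho$ from the definitions of $\mu$ and $\rho$, together with the facts that $\stensor{U}\in\Dist\subseteq\lift{\Cov^*}$ for every $U\in\Fcal(\Dist)$ by~\cref{eq:lift-cov-dua-contains-dist} (so $\sum_{U}p_U\stensor{U}\in\lift{\Cov^*}$, since $p\ge0$) and $\qform{W}{s_U}=\iprod{W}{\stensor{U}}\ge0$ (since $W\in\Cov$).

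For the value bound I would first apply transitivity of $\succeq_{\lift{\Cov^*}}$: scaling the hypothesis $\sum_{U}p_U\stensor{U}\succeq_{\lift{\Cov^*}}\bar\mu^{-1}Y$ by $\bar\mu\ge0$ and appending $Y\succeq_{\lift{\Cov^*}}Z$ yields $\sum_{U}p_U\stensor{U}\succeq_{\lift{\Cov^*}}\bar\mu^{-1}Z$, equivalently $Z\preceq_{\lift{\Cov^*}}\bar\mu\sum_{U}p_U\stensor{U}$ (which I reuse below). Pairing the element $\sum_{U}p_U\stensor{U}-\bar\mu^{-1}Z\in\lift{\Cov^*}$ with $W\in\Cov$ gives $\sum_{U}p_U\qform{W}{s_U}\ge\bar\mu^{-1}\iprod{W}{Z}$; bounding the nonnegative weighted sum on the left by $\iprodt{\ones}{p}\le1+\bss$ times its largest term $\qform{W}{s_V}$, and then substituting $\iprod{W}{Z}\ge(1-\sigma)\bar\rho\bar\mu$ and $\bar\rho=(1-\eps)\rho$, leaves $\qform{W}{s_V}\ge\tfrac{(1-\sigma)(1-\eps)}{1+\bss}\rho=\beta\rho$.

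For the covering bound I already have $Z\preceq_{\lift{\Cov^*}}\bar\mu\sum_{U}p_U\stensor{U}$, and since $\sum_{U}p_U\stensor{U}\in\lift{\Cov^*}$ it suffices to replace $\bar\mu$ by the larger constant $\beta^{-1}$, which reduces to verifying $\bar\mu\le\beta^{-1}$. That follows from $\rho\mu=\iprod{W}{Z}\ge(1-\sigma)(1-\eps)\rho\bar\mu$, which gives $\bar\mu\le\mu/\bigl((1-\sigma)(1-\eps)\bigr)$, together with the definition of $\beta$. (Equivalently, and this is the form one actually feeds into~\cref{def:beta-certificate}, the vector $y\coloneqq\bar\mu p$ satisfies $\sum_{U}y_U\stensor{U}\succeq_{\lift{\Cov^*}}Z$ and $\iprodt{\ones}{y}\le\bar\mu(1+\bss)\le\mu/\beta$.)

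I do not expect a genuine obstacle: the computations are routine constant bookkeeping. The point demanding the most care is that $\lift{\Cov^*}$ is in general neither full-dimensional nor pointed, so every manipulation must use only closure of a convex cone under sums and under multiplication by nonnegative reals, the containment $\Dist\subseteq\lift{\Cov^*}$, and the fact that membership of $W$ in $\Cov$ is precisely what makes $\iprod{W}{\cdot}$ nonnegative on $\lift{\Cov^*}$. The only other thing to dispose of is the degenerate case $\bar\mu=0$: there $\diag(Y)=0$ and $Y\in\Dist\subseteq\Psd{n}$ force $Y=0$, hence $Z\preceq_{\lift{\Cov^*}}0$ and both conclusions hold trivially; otherwise $\bar\mu>0$ and every division above is legitimate.
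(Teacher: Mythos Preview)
Your approach mirrors the paper's proof almost step by step: the averaging argument for $\qform{W}{s_V}\ge\beta\rho$ and the chain $Z\preceq_{\lift{\Cov^*}}Y\preceq_{\lift{\Cov^*}}\bar\mu\sum_U p_U\stensor{U}$ followed by a scalar comparison are exactly what the paper does.

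One point deserves flagging. Your claim that $\bar\mu\le\beta^{-1}$ does not follow from $\bar\mu\le\mu/((1-\sigma)(1-\eps))$ together with the definition of $\beta$: that implication would require $\mu\le1+\bss$, which is nowhere asserted (and need not hold, since $Z$ is not normalized). In fact the paper's own proof concludes with
\[
  Z\preceq_{\lift{\Cov^*}}\frac{\mu}{\beta}\sum_{U\in\Fcal(\Dist)} p_U\stensor{U},
\]
carrying the extra factor $\mu$ relative to the displayed conclusion in the statement; this $\mu/\beta$ form is exactly what is needed for the $\beta$-certificate in \cref{def:beta-certificate}, and it is exactly what your parenthetical ``equivalently'' derives. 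So the discrepancy lies with the statement as printed, not with your underlying argument; drop the attempted bound $\bar\mu\le\beta^{-1}$ and keep the $y\coloneqq\bar\mu p$ formulation, which is both provable and what is used downstream.
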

\begin{proof}
Set \(V \coloneqq \argmax\setst{\qform{W}{s_U}}{U \in \supp(p)}\).
Then
\begin{detailedproof}
  \iprod{\stensor{V}}{W}
  &\ge \sum_{U \in \Fcal(\Dist)} \frac{p_U}{\iprodt{\ones}{p}}\iprod{\stensor{U}}{W}\\
  &\ge \frac{1}{\bar{\mu}\iprodt{\ones}{p}}\iprod{Y}{W}\\
  &\ge \frac{1}{\bar{\mu}\iprodt{\ones}{p}}\iprod{Z}{W}
  &&\text{by~\cref{eq:FEVC-eps-gaugef}}\\
  &\ge \frac{1 - \sigma}{\iprodt{\ones}{p}}\bar{\rho}
  &&\text{by~\cref{eq:N-approx-def}}\\
  &= \frac{(1 - \sigma)(1 - \eps)}{\iprodt{\ones}{p}}\rho\\
  &\ge \frac{(1 - \sigma)(1 - \eps)}{1 + \bss}\rho.
\end{detailedproof}
Moreover,
\[
  \mu
  = \rho^{-1}\iprod{Z}{W}
  \ge \rho^{-1}(1 - \sigma)\bar{\rho}\bar{\mu}
  = (1 - \sigma)(1 - \eps)\bar{\mu}.
\]
Hence
\begin{detailedproof}
  Z
  &\preceq_{\lift{\Cov^*}} Y
  &&\text{by~\cref{eq:FEVC-eps-gaugef}}\\
  &\preceq_{\lift{\Cov^*}} \bar{\mu} \sum_{U \in \Fcal(\Dist)} p_U^{}\stensor{U}\\
  &\preceq_{\lift{\Cov^*}} \frac{\mu}{(1 - \sigma)(1 - \eps)} \sum_{U \in \Fcal(\Dist)} p_U^{}\stensor{U}\\
  &\preceq_{\lift{\Cov^*}} \mu\frac{1 + \bss}{(1 - \sigma)(1 - \eps)} \sum_{U \in \Fcal(\Dist)} p_U^{}\stensor{U}.
  &&&&\qedhere
\end{detailedproof}
\end{proof}

\begin{proposition}
\label{prop:polyhedral-case-certificate}

Let \(\eps, \sigma, \Chernoff \in (0, 1)\).
Let \(\Dist, \Cov \subseteq \Sym{n}\) be closed convex cones such
that~\cref{eq:conic-assumptions} holds, where \(\Cov \coloneqq
\Acal(\Lp{d})\) for a linear map \(\Acal \colon \Reals^d
\to \Sym{n}\).
Assume~\cref{eq:I-in-Dist} and that
\begin{equation}
  \label{eq:central-distribution}
  \Acal^*(I) \ge \kappa \ones,
\end{equation}
for some \(\kappa \in \Reals_{++}\).
Let \(\Round[]\) be a randomized rounding algorithm for \(\Dist\).
Set \(\beta \coloneqq \Xalpha (1 - \Chernoff)(1 - \sigma) (1-\eps)\).
Let \((W, Z) \in \GWOpt_{\eps, \sigma}(\conepair)\) be such that \(W \neq 0 \neq Z\).
Let \((\bar{\rho},x)\) and \((\bar{\mu},Y)\) witness the membership
\((W, Z) \in \GWOpt_{\eps, \sigma}(\conepair)\).
There exists a randomized polynomial-time algorithm that takes
\((\bar{\rho}, x)\) and \((\bar{\mu}, Y)\) as input and outputs
a \(\beta\)-certificate \((\rho, \mu, y, U, x)\) for \((W, Z)\) with
high probability, and such that
\begin{equation*}
  \card{\supp(y)} \leq \ceil[\bigg]{
    \frac{2(\log(d) + \log(n))}{\kappa \eps \Xalpha \Chernoff^2}
  }
  \text{ almost surely}.
\end{equation*}
In~particular, \((W,Z)\) is a \(\beta\)-pairing.
\end{proposition}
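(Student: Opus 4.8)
The plan is to feed the given witnesses into the rounding machinery of~\cref{sec:rounding}: first turn $(\bar{\mu},Y)$ into a strictly interior elliptope point, then sparsely sample a Grothendieck cover from it, and finally bundle everything with $(\bar{\rho},x)$ into a certificate. For the first part, note that feasibility of $(\bar{\mu},Y)$ in~\cref{eq:FEVC-eps-gaugef} gives $\diag(Y)=\bar{\mu}\ones$, $Y\succeq_{\lift{\Cov^*}}Z$, and $Y\succeq_{\Dist}\bar{\mu}\eps I$; since $I\in\Dist$ by~\cref{eq:I-in-Dist} and $\Dist$ is a convex cone, $Y=(Y-\bar{\mu}\eps I)+\bar{\mu}\eps I\in\Dist$. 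Also $\bar{\mu}>0$, for $\bar{\mu}=0$ would force $Y=0$ (zero-diagonal matrix in $\Dist\subseteq\Psd{n}$), hence $-Z\in\lift{\Cov^*}$, contradicting $\iprod{Z}{\slater{X}}>0$ from~\cref{eq:nu-polar-pd} as $Z\neq 0$. Thus $\bar{Y}\coloneqq Y/\bar{\mu}\in\Ecal(\Dist)$, and from $\bar{Y}-\eps I=(Y-\bar{\mu}\eps I)/\bar{\mu}\in\Dist\subseteq\lift{\Cov^*}$ together with~\cref{eq:2,eq:central-distribution} we get $\Acal^*(\bar{Y})\ge\eps\Acal^*(I)\ge\eps\kappa\ones$.

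Next, apply \cref{prop:polyhedral-conic-sampling} to $\bar{Y}$, with the constant $\eps$ of that statement replaced by $\eps\kappa$: in randomized polynomial time this produces, w.h.p., a Grothendieck cover $p\in\Lp{\Fcal(\Dist)}$ for $\bar{Y}$ (so $\sum_{U}p_U^{}\stensor{U}\succeq_{\lift{\Cov^*}}\bar{Y}$) using at most $T$ samples of $\Round[\bar{Y}]$, with $\card{\supp(p)}\le T$ and $\iprodt{\ones}{p}\le((1-\Chernoff)\Xalpha)^{-1}$. Tracing the constant through the proof of \cref{prop:polyhedral-cone-concentration} — where the random matrix $\Round[\bar{Y}]$ satisfies $\Acal^*(\Ebb[\Round[\bar{Y}]])\ge\Xalpha\Acal^*(\bar{Y})\ge\Xalpha\eps\kappa\ones$ by~\cref{eq:rounding-constant} — yields $T\le\lceil 2(\log(d)+\log(n))/(\kappa\eps\Xalpha\Chernoff^2)\rceil$, and since $\card{\supp(p)}\le T$ is deterministic this gives the claimed almost-sure support bound once we set the output cover to be $y\coloneqq\bar{\mu}\,p$.

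To assemble the certificate, put $\rho\coloneqq(1-\eps)^{-1}\bar{\rho}$, $\mu\coloneqq\rho^{-1}\iprod{W}{Z}$, $y\coloneqq\bar{\mu}\,p$, and $U\coloneqq\argmax\setst{\qform{W}{s_{U'}}}{U'\in\supp(p)}\in\Fcal(\Dist)$, keeping $x$ from the input. Then~\cref{item:cert-1} holds by the choice of $\mu$, and~\cref{item:cert-4} holds since $(\bar{\rho},x)$ feasible in~\cref{eq:MAQQ-eps-def-dual} gives $\Diag(x)\succeq_{\Dist^*}W$ and $\rho=\bar{\rho}/(1-\eps)\ge\iprodt{\ones}{x}$. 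For~\cref{item:cert-3}, $\sum_{U'}y_{U'}^{}\stensor{U'}=\bar{\mu}\sum_{U'}p_{U'}^{}\stensor{U'}\succeq_{\lift{\Cov^*}}\bar{\mu}\bar{Y}=Y\succeq_{\lift{\Cov^*}}Z$, while $\iprodt{\ones}{y}=\bar{\mu}\iprodt{\ones}{p}\le\bar{\mu}/((1-\Chernoff)\Xalpha)$; combining this with $\mu=(1-\eps)\iprod{W}{Z}/\bar{\rho}\ge(1-\eps)(1-\sigma)\bar{\mu}$ — which uses the defining inequality $\iprod{W}{Z}\ge(1-\sigma)\bar{\rho}\bar{\mu}$ of~\cref{eq:N-approx-def} (see~\cref{eq:approx-Cauchy-Schwarz}) — gives $\iprodt{\ones}{y}\le\mu/(\Xalpha(1-\Chernoff)(1-\sigma)(1-\eps))=\mu/\beta$. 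Finally~\cref{item:cert-2} follows by averaging the cover against $W\in\Cov$: $\qform{W}{s_U}\ge\iprodt{\ones}{y}^{-1}\iprod{\textstyle\sum_{U'}y_{U'}^{}\stensor{U'}}{W}\ge\iprodt{\ones}{y}^{-1}\iprod{Z}{W}=\rho\mu/\iprodt{\ones}{y}\ge\beta\rho$. This last computation is exactly what \cref{prop:certificate-auxiliary} packages, and we invoke it with the slack parameter matched to the mass bound from the sampling step. Hence $(\rho,\mu,y,U,x)$ is a $\beta$-certificate for $(W,Z)$ w.h.p., and $(W,Z)$ is a $\beta$-pairing because the existence of a $\beta$-certificate implies a $\beta$-pairing (shown earlier in \cref{sec:rounding}'s preceding section).

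\emph{Expected main obstacle.} Nothing here is deep, but the bookkeeping must be done carefully: the product of the four slack factors $\Xalpha$, $1-\Chernoff$, $1-\sigma$, $1-\eps$ has to land exactly on $\beta$, and the sampled cover's mass $((1-\Chernoff)\Xalpha)^{-1}$ — which can exceed $1$ — must be reconciled with the hypotheses of \cref{prop:certificate-auxiliary} without introducing further loss (this is precisely why the cover is rescaled to $y=\bar{\mu}p$ rather than used directly). Both the support bound and the factor $\beta$ are meant to be sharp up to constants, so there is no slack to waste. The only genuinely non-mechanical point is the first step: verifying $Y\in\Dist$, hence $\bar{Y}\in\Ecal(\Dist)$, and $\bar{\mu}>0$, which is where~\cref{eq:I-in-Dist} and the positive definiteness of $\nu^{\polar}$ enter.
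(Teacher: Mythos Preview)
Your proposal is correct and follows essentially the same route as the paper: normalize $Y$ to $\bar{Y}\in\Ecal(\Dist)$ with $\Acal^*(\bar{Y})\ge\eps\kappa\ones$, invoke \cref{prop:polyhedral-conic-sampling} (tracing the constant from \cref{prop:polyhedral-cone-concentration} for the support bound), and then assemble the certificate via the same $\rho,\mu$ choices and the averaging argument packaged in \cref{prop:certificate-auxiliary}. If anything you are more explicit than the paper---you spell out why $Y\in\Dist$ and $\bar{\mu}>0$, and you verify \cref{item:cert-1}--\cref{item:cert-4} directly rather than only citing \cref{prop:certificate-auxiliary}---but the structure and the loss accounting are the same.
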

\begin{proof}

Note that \(\bar{\rho}, \bar{\mu} > 0\) as \(W \neq 0 \neq Z\).
Set~\(\rho \coloneqq (1-\eps)^{-1} \bar{\rho}\) and
\(\mu \coloneqq (1/\rho) \iprod{W}{Z}\).
Note that \cref{item:cert-1} holds trivially.
We~also have \cref{item:cert-4}, since
\(\Diag(x) \succeq_{\Dist^*} W\) and
\begin{equation*}
  \rho = \frac{\bar{\rho}}{1-\eps}
  \geq\frac{1}{1-\eps} \paren[\Big]{
    (1-\eps)\iprodt{\ones}{x} +
    \eps\trace(W)}
  \geq \iprodt{\ones}{x}
\end{equation*}
as \(I \in \Dist\) by~\cref{eq:I-in-Dist} and \(\Dist
\subseteq \lift{\Cov^*}\) by~\cref{eq:lift-cov-dua-contains-dist}.

We now prove \cref{item:cert-3}.
Set \(\bar{Y} \coloneqq \bar{\mu}^{-1} Y\).
Since \((\bar{\mu}, Y)\) is feasible in~\cref{eq:FEVC-eps-gaugef} and
\(\Dist \subseteq \lift{\Cov^*}\), we have that \(\bar{Y}
\succeq_{\lift{\Cov^*}} \eps I\).
Thus \(\Acal^*(\bar{Y}) \ge \eps \Acal^*(I) \ge \eps \kappa \ones\)
by~\cref{polyhedral-conic-ineq} and~\cref{eq:central-distribution}.
\Cref{prop:polyhedral-conic-sampling} ensures that one can compute
\(\bar{y} \in \Lp{\Fcal(\Dist)}\) such that, with probability at least
\(1 - 1/n\),
\begin{equation}
  \label{eq:probabilistic-cover}
  \frac{1}{(1 - \Chernoff)}\frac{1}{\Xalpha}
  \sum_{U \in \Fcal(\Dist)} \bar{y}_U^{} \stensor{U}
  \succeq_{\lift{\Cov^*}} \frac{1}{\bar{\mu}} Y.
\end{equation}
Setting \(p \coloneqq \paren{\Xalpha(1 - \Chernoff)}^{-1} \bar{y}\)
and \(\bss \coloneqq 0\), \cref{prop:certificate-auxiliary} finishes
the proof.
\end{proof}
\begin{proof}[Proof of \cref{thm:polyhedral-algorithm}]

Set \(\tau \coloneqq 1 - \beta/\Xalpha\), and \(\sigma \coloneqq
\Chernoff \coloneqq \eps \coloneqq \tau/3\).
If we are given \(w \in \Lp{d}\) as input, nearly
solve~\cref{eq:MAXQ-eps-def} with \(W \coloneqq \Acal(w)\) and set \(z
\coloneqq \Acal^*(Y)\) and \(Z \coloneqq Y\).
If we are given \(z \in \Lp{d}\), nearly
solve~\cref{eq:nu-projected-psd} and set \(Z \coloneqq Y\).
In both cases, we obtain \((\Acal(w), Z) \in \GWOpt_{\eps,
\sigma}(\Dist, \Cov)\) such that \(\Acal^*(Z) \ge z\), as well as the
appropriate witnesses of this membership.
By definition, it suffices to obtain a \(\beta\)-certificate for
\((\Acal(w), Z)\).
\Cref{prop:polyhedral-case-certificate}
implies one can compute, in polynomial time, a
\(\hat{\beta}\)-certificate \((\rho, \mu, U, y, x)\) for \((\Acal(w),
Z)\) with \(\hat{\beta} = \Xalpha(1 - \Chernoff)(1 - \sigma)(1 -
\eps)\).
Since \(0 < \tau < 1 \le 9\), we have that
\[
  \hat{\beta}
  = \Xalpha
  \paren[\bigg]{
    1 - \frac{\tau}{3}
  }^3
  = \Xalpha\paren[\bigg]{
    1 - \tau + \dfrac{1}{3}\tau^2 - \dfrac{1}{27}\tau^3
  }
  \ge \Xalpha(1 - \tau)
  = \beta.
\]
This implies that \((\rho, \mu, U, y, x)\) is a \(\beta\)-certificate.
By \cref{prop:polyhedral-case-certificate}, we have that
\[
  \card{\supp(y)}
  \le \ceil[\bigg]{
    \frac{2}{\kappa\eps\Xalpha\Chernoff^2} (\log(d) + \log(n))
  }
  = \ceil[\bigg]{
    \frac{54}{\kappa\Xalpha\tau^3}(\log(d) + \log(n))
  }.
  \qedhere
\]
\end{proof}

\begin{proposition}
\label{prop:sdp-case-certificate}

Let \(\eps, \sigma, \Chernoff, \bss \in (0, 1)\).
Let \(\Dist, \Cov \subseteq \Sym{n}\) be such
that~\cref{eq:conic-assumptions} holds, and assume that \(\Cov
\subseteq \Psd{n}\).
Let \(\Round[]\) be a randomized rounding algorithm for \(\Dist\).
Set \(\beta \coloneqq \Xalpha (1 - \Chernoff)(1 - \sigma) (1-\eps)/(1
+ \bss)\).
Let \((W, Z) \in \GWOpt_{\eps, \sigma}(\Dist, \Cov)\) be such that \(W \neq
0 \neq Z\).
Let \((\bar{\rho},x)\) and \((\bar{\mu},Y)\) witness the membership
\((W, Z) \in \GWOpt_{\eps, \sigma}(\Dist, \Cov)\).
There exists a polynomial time algorithm that takes \((\bar{\rho},
x)\) and \((\bar{\mu}, Y)\) as input and outputs a
\(\beta\)-certificate \((\rho, \mu, y, U, x)\) with high probability.
Almost surely, we have that \(\card{\supp(y)} \in O(n/\bss^2)\) and
that the algorithm takes at most
\begin{equation*}
  \ceil[\bigg]{
    \frac{8n^2\log(2n)}{(\eps \Xalpha \Chernoff)^2}
  },
\end{equation*}
samples from \(\Round[\bar{\mu}^{-1} Y]\).
In~particular, \((W,Z)\) is a \(\beta\)-pairing.
\end{proposition}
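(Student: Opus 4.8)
My plan is to transcribe the proof of \cref{prop:polyhedral-case-certificate} to the positive-semidefinite setting, swapping the polyhedral sampler \cref{prop:polyhedral-conic-sampling} for its Löwner-order analogue \cref{prop:conic-sampling} (which also runs the spectral sparsifier \cref{prop:sparsifying}, so the slack \(\bss\) and the support bound \(O(n/\bss^2)\) appear here), and then running the argument of \cref{prop:certificate-auxiliary}. Since \(W\neq 0\neq Z\) the witnesses satisfy \(\bar\rho,\bar\mu>0\); set \(\rho\coloneqq(1-\eps)^{-1}\bar\rho\) and \(\mu\coloneqq\rho^{-1}\iprod{W}{Z}\), so \cref{item:cert-1} is immediate, and the same \(x\) gives \cref{item:cert-4} because \((\bar\rho,x)\) is feasible in \cref{eq:MAQQ-eps-def-dual} and \(\rho=(1-\eps)^{-1}\bar\rho\ge\iprodt{\ones}{x}+\tfrac{\eps}{1-\eps}\iprod{I}{W}\ge\iprodt{\ones}{x}\), using \(I\in\Dist\subseteq\lift{\Cov^*}\) (from \cref{eq:I-in-Dist} and \cref{eq:lift-cov-dua-contains-dist}) together with \(W\in\Cov\) to get \(\iprod{I}{W}\ge 0\). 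This is verbatim the corresponding step of \cref{prop:polyhedral-case-certificate}.

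The core is \cref{item:cert-2} and \cref{item:cert-3}. Put \(\bar Y\coloneqq\bar\mu^{-1}Y\); feasibility of \((\bar\mu,Y)\) in \cref{eq:FEVC-eps-gaugef} gives \(\diag(\bar Y)=\ones\), \(\bar Y\succeq_{\Dist}\eps I\), hence \(\bar Y\in\Ecal(\Dist)\) (as \(\eps I\in\Dist\) by \cref{eq:I-in-Dist}), and — this is the unique place the hypothesis \(\Cov\subseteq\Psd n\), equivalently \(\Dist\subseteq\Psd n\) from \cref{eq:conic-assumptions}, enters — the Löwner inequality \(\bar Y\succeq\eps I\), which is exactly the interior requirement of \cref{prop:conic-sampling}. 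I would then apply \cref{prop:conic-sampling} to \(\bar Y\) with the given \(\eps,\Chernoff,\bss\): w.h.p.\ it returns \(p\in\Lp{\Fcal(\Dist)}\) with \(\sum_U p_U\stensor U\succeq_{\lift{\Cov^*}}\bar\mu^{-1}Y\), \(\iprodt{\ones}{p}\le(1+\bss)\big((1-\Chernoff)\Xalpha\big)^{-1}\), support \(O(n/\bss^2)\), drawing at most \(\lceil 8n^2\log(2n)/(\eps\Xalpha\Chernoff)^2\rceil\) samples from \(\Round[\bar\mu^{-1}Y]\) — the explicit sample count and the support bound being read off the proof of \cref{prop:conic-sampling} (via \cref{prop:Bernstein-whp-form} and \cref{prop:sparsifying}). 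Now run the estimate of \cref{prop:certificate-auxiliary} with this \(p\): from \(\sum_U p_U\stensor U\succeq_{\lift{\Cov^*}}\bar\mu^{-1}Y\) and \(Y\succeq_{\lift{\Cov^*}}Z\) one gets \(Z\preceq_{\lift{\Cov^*}}\bar\mu\sum_U p_U\stensor U\), so \(y\coloneqq\bar\mu\,p\) supplies the covering part of \cref{item:cert-3}; taking \(U\coloneqq\argmax\{\qform{W}{s_{U'}}:U'\in\supp(p)\}\in\Fcal(\Dist)\), the same inequality as in \cref{prop:certificate-auxiliary} — but keeping \(\iprodt{\ones}{p}\) explicit rather than bounded by \(1+\bss\) — gives \(\qform{W}{s_U}\ge\tfrac{(1-\sigma)(1-\eps)}{\iprodt{\ones}{p}}\rho\ge\beta\rho\) with \(\beta=\Xalpha(1-\Chernoff)(1-\sigma)(1-\eps)/(1+\bss)\), while \(\mu\ge(1-\sigma)(1-\eps)\bar\mu\) yields \(\iprodt{\ones}{y}=\bar\mu\iprodt{\ones}{p}\le\tfrac1\beta\mu\). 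Thus \((\rho,\mu,U,y,x)\) is a \(\beta\)-certificate for \((W,Z)\), whence \((W,Z)\) is a \(\beta\)-pairing by the implication recorded just after \cref{def:beta-certificate}; the support and sample bounds are those inherited from \(p\).

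The substantive work is purely bookkeeping: one must shepherd the approximation factor through four multiplicative slacks — \(1-\eps\) (the perturbation of \cref{eq:MAXQ-eps-def}), \(1-\sigma\) (near-optimality of the conic solve, from \cref{eq:N-approx-def}), \(1-\Chernoff\) (matrix-Chernoff slack), \(1+\bss\) (sparsification slack) — together with the rounding constant \(\Xalpha\), verifying that they collapse with no looseness to the stated \(\beta\) and that the resulting tuple satisfies all four clauses of \cref{def:beta-certificate}. The one genuine subtlety is the normalization of \(p\): \cref{prop:conic-sampling} delivers a cover of \(\ell_1\)-mass slightly above one (the factor \(((1-\Chernoff)\Xalpha)^{-1}\)), so in \cref{prop:certificate-auxiliary}'s estimate one carries \(\iprodt{\ones}{p}\) explicitly and rescales \(p\) by \(\bar\mu\) to recover \(y\); this is precisely what supplies the extra \((1-\Chernoff)\Xalpha\) distinguishing \(\beta\) here from the \((1-\eps)(1-\sigma)\)-type bound of the polyhedral proof. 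Beyond \cref{prop:conic-sampling}, where the matrix Chernoff bound and the PSD sparsifier do all the real work, I expect no new analytic ingredient is needed — the argument is \cref{prop:polyhedral-case-certificate} with ``\(\Acal^*(\bar Y)\ge\eps\kappa\ones\)'' replaced by ``\(\bar Y\succeq\eps I\)''.
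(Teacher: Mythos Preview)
Your proof is correct and matches the paper's approach: define $\rho,\mu$ from the witnesses, verify \cref{item:cert-1} and \cref{item:cert-4} directly, apply \cref{prop:conic-sampling} to $\bar Y=\bar\mu^{-1}Y$, and finish via (the argument of) \cref{prop:certificate-auxiliary}. One small correction: the Löwner inequality $\bar Y\succeq\eps I$ already follows from $\bar Y-\eps I\in\Dist\subseteq\Psd{n}$, which is part of the standing assumptions~\cref{eq:conic-assumptions} and is \emph{not} equivalent to $\Cov\subseteq\Psd{n}$; the extra hypothesis $\Cov\subseteq\Psd{n}$ is actually consumed inside \cref{prop:conic-sampling}, where it gives $\Psd{n}\subseteq\lift{\Cov^*}$ so that the Löwner-order concentration can be upgraded to the $\lift{\Cov^*}$-order cover.
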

\begin{proof}

Let \((W, Z) \in \GWOpt_{\eps, \sigma}(\Dist, \Cov)\).
Let \((\bar{\rho}, x)\) and \((\bar{\mu}, Y)\) witness the membership
\((W, Z) \in \GWOpt_{\eps, \sigma}(\Dist, \Cov)\).
Note that \(\bar{\rho}, \bar{\mu} > 0\) as \(W \neq 0 \neq Z\).
Set \(\rho \coloneqq (1 - \eps)^{-1}\bar{\rho}\) and \(\mu \coloneqq
\rho^{-1}\iprod{Z}{W}\).
Then~\cref{item:cert-1} holds trivially.
We also have \cref{item:cert-4}, since \(\Diag(x) \succeq_{\Dist^*}
W\) and
\[
  \rho
  = \frac{\bar{\rho}}{1 - \eps}
  \ge \frac{1}{1 - \eps}\paren{
    (1 - \eps)\iprodt{\ones}{x} + \eps\trace(W)
  }
  \ge \iprodt{\ones}{x},
\]
as \(I \in \Dist\) by~\cref{eq:I-in-Dist} and \(\Dist \subseteq
\Cov^*\) by~\cref{eq:conic-assumptions}.

We now prove \cref{item:cert-3}.
Set \(\bar{Y} \coloneqq \bar{\mu}^{-1}Y\).
Since \((\bar{\mu}, Y)\) is feasible in~\cref{eq:FEVC-eps-gaugef} and
\(\Dist \subseteq \lift{\Cov^*}\), we have that \(\bar{Y}
\succeq_{\lift{\Cov^*}} \eps I\).
\Cref{prop:conic-sampling} ensures one can compute, in polynomial
time, \(\bar{y} \in \Lp{\Fcal(\Dist)}\) such that
\(\iprodt{\ones}{\bar{y}} \le 1 + \bss\) and \(\card{\supp(\bar{y})}
\in O(n/\bss^2)\), and with probability at least \(1 - 1/(2n)\),
\[
  \sum_{U \in \Fcal(\Dist)} y_U^{}\stensor{U}
  \succeq_{\lift{\Cov^*}} (1 - \Chernoff)\Xalpha\bar{Y}.
\]
Assume this event holds, and set \(p \coloneqq (\Xalpha(1 -
\Chernoff))^{-1}y\).
\Cref{prop:certificate-auxiliary} finishes the proof.
\end{proof}

\begin{proposition}
\label{prop:nesterov-rounding-nlogn}

Let \(\xi,\,\gamma\) be such that \(\xi \in (0, 1]\) and \(\gamma \in
(0, 1)\).
Let \(\Round[]\) be a randomized rounding algorithm for \(\Dist\).
Let \(Z \in \Psd{n}\).
Let \((\mu, Y)\) be feasible in~\cref{eq:nu-polar-A-K-gaugef} with
\(\mu>0\).
Let \(T \ge \ceil[\Big]{\frac{2\pi}{\Chernoff^2\xi}\log(2n)}\), and let
\((X_t)_{t \in [T]}\) be {i.i.d.} random variables sampled from \(\Round[Y]\).
If
\begin{align}
  \label{eq:30}
  Y \succ 0
  \quad\text{and}\quad
  \qform{Y^{-1}}{s_U}
  \le \frac{1}{\xi\mu}
  \text{ for every }
  U \subseteq [n]
  \text{ with }
  \prob(\Round[Y] = U) > 0,
\end{align}
then
\[
\prob\paren[\bigg]{\,
  \frac{\mu}{(1 - \Chernoff)T} \sum_{t \in [T]} X_t
  \succeq \Xalpha
  Z
  } \ge 1 - \frac{1}{2n}.
\]
\end{proposition}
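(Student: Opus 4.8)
The plan is to reduce the statement to~\cref{prop:psd-concentration} by whitening the sampled rank-one matrices and then undoing the whitening. Since $(\mu,Y)$ is feasible in~\cref{eq:nu-polar-A-K-gaugef}, the matrix $\bar{Y} \coloneqq \mu^{-1}Y$ lies in $\Ecal(\Dist)$, and $\bar{Y} \succ 0$ by~\cref{eq:30}; the samples $X_1,\dotsc,X_T$ are then i.i.d.\ copies of $\Round[\bar{Y}]$, each of the form $\stensor{U}$ with $U \in \Fcal(\Dist)$. In the relevant case $\lift{\Cov^*} = \Psd{n}$ (e.g., $\Cov = \Psd{n}$), the order $\succeq_{\lift{\Cov^*}}$ coincides with the Löwner order $\succeq$, so~\cref{eq:rounding-constant} reads $\Ebb[\Round[\bar{Y}]] \succeq \Xalpha\bar{Y}$, and feasibility of $(\mu,Y)$ in~\cref{eq:nu-polar-A-K-gaugef} gives $Y \succeq Z$.

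I would then apply~\cref{prop:psd-concentration} with $X \coloneqq \Round[\bar{Y}]$, the matrix $\bar{Y}$ above, $\rho \coloneqq 1/\xi$, and $\tau \coloneqq \Xalpha$. Its first hypothesis, $0 \preceq \stensor{U} \preceq \tfrac{1}{\xi}\bar{Y}$ almost surely, follows by conjugating by $\bar{Y}^{-1/2}$: this rewrites $\stensor{U} \preceq \tfrac{1}{\xi}\bar{Y}$ as $\qform{\bar{Y}^{-1}}{s_U} \le 1/\xi$, which, since $\bar{Y}^{-1} = \mu Y^{-1}$, is exactly the bound $\qform{Y^{-1}}{s_U} \le \tfrac{1}{\xi\mu}$ of~\cref{eq:30}, valid for every $U$ in the support of $\Round[\bar{Y}]$. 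Its second hypothesis, $\Xalpha\bar{Y} \preceq \Ebb[\Round[\bar{Y}]]$, is~\cref{eq:rounding-constant}. Finally, the sample size required by~\cref{prop:psd-concentration}, namely $\ceil{\tfrac{4\rho}{\Chernoff^2\tau}\log(2n)} = \ceil{\tfrac{4}{\Chernoff^2\xi\Xalpha}\log(2n)}$, is at most the stated $T \ge \ceil{\tfrac{2\pi}{\Chernoff^2\xi}\log(2n)}$ because $\Xalpha = 2/\pi$ for Nesterov's hyperplane rounding, to which this proposition is applied, so that $4/\Xalpha = 2\pi$ (any $\Xalpha \ge 2/\pi$ works equally well); this is the origin of the constant $2\pi$.

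\Cref{prop:psd-concentration} then gives, with probability at least $1 - 1/(2n)$, that $\tfrac{1}{T}\sum_{t\in[T]}X_t \succeq (1-\Chernoff)\Xalpha\bar{Y} = (1-\Chernoff)\tfrac{\Xalpha}{\mu}Y$; scaling by the positive constant $\mu/(1-\Chernoff)$ yields $\tfrac{\mu}{(1-\Chernoff)T}\sum_{t\in[T]}X_t \succeq \Xalpha Y \succeq \Xalpha Z$ on that event, which is the assertion. The probabilistic content is entirely absorbed into~\cref{prop:psd-concentration} (a matrix-Chernoff smallest-eigenvalue bound); the only real subtleties are the order bookkeeping — confirming that $\succeq_{\lift{\Cov^*}}$ may be read as the Löwner order here, so that~\cref{eq:rounding-constant} and feasibility in~\cref{eq:nu-polar-A-K-gaugef} become genuine Löwner inequalities — and verifying that $\Xalpha \ge 2/\pi$ for the rounding in use, which is precisely what makes the stated number of samples sufficient.
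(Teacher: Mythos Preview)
Your proposal is correct and follows essentially the same route as the paper: normalize to $\bar{Y} \coloneqq \mu^{-1}Y \in \Ecal(\Dist)$, verify $0 \preceq \Round[\bar{Y}] \preceq \tfrac{1}{\xi}\bar{Y}$ almost surely via the rank-one conjugation computation $\lambdamax(\bar{Y}^{-1/2}\stensor{U}\bar{Y}^{-1/2}) = \mu\,\qform{Y^{-1}}{s_U} \le 1/\xi$, invoke \cref{prop:psd-concentration} with $\rho = 1/\xi$ and $\tau = 2/\pi$, and finish using $Y \succeq Z$. Your explicit remark that the argument needs $\succeq_{\lift{\Cov^*}}$ to coincide with the L\"owner order (so that \cref{eq:rounding-constant} and feasibility in \cref{eq:nu-polar-A-K-gaugef} become genuine PSD inequalities, and $\Xalpha = 2/\pi$) is exactly the implicit assumption the paper makes when it sets $\tau \coloneqq \Nalpha = 2/\pi$; this proposition is only invoked in the $\Cov = \Psd{n}$ branch of \cref{theorem:sdp-contained-algorithm}.
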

\begin{proof}
Set \(\bar{Y} \coloneqq \mu^{-1}Y\).
Since \(\mu>0\) and \(\diag(Y)=\mu \ones\), we have that \(\bar{Y}
\succ 0\) and \(\diag(\bar{Y}) = 1\).
Note that
\begin{equation}
  \label{eq:symmetric-sampling-lowerbound-2}
  \lambdamax(\bar{Y}^{-1/2}\stensor{U}\bar{Y}^{-1/2})
  = \mu\qform{Y^{-1}}{s_U}
    \le \mu\frac{1}{\mu\xi}
  \le \frac{1}{\xi}.
\end{equation}
Hence \(0 \preceq \Round[Y] \preceq \frac{1}{\xi} \bar{Y}\)
almost surely.
From \cref{eq:rounding-constant}, we may apply
\Cref{prop:psd-concentration} with \(\rho \coloneqq 1/\xi\) and \(\tau
\coloneqq \Nalpha = 2/\pi\) to conclude that with
\[
  T
  \ge \ceil[\bigg]{\frac{4\rho}{\Chernoff^2\tau}\log(2n)}
  = \ceil[\bigg]{\frac{4}{\Chernoff^2\xi\Nalpha}\log(2n)}
  = \ceil[\bigg]{\frac{2\pi}{\Chernoff^2\xi}\log(2n)}
\]
we have that \(\tfrac{1}{T}\sum_{t \in [T]} X_t \succeq (1 -
\Chernoff) \Xalpha \bar{Y}\) with probability at least \(1 - 1/(2n)\).
The result follows from \((\mu, Y)\) being feasible
in~\cref{eq:nu-polar-A-K-gaugef}.
\end{proof}

\begin{proof}[Proof of \cref{theorem:sdp-contained-algorithm}]

Set \(\tau \coloneqq 1 - \beta/\Xalpha\), and set \(\Chernoff
\coloneqq \sigma \coloneqq \eps \coloneqq \tau/4\).
Set \(\bss \coloneqq \tau/(4 - \tau)\), so \((1 + \bss)^{-1} = 1 -
\tau/4\).
By nearly solving either~\cref{eq:MAXQ-eps-def}
or~\cref{eq:FEVC-eps-def}, depending on whether \(W \in \Cov\) or \(Z
\in \Cov^*\) was given as input, one can compute \((\bar{\rho}, x)\)
and \((\bar{\mu}, Y)\) witnessing the membership \((W, Z) \in \GWOpt_{\eps,
  \sigma}(\Dist, \Cov)\).
\Cref{prop:sdp-case-certificate} ensures one can compute, in randomized
polynomial time, a \(\hat{\beta}\)-certificate \((\rho, \mu, U, y, x)\) for
\((W, Z)\) with high probability, where \(\hat{\beta} = (1 - \tau/4)^4\).
Since the function \(x \mapsto (1 - x/4)^4\) is convex, it
overestimates \(1 - x\), which is its best linear approximation at \(x
\coloneq 0\).
Hence
\[
  \hat{\beta}
  = \Xalpha(1 - \tau/4)^4
  \ge \Xalpha(1 - \tau)
  = \beta.
\]
Thus \((\rho, \mu, U, y, x)\) is a \(\beta\)-certificate.
Note that \(\bss\) is independent of \(n\), so from
\cref{prop:sdp-case-certificate} we can conclude that
\(\card{\supp(y)} \in O(n)\), the hidden constant depending only on
\(\beta\).
Similarly, since \(\eps\) and \(\Chernoff\) do not depend on \(n\)
either, we have that the algorithm takes at most
\[
  \ceil[\bigg]{
    \frac{8}{(\eps\Xalpha\Chernoff)^2} n^2\log(2n)
  }
  = \ceil[\bigg]{
    \frac{2048}{\Xalpha^2\tau^4} n^2\log(2n)
  }
  \in O(n^2\log(n))
\]
samples from \(\Round[]\), the hidden constant depending only on \(\beta\).

Now assume \(\Cov = \Psd{n}\).
It is immediate that~\cref{eq:conic-assumptions} holds.
Set \(\bar{Y} \coloneqq \bar{\mu}^{-1} Y\).
We have that \(\bar{Y} \succeq \eps I\), so \(\bar{Y}^{-1} \preceq
\frac{1}{\eps} I\).
Hence \(\iprodt{s_U}{\bar{Y}^{-1}s_U} \le n/\eps\) for every \(U \subseteq
[n]\), so~\cref{eq:30} holds for \(\xi \coloneqq \eps/n\).
\Cref{prop:nesterov-rounding-nlogn,prop:certificate-auxiliary} imply
that we can compute, with high probability, a
\(\hat{\beta}\)-certificate \((\rho, \mu, U, \bar{y}, x)\) for \((W,
Z)\) with
\[
  \ceil[\bigg]{
    \frac{2\pi}{\xi\Chernoff^2}\log(2n)
  }
  = \ceil[\bigg]{
    \frac{2\pi}{\eps\Chernoff^2}n \log(2n)
  }
  = \ceil[\bigg]{
    \frac{128 \pi}{\tau^3}n \log(2n)
  }
\]
samples from \(\GWrv\).
Since \(\tau\) is independent of \(n\), \Cref{prop:sparsifying}
implies that we can sparsify \(\bar{y} \in \Lp{\Fcal(\Dist)}\), which
potentially has support \(O(n \log n)\), into \(y \in
\Lp{\Fcal(\Dist)}\) with \(\card{\supp(y)} \in O(n)\).
\end{proof}

\section{Boolean 2-CSP}
\label{ssec:boolean-csp-appendix}

Let \(i,j \in [n]\).
Define the following matrices in \(\Sym{\set{0}\cup [n]}\):
\begin{align*}
    \Delta_{-i, -j}
    &\coloneqq \tfrac{1}{2}\paren{
      \oprod{(e_0 - e_i)}{(e_0 - e_j)}
      +
      \oprod{(e_0 - e_j)}{(e_0 - e_i)}
      },\\
    \Delta_{-i, +j}
    &\coloneqq \tfrac{1}{2}\paren{
      \oprod{(e_0 - e_i)}{(e_0 + e_j)}
      +
      \oprod{(e_0 + e_j)}{(e_0 - e_i)}
      },\\
    \Delta_{+i, -j}
    &\coloneqq \tfrac{1}{2}\paren{
      \oprod{(e_0 + e_i)}{(e_0 - e_j)}
      +
      \oprod{(e_0 - e_j)}{(e_0 + e_i)}
      },\\
    \Delta_{+i, +j}
    &\coloneqq \tfrac{1}{2}\paren{
      \oprod{(e_0 + e_i)}{(e_0 + e_j)}
      +
      \oprod{(e_0 + e_j)}{(e_0 + e_i)}
      }.
\end{align*}
Direct computation shows that, for every \(B \in \Reals^{\set{0,\dotsc,n}
  \times \set{0,\dotsc,n}}\),
\begin{equation*}
  \begin{aligned}
    \iprod{\Delta_{-i, -j}}{B^\transp B}
    &= \iprod{B(e_0 - e_i)}{B(e_0 - e_i)},
    \quad
    \iprod{\Delta_{-i, +j}}{B^\transp B}
    &= \iprod{B(e_0 - e_i)}{B(e_0 + e_i)}\\
    \iprod{\Delta_{+i, -j}}{B^\transp B}
    &=\iprod{B(e_0 + e_i)}{B(e_0 - e_i)},
    \quad
    \iprod{\Delta_{+i, +j}}{B^\transp B}
    &= \iprod{B(e_0 + e_i)}{B(e_0 + e_i)}.
  \end{aligned}
\end{equation*}
It is then routine to check that
\begin{equation}
  \label{eq:triangle-ineq-preserves-sign-vectors}
  \conv\setst{\stensor{U}}{U \subseteq \set{0} \cup [n]}
  \cap \Dist_{\Delta}
  =  \conv\setst{\stensor{U}}{U \subseteq \set{0} \cup [n]}
  = \CUT^{\Dist_{\Delta}}.
\end{equation}

\printbibliography

\end{document}